\newcommand{\email}[1]{\mbox{\href{mailto:#1}{#1}}}
\newcommand{\optqed}{}
\newtheorem{definition}{Definition}[section]
\newtheorem{lemma}[definition]{Lemma}
\newtheorem{proposition}[definition]{Proposition}
\newtheorem{theorem}[definition]{Theorem}
\newtheorem{example}[definition]{Example}
\newtheorem{remark}[definition]{Remark}
\numberwithin{equation}{section}
\title{Relative entropy of coherent states on general CCR algebras}
\author{%
Henning Bostelmann\thanks{%
University of York, Department of Mathematics, York YO10 5DD, United Kingdom; \newline e-mail: \email{henning.bostelmann@york.ac.uk}}
\and 
Daniela Cadamuro\thanks{%
Institut f\"ur Theoretische Physik, Universit\"at Leipzig, Br\"uderstra\ss{}e 16, 04103 Leipzig, Germany;  \newline e-mail: \email{cadamuro@itp.uni-leipzig.de} }
\and 
Simone Del Vecchio\thanks{%
Institut f\"ur Theoretische Physik, Universit\"at Leipzig, Br\"uderstra\ss{}e 16, 04103 Leipzig, Germany; \newline e-mail: \email{simone.del\_vecchio@physik.uni-leipzig.de} }
}
\date{December 6, 2021}
\begin{document}

\maketitle

% \documentclass[a4paper,10pt]{article}
% \usepackage[utf8]{inputenc}
% \usepackage{amsmath,amsthm,amssymb,amsfonts}
% 
% \usepackage[top=2.5cm,bottom=2.5cm,left=2cm,right=2.5cm]{geometry}
% \usepackage{hyperref}
% \usepackage{enumerate}
% 
% %\usepackage{todonotes}
% 
% \input{macros}
% 
% \newtheorem{definition}{Definition}[section]
% 
% \newtheorem{lemma}[definition]{Lemma}
% \newtheorem{proposition}[definition]{Proposition}
% \newtheorem{theorem}[definition]{Theorem}
% \newtheorem{example}[definition]{Example}
% %\theoremstyle{remark}
% \newtheorem{remark}[definition]{Remark}
% 
% \numberwithin{equation}{section}
% 
% \title{Relative entropy of coherent states on general CCR algebras}
% \author{%
% Henning Bostelmann\thanks{%
% University of York, Department of Mathematics, York YO10 5DD, United Kingdom; \newline e-mail: \email{henning.bostelmann@york.ac.uk}}
% \and 
% Daniela Cadamuro\thanks{%
% Institut f\"ur Theoretische Physik, Universit\"at Leipzig, Br\"uderstra\ss{}e 16, 04103 Leipzig, Germany;  \newline e-mail: \email{cadamuro@itp.uni-leipzig.de} }
% \and 
% Simone Del Vecchio\thanks{%
% Institut f\"ur Theoretische Physik, Universit\"at Leipzig, Br\"uderstra\ss{}e 16, 04103 Leipzig, Germany; \newline e-mail: \email{simone.del\_vecchio@physik.uni-leipzig.de} }
% }
% 
% \date{\today}
% 
% \begin{document}
% 
% \maketitle

\begin{abstract}
For a subalgebra of a generic CCR algebra, we consider the relative entropy between a general (not necessarily pure) quasifree state and a coherent excitation thereof. We give a unified formula for this entropy in terms of single-particle modular data. Further, we investigate changes of the relative entropy along subalgebras arising from an increasing family of symplectic subspaces; here convexity of the entropy (as usually considered for the Quantum Null Energy Condition) is replaced with lower estimates for the second derivative, composed of ``bulk terms'' and ``boundary terms''. Our main assumption is that the subspaces are in \emph{differential modular position}, a regularity condition that generalizes the usual notion of half-sided modular inclusions. We illustrate our results in relevant examples, including thermal states for the conformal $U(1)$-current.
\end{abstract}

\section{Introduction}

Entropy and related correlation measures are of fundamental importance in quantum physics; not only in information theory, but also in thermodynamics and quantum field theory.

Mathematically, the most appropriate generalization of the classical notion of (relative) entropy to quantum systems, or noncommutative probability spaces, is formulated in terms of normal states on von Neumann algebras \cite{Araki:relent1} (see also \cite{OP04,Berta:smoothentropy}). However, while the formalism is quite easy to handle for type~I factors, where normal states are described by positive trace-class operators and the entropy can be computed by means of traces, applications to the type~$\mathrm{III}_1$ factors occurring generically in quantum field theory \cite{BDF:universal_structure} require working with (relative) Tomita-Takesaki modular objects, which are difficult to decribe explicitly in examples.

Recent work in quantum field theory \cite{LongoXu:relentCFT,Longo:entropyDist} has focussed on entropy measures for algebras associated with certain subregions of spacetime, and the dependence of the entropy of a given state depending on the spacetime region. Specifically, one considers the relative entropy between a ground state and a coherent excitation in the setting of linear fields \cite{Casini:relent,CLR:waveinfo} or related situations in chiral conformal quantum field theories \cite{hollands2020relative,Panebianco:relent,Panebianco:loop}; in some geometric situations, specific information about the (relative) modular operator is available here and allows for explicit results. 

Let us illustrate the situation in an example, following \cite{CLR:waveinfo}. Consider a massive free field in 3+1-dimensional Minkowski space, given in terms of the well-known symplectic space $(\kcal,\sigma)$ and real subspaces $\lcal(\ocal)\subset\kcal$ associated with space-time regions $\ocal$, and the corresponding Weyl (CCR) algebras $\A(\ocal)$. Further let $\omega$ be the vacuum state on these algebras, and consider a \emph{coherent state} $\omega_g = \omega(W(g)^\ast \cdot W(g))$, where $g \in  \kcal$  and $W(g)$ is the corresponding Weyl operator. Consider the standard left wedge $\wcal = \{x: x^1 < 0, |x^0|<|x^1| \} \subset \rbb^4$ , and for $t\in \rbb$ the shifted region\footnote{%
Here and in the following, our conventions are arranged so that larger values of the parameter $t$ correspond to larger regions (and correspondingly, larger symplectic spaces, algebras, etc.); the literature often chooses the opposite sign.
} 
$\wcal_t = \wcal + (t,t,0,0)$. Then the relative entropy between $\omega_g$ and $\omega$ with respect to the algebra $\A(\wcal_t)$ can be computed as \cite{CLR:waveinfo}
\begin{equation}
\relent{\omega_g}{\omega}{\A(\wcal_t)} = 2\pi \int_{x^1  < t} d\xv  \;(t-x^1) \; T_g^{00}(t, \xv),
\end{equation}
where $T_g^{\mu\nu}$ is the single-particle stress-energy tensor of the wave function $g$. Consequently, with $v = (1,1,0,0)$,
\begin{align}
\label{eq:wedged1}
  \frac{d}{d t} \relent{\omega_g}{\omega}{\A(\wcal_t)} &= 2\pi \int_{x^1<t} d\xv  \; v_\mu T^{0\mu}_g(t, \xv)  \geq 0,
\\
\label{eq:wedged2}
\frac{d^2}{dt^2} \relent{\omega_g}{\omega}{\A(\wcal_t)} &= 2\pi \int_{x^1=t} d\xv  \; v_\mu v_\nu T^{\mu\nu}_g (t, \xv)  \geq 0.
\end{align}
The second derivative is nonnegative, and hence the relative entropy is a convex function of $t$; this can be regarded \cite{CeyhanFaulkner:recovering} as a variant of the Quantum Null Energy Condition (QNEC). More generally, the QNEC is understood as a relation between certain expectation value of the energy density and the second derivative of the relative entropy \cite{Bousso:proofQNEC}, which is also suggested by Eq.~\eqref{eq:wedged2}. In this paper, we will only investigate derivatives of the entropy along a family of regions or subspaces, but will not comment on the relation with the energy density.

Apart from convexity, one may observe that the first derivative \eqref{eq:wedged1} is given by a ``bulk term'' (an integral over a Cauchy surface for the wedge region) 
while the second derivative \eqref{eq:wedged2} is given by a ``boundary term'' (an integral over the edge of the wedge at $x^1=x^0=t$).

This motivates the question which of these observations are a coincidence of the specific system chosen, and which of them generalize to a wider context.

In this paper, we ask such questions in a generic setting. We remain within the context of CCR algebras, i.e., the algebras in question are still generated by the ``second quantization functor'' from a symplectic space $\kcal$ and certain real subspaces $\lcal\subset\kcal$; and our states will be of the quasifree type. However, we abstract from the specifics of the above example.

As a first point, we investigate the connection between the symplectic (single-particle) structure and the relative entropy on the CCR algebras. Essentially, the methods of \cite{CLR:waveinfo} apply whenever the symplectic subspace $\lcal\subset\kcal$ above is standard and factorial, and the state $\omega$ is quasifree and pure. (These notions will be recalled in Sec.~\ref{sec:nonpure}.) However, in applications in physics, also non-pure quasifree states are of importance, for example thermal states \cite{RST:KMS,BraRob:qsm2} or Hadamard states in quantum field theory on curved spacetimes \cite{KW91,Radzikowski:microlocal}. Moreover, while factorial subspaces are usual in quantum field theory, they are certainly not the most general case (cf.~\cite{Verch:adjoint}).  

We aim to prove a unified formula for the relative entropy between a quasifree state $\omega$ and an associated ``coherent excitation'' $\omega_g$ in the general case. Our approach is as follows. We start with a generic symplectic space and consider the CCR algebra over it, equipped with a quasifree state. The state is not assumed to be pure; rather, using the well-known purification construction \cite{Woronowicz:purification,KW91}, we extend it to a pure state on a larger algebra.  
Now given a closed subspace $\lcal$, we decompose the extended space (and the corresponding CCR algebra) into factorial, abelian and nonseparating parts, and compute the relative entropy for these. We give a unified formula for the relative entropy between coherent states with respect to $\A(\lcal)$, where $\lcal$ is a generic subspace, in terms of the modular data associated with $\lcal$.

Second, we consider a family of subspaces $\{\lcal_t\}$, depending on a real parameter $t$, in particular when $\lcal_t$ increases with $t$; we ask how the relative entropy $S_t(g)=\relent{\omega_g}{\omega}{\A(\lcal_t)}$ for given $g \in \kcal$ changes with $t$. 

To this end, the following technical insight is important. With each subspace $\lcal_s$ one obtains, as in \cite{CLR:waveinfo}, a projector $Q_s$ which projects onto the ``$\lcal_s$-entropy relevant part'' of $\kcal$ (in the factorial case, onto $\lcal_s$ itself) and annihilates the symplectic complement $\lcal_s'$. However, this projector is unbounded in the usual topology of the symplectic space $\kcal$; even more, its domain will usually depend on the parameter $s$, which makes it particularly challenging to analyze a change in the parameter. However, let us equip the space with an (indefinite) scalar product arising from the semipositive quadratic form $S_t(g)$, where $t \neq s$ in general. With respect to this Hilbert space structure, it turns out in relevant cases that the projector $Q_s$ is \emph{orthogonal}, in particular bounded. We say in this case that the spaces $\lcal_s$, $\lcal_t$ are in \emph{differential modular position}, a condition that underlies our analysis, and resembles the concept of geometric modular action (see \cite{Bor:ttt}). 

This structure then allows for the desired analysis of bulk vs.~boundary terms: For fixed $g \in \kcal$, let us consider the function $T_g(s,t) = S_t(Q_s g)$, which equals the entropy on the diagonal $t=s$. A change in $s$ near the diagonal then corresponds to an abstract ``boundary change'' while a change in $t$ corresponds to a ``bulk change''. Analyzing the monotonicity properties of $T_g$, we establish estimates between the partial derivatives of $T_g$ (at $s=t-0$) and the desired derivatives of the entropy. 

We note that convexity of $S_t$ cannot be expected in such a general setting; it is already not preserved under a smooth reparametrization of the family of spaces, which our definition admits. However, we establish lower estimates on the second derivative (in the sense of distributions) that replace convexity.
Also, the observation from above that the first derivative contains only bulk terms, while the second derivative contains only boundary terms, does not hold up in general, and is replaced by a more nuanced picture. 

We verify the regularity condition of ``differential modular position'' in a number of examples, mainly but not exclusively from quantum field theory. In particular, it turns out that in half-sided modular inclusions of symplectic subspaces (cf.~\cite{Bor:ttt,LonNotes}), our condition is always fulfilled. Also, we treat relative entropies for halfline algebras in the conformal $U(1)$-current in thermal states, which to our knowledge have not appeared in the literature.

The paper is organized as follows: Sec.~\ref{sec:nonpure} defines our setting, recalls the purification and decomposition construction for symplectic spaces, and establishes the unified formula for relative entropies in terms of single-particle objects. In Sec.~\ref{sec:subspaces}, we investigate the relative positions of several subspaces, in particular one-parameter families of inclusions; we formulate our main condition (differential modular position) and derive estimates for the second derivative of the relative entropy. Then we show that all half-sided modular inclusions fit into our framework (Sec.~\ref{sec:halfsided}). In Sec.~\ref{sec:examples} we give examples from quantum mechanics, quantum field theory and classical probability theory in which our framework is applicable, illustrating various cases that can occur with respect to the derivative estimates we established. We end with a conclusion and outlook in Sec.~\ref{sec:conclusion}. 
The appendix recalls definition and fundamental properties of the relative entropy on $C^\ast$- and von Neumann algebras.

\section{Entropies in nonpure states}\label{sec:nonpure}

We first introduce our setting of symplectic spaces, purification and the decomposition of subspaces in Sec.~\ref{sec:ops}. 
Then (Sec.~\ref{sec:ccrentro}) we pass to the associated CCR algebras and their decomposition, and express the relative entropy between coherent states in terms of the single-particle modular data. Sec.~\ref{sec:approx} establishes some approximation properties needed in later sections.

\subsection{Single-particle structure}\label{sec:ops}
The basic object we work with is as follows:
\begin{definition}\label{def:space}
   Let $\kcal$ be a vector space over $\rbb$ and $\tau,\sigma$ two bilinear forms on $\kcal$. The triple $(\kcal,\tau,\sigma)$ is called a \emph{symplectic Hilbert space} if $(\kcal,\tau)$ is a separable Hilbert space, $(\kcal,\sigma)$ is a symplectic space, and if
   \begin{equation}\label{eq:sigmabd}
     \forall f,g \in \kcal: \quad \sigma(f,g)^2 \leq \tau(f,f) \tau(g,g).
   \end{equation}
\end{definition}

Here $\sigma$ is allowed to be degenerate as a symplectic form; we can (and will) assume without loss of generality that the dimension of its kernel is either even or infinite. (Otherwise consider the direct sum of $\kcal$ with a one-dimensional space, on which $\sigma$ is set to vanish.) Note that $\kcal$ is assumed a priori to be complete with respect to $\tau$-convergence; in applications one often starts with a pre-Hilbert space in the first step, and then takes its completion, but note that e.g.\ a non-degenerate form $\sigma$ on the non-completed space might be degenerate on the completion (cf.~\cite{Verch:adjoint}).

If $\kcal$ is a Hilbert space over $\cbb$ with complex scalar product $\hscalar{\cdotarg}{\cdotarg}$, then a standard example for Definition~\ref{def:space} is $\tau(f,g)=\re \hscalar{f}{g}$ and $\sigma(f,g)=\im \hscalar{f}{g}$. In fact, this is exactly the case when the quasifree state induced by $\tau$ on the CCR algebra over $(\kcal,\sigma)$ (see Sec.~\ref{sec:ccrentro} below) is a pure state \cite{MV:quasifree}; hence we will call $(\kcal,\tau,\sigma)$ \emph{pure} in this case. In general, it is always possible to embed $(\kcal, \tau, \sigma)$ into a pure symplectic Hilbert space $(\kcal\pf, \tau\pf, \sigma\pf)$, i.e., such that $\tau\pf$, $\sigma\pf$ are extensions of $\tau$, $\sigma$. This construction is known as \emph{purification}, and we will present it here in the form of \cite[Ch.~4]{Petz:ccr}; see also \cite[Appendix~A]{KW91}.

Due to \eqref{eq:sigmabd}, we can write $\sigma = \tau(\cdotarg, D \cdotarg)$ with an operator $D$, where $\|D\|\leq 1$. Using the polar decomposition of $D$ on the orthogonal complement of $\ker D$, and a suitable choice\footnote{%
At this point, our assumption enters that the dimension of $\ker D$ is either even or infinite.}
on $\ker D$, we obtain two bounded operators $C$, $|D|$ such that
\begin{equation}
   C|D| = D = |D| C, \quad |D| \geq 0, \quad C^2 = -1, \quad D^\dagger = -D,  \quad C^\dagger = -C.
\end{equation}
(We denote the adjoint with respect to $\tau$ by $\dagger$, whereas we will denote adjoints on complex Hilbert spaces by $\ast$ later on.)

We now define the space $\kcal\pf:=\kcal \oplus \kcal$, which is a vector space over $\cbb$ with respect to the complex structure given by the operator
\begin{equation}\label{complexstr}
   \ipf = \begin{pmatrix}
             -D  & C\sqrt{1+D^2}
             \\
             C \sqrt{1+D^{2}} & D
         \end{pmatrix}.
\end{equation}
In fact, defining the bilinear forms ($f,g \in \kcal\pf$)
\begin{align}
   \tau\pf &:= \tau \oplus \tau, 
   \\
   \sigma\pf (f, g) &:=  \tau\pf(f, -\ipf g),
   \\
   \hscalar{f}{g}\pf &:= \tau\pf (f, g) + i  \sigma\pf (f, g) ,
\end{align}
$\kcal$ becomes a complex Hilbert space with the scalar product $\hscalar{\cdotarg}{\cdotarg}\pf$, and a nondegenerate symplectic space with symplectic form $\sigma\pf$.
Identifying $\kcal$ with $\kcal\oplus 0$, the restrictions of $\tau\pf$ and $\sigma\pf$ to $\kcal \times \kcal$ are $\tau$ and $\sigma$ respectively, as the notation suggests.

Now let $\lcal\subset \kcal$ be a closed subspace. (Note that closure in $\kcal$-norm is crucial for the following.) 
We decompose $\lcal$ in a standard way (cf.~\cite{Halmos:subspaces}) as follows: We set
\begin{align}
   \label{eq:l0def}
   \lcal_0\pf & := (\lcal + \ipf \lcal)^\perp,  \\
   \lcal_\infty\pf & := \lcal \cap \ipf \lcal \equiv \lcal_\infty, \\
   \lcal_{\abel} & := \lcal \cap \lcal', \\
   \lcal_{\abel}\pf & := \lcal_{\abel}  + \ipf \lcal_{\abel} , \\
   \lcal_{\fact}\pf & := ( \lcal_0\pf \oplus \lcal_{\abel}\pf \oplus \lcal_\infty\pf)^\perp , \\
   \label{eq:lfdef}
   \lcal_{\fact} & := \lcal_{\fact}\pf \cap \lcal,
\end{align}
where $\lcal'$ denotes the symplectic complement of $\lcal$.
The spaces $\lcal_\fact$, $\lcal_\abel$, and $\lcal_\infty$ are called the factorial, abelian and nonseparating parts of $\lcal$, respectively, for reasons that will become clear below. We then have:
\begin{lemma}\label{lemma:kdecomp}
   $\kcal\pf$ is isomorphic to the orthogonal direct sum
   \begin{equation}
      \kcal\pf \isom \lcal_0\pf \oplus \lcal_{\abel}\pf \oplus \lcal_{\fact}\pf \oplus \lcal_\infty\pf  
   \end{equation}
   and under this isomorphism
   \begin{equation}
      \lcal \isom  0 \oplus \lcal_{\abel} \oplus \lcal_{\fact} \oplus \lcal_\infty.  
   \end{equation}
\end{lemma}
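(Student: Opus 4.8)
The plan is to recognize \eqref{eq:l0def}--\eqref{eq:lfdef} as the orthogonal decomposition of the complex Hilbert space $\kcal\pf$ relative to the pair of closed real subspaces $\lcal$ and $\ipf\lcal$, in the spirit of the two-subspaces calculus of \cite{Halmos:subspaces}. Since only a few elementary facts are needed I would give the argument directly, in the real Hilbert space $(\kcal\pf,\tau\pf)$; note that $\tau\pf$ restricts to $\re\hscalar{\cdotarg}{\cdotarg}\pf$, so $\tau\pf$-orthogonality agrees with $\hscalar{\cdotarg}{\cdotarg}\pf$-orthogonality on $\ipf$-invariant (i.e.\ complex) subspaces. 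Three preliminary observations are used repeatedly: (i) $\lcal$, identified with $\lcal\oplus 0$, is $\tau\pf$-closed, so the orthogonal projection $E$ onto $\lcal$ is well defined; (ii) $\ipf$ is $\tau\pf$-orthogonal with $\ipf^{-1}=-\ipf$, hence $\ipf\lcal$ is closed, $\ipf(M^\perp)=(\ipf M)^\perp$ for any closed $M$, and orthocomplements of $\ipf$-invariant subspaces are $\ipf$-invariant; (iii) since $\sigma\pf(f,g)=\tau\pf(f,-\ipf g)$, the symplectic complement is $\lcal'=(\ipf\lcal)^\perp$, so $\lcal_{\abel}=\lcal\cap(\ipf\lcal)^\perp$ and, by (ii), $\ipf\lcal_{\abel}\subseteq\ipf\lcal'=\ipf(\ipf\lcal)^\perp=(\ipf^2\lcal)^\perp=\lcal^\perp$.

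For the first isomorphism I would first check that $\lcal_0\pf$, $\lcal_{\abel}\pf$ and $\lcal_\infty\pf$ are mutually orthogonal closed complex subspaces. Each is $\ipf$-invariant: for $\lcal_\infty\pf=\lcal\cap\ipf\lcal$ and $\lcal_{\abel}\pf=\lcal_{\abel}+\ipf\lcal_{\abel}$ this follows from $\ipf^2=-1$ together with $-\lcal=\lcal$, $-\lcal_{\abel}=\lcal_{\abel}$, and $\lcal_0\pf=(\lcal+\ipf\lcal)^\perp$ is the orthocomplement of the $\ipf$-invariant subspace $\lcal+\ipf\lcal$. Next, $\lcal_{\abel}\pf$ is closed because $\lcal_{\abel}\subseteq\lcal$ and $\ipf\lcal_{\abel}\subseteq\lcal^\perp$ (by (iii)) are mutually orthogonal, so $\lcal_{\abel}+\ipf\lcal_{\abel}$ is an orthogonal sum of closed subspaces. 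Mutual orthogonality of the three is then immediate: $\lcal_0\pf\subseteq\lcal^\perp$ but $\lcal_\infty\pf\subseteq\lcal$; $\lcal_0\pf\subseteq(\lcal+\ipf\lcal)^\perp$ but $\lcal_{\abel}\pf\subseteq\lcal+\ipf\lcal$; and $\lcal_{\abel}\subseteq(\ipf\lcal)^\perp$, $\ipf\lcal_{\abel}\subseteq\lcal^\perp$ are both orthogonal to $\lcal_\infty\pf\subseteq\lcal\cap\ipf\lcal$. Hence $\lcal_{\fact}\pf:=(\lcal_0\pf\oplus\lcal_{\abel}\pf\oplus\lcal_\infty\pf)^\perp$ is a well-defined closed complex subspace and $\kcal\pf$ is the orthogonal direct sum of the four, which is the first displayed isomorphism.

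For the second isomorphism it suffices to show that each of the four summands is invariant under $E$; then, as $\kcal\pf$ is their orthogonal direct sum, $\lcal=E\kcal\pf$ is the orthogonal direct sum of their images under $E$. Now $\lcal_0\pf\subseteq\lcal^\perp=\ker E$ gives $E\lcal_0\pf=0$; $\lcal_\infty\pf\subseteq\lcal$ gives $E\lcal_\infty\pf=\lcal_\infty$; for $\lcal_{\abel}\pf=\lcal_{\abel}\oplus\ipf\lcal_{\abel}$ one has $Ea=a$ for $a\in\lcal_{\abel}$ and $E\ipf b=0$ for $b\in\lcal_{\abel}$ (since $\ipf b\in\lcal^\perp=\ker E$ by (iii)), so $E\lcal_{\abel}\pf=\lcal_{\abel}$; and since $E$ is self-adjoint and leaves $\lcal_0\pf\oplus\lcal_{\abel}\pf\oplus\lcal_\infty\pf$ invariant, it leaves $\lcal_{\fact}\pf$ invariant, with $E\lcal_{\fact}\pf=\lcal_{\fact}\pf\cap\lcal=\lcal_{\fact}$ by~\eqref{eq:lfdef}. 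Collecting the four pieces in order gives $\lcal\isom 0\oplus\lcal_{\abel}\oplus\lcal_{\fact}\oplus\lcal_\infty$ under the isomorphism of the first part.

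I do not anticipate a genuine obstacle here; the only points requiring care are that $E$ is merely real-linear, so the four summands must be shown to be complex ($\ipf$-invariant) subspaces by hand, and that one must consistently identify the symplectic complement $\lcal'$ with the real-orthogonal complement of $\ipf\lcal$ rather than of $\lcal$. As an alternative to the above elementary computation, the whole statement is an instance of the two-projections theorem applied to $E$ and $\ipf E\ipf^{-1}$ and could simply be invoked.
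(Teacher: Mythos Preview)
Your proof is correct and follows essentially the same approach as the paper. The paper's proof is extremely terse---it only singles out the two nontrivial orthogonalities ($\lcal_{\abel}$ complex-orthogonal to $\lcal_\infty$, and $\lcal_{\abel}$ real-orthogonal to $\ipf\lcal_{\abel}$ so that $\lcal_{\abel}\pf$ is closed) and declares the rest to follow from the definitions; you have spelled out exactly these checks, plus the remaining routine ones, and your device of using the real-orthogonal projector $E$ onto $\lcal$ to decompose $\lcal$ along the four summands is a clean way to make explicit what the paper leaves implicit.
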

\begin{proof}
  One shows by direct computation that $\lcal_{\abel}$ is complex-orthogonal to $\lcal_\infty$; also, $\lcal_\abel$ is real-orthogonal to $\ipf\lcal_\abel$, hence $\lcal_\abel\pf$ is closed. The other parts follow directly from the definitions \eqref{eq:l0def}--\eqref{eq:lfdef}.
\optqed{}\end{proof}

All three components of $\lcal$ may be present in general: in quantum field theory, one usually considers purely factorial subspaces, i.e., $\lcal=\lcal_{\fact}$ (see Examples~\ref{ex:u1vac} and \ref{ex:u1kms}); but in other situations, $\lcal$ may be purely abelian ($\lcal=\lcal_{\abel}$, Example~\ref{ex:abelian}), or one may have $\lcal=\lcal_\infty$ (part of Example~\ref{ex:fdstraight}), and of course direct sums of these can be formed. We note some special cases:

\begin{remark}\label{remark:pure}
   If $(\kcal,\tau,\sigma)$ is a \emph{pure} symplectic Hilbert space, then $D=-i$, hence $\ipf$ acts by the diagonal matrix $\left(\begin{smallmatrix}i & 0 \\ 0 & -i\end{smallmatrix}\right)$. In the decomposition of Lemma~\ref{lemma:kdecomp}, this leads to $0 \oplus \kcal \subset \lcal_0\pf$, and all other spaces $\lcal_\fact$,  $\lcal_\abel$, $\lcal_\infty$ etc.\ being contained in $\kcal \oplus 0$. In this sense, if $(\kcal,\tau,\sigma)$ is already pure, we can ignore the purification construction.
\end{remark}
\begin{remark}\label{remark:infinity}
   If specifically $\kcal=\lcal$ in Remark~\ref{remark:pure}, then $\lcal_\infty=\kcal\oplus 0$, $\lcal_\fact=\lcal_\abel = \{ 0\}$, $\lcal_0\pf=0 \oplus \kcal$. 
\end{remark}
\begin{remark}\label{remark:abelian}
   For a symplectic Hilbert space $(\kcal,\tau,0)$ (i.e., for $\sigma=0$), even- or infinite-dimensional, we obtain $i\pf = \left(\begin{smallmatrix} 0 & C \\ C & 0  \end{smallmatrix}\right)$. The map $(f,g) \mapsto f - i Cg$ then identifies $\kcal\pf$ with the usual complexification of $\kcal$. For $\lcal\subset\kcal$, we have $\lcal=\lcal_\abel$, $\lcal_\fact=\lcal_\infty=\{0\}$, $\lcal_0\pf = \lcal^\perp + i \lcal^\perp$ where $\perp$ denotes the orthogonal complement in~$\kcal$.
\end{remark}

In the following, we shall denote the complex-linear orthogonal projectors onto $\lcal_\fact\pf$ etc.\ as $P_\fact\pf$ etc. We also denote by $P_\abel$ the real-orthogonal projector onto $\lcal_\abel$, and by $P_\fact$ the real-linear projector with image $\lcal_\fact$ and kernel $\lcal_\fact '$. Note that $P_\fact$ is not bounded (or orthogonal) in general, but closed on its domain $\lcal_\fact + \lcal_\fact'$ \cite{CLR:waveinfo}.

We also consider the subspaces $\lcal_\std := \lcal_\abel \oplus \lcal_\fact$,  $\lcal_\std\pf:=\lcal_{\abel}\pf \oplus \lcal_{\fact}\pf$; here $\lcal_\std \subset \lcal_\std\pf$ is \emph{standard} in the sense that $\lcal_\std \cap \ipf \lcal_\std= \{0\}$ and $\lcal_\std + \ipf \lcal_\std$ is dense in $\lcal_\std\pf$. Hence \cite{RieffelvanDaele:boundedttt} we obtain Tomita-Takesaki objects $J_\lcal$, $\Delta_\lcal$ with respect to this subspace. We set $K_\lcal := -\log \Delta_\lcal$, then extend this operator $K_\lcal$ by 0 to $\lcal_0\pf$ and consider it as undefined on $\lcal_\infty\pf \backslash \{0\}$. We denote the modular group by $U_\lcal(x)=\exp( - \ipf x K_\lcal)$, again defined on $\lcal_0\pf \oplus \lcal_\std\pf$.
It is important in the following that the projector $P_\fact$ can be written as a function of the modular objects:
\begin{lemma} (\cite[Theorem~2.2]{CLR:waveinfo}) \label{lemma:pfmodular}
 Let $a(\lambda) = (1-\lambda)^{-1}$, $b(\lambda) = \lambda^{1/2}a(\lambda)$. Then
 \begin{equation}
P_\fact = \big( a(\Delta_\fact ) + J_\lcal b(\Delta_\fact)   \big)^- \quad \text{where } \Delta_\fact = \Delta_\lcal \restriction \lcal_\fact\pf.  
 \end{equation}
\end{lemma}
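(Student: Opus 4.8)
The plan is to establish the two operator inclusions $\big(a(\Delta_\fact)+J_\lcal b(\Delta_\fact)\big)^-\subseteq P_\fact$ and $P_\fact\subseteq\big(a(\Delta_\fact)+J_\lcal b(\Delta_\fact)\big)^-$, where, as recalled above, $P_\fact$ is the closed (generally unbounded, non-orthogonal) projection on $\lcal_\fact\pf$ with range $\lcal_\fact$, kernel the symplectic complement $\lcal_\fact'$, and domain $\lcal_\fact+\lcal_\fact'$. Before that I would assemble a few ingredients. Since $(\lcal_\fact,\lcal_\fact\pf)$ is a \emph{factorial} standard pair, $1$ is not an eigenvalue of $\Delta_\fact$ — the eigenspace of $\Delta_\lcal$ at $1$ inside $\lcal_\std\pf$ is precisely $\lcal_\abel\pf$, which has been split off in Lemma~\ref{lemma:kdecomp}, so the spectral projection of $\Delta_\fact$ at $\{1\}$ vanishes — and in particular $\lcal_\fact\cap\lcal_\fact'=\{0\}$. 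Writing $S_\fact:=J_\lcal\Delta_\fact^{1/2}$ for the Tomita operator of $\lcal_\fact$ (so $\lcal_\fact=\ker(1-S_\fact)$), the symplectic complement has Tomita operator $S_\fact':=J_\lcal\Delta_\fact^{-1/2}$ (so $\lcal_\fact'=\ker(1-S_\fact')$), cf.~\cite{RieffelvanDaele:boundedttt,LonNotes}. Finally, $a$ and $b$ are bounded on the complement of any neighbourhood of $\lambda=1$ in $(0,\infty)$ and satisfy $b(\lambda)=\lambda^{1/2}a(\lambda)$, $(1-\lambda)a(\lambda)=1$, and $b(1/\lambda)=-b(\lambda)$; these, with $J_\lcal f(\Delta_\fact)=f(\Delta_\fact^{-1})J_\lcal$ for real Borel $f$, drive all the computations.

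For the first inclusion, let $R_0:=a(\Delta_\fact)+J_\lcal b(\Delta_\fact)$ on $\mathcal D_0$, the intersection of the domains of $a(\Delta_\fact)$ and $b(\Delta_\fact)$; this is dense because the spectral mass of $\Delta_\fact$ at $\{1\}$ is zero. Given $v\in\mathcal D_0$, set $\xi:=R_0v$ and $\eta:=v-\xi$. Applying $S_\fact$ to $\xi=a(\Delta_\fact)v+J_\lcal b(\Delta_\fact)v$ and using $\Delta_\fact^{1/2}a(\Delta_\fact)=b(\Delta_\fact)$ and $\Delta_\fact^{-1/2}b(\Delta_\fact)=a(\Delta_\fact)$ one gets $S_\fact\xi=\xi$, hence $\xi\in\lcal_\fact$; likewise, rewriting $\eta=(1-a(\Delta_\fact))v-J_\lcal b(\Delta_\fact)v$ and applying $S_\fact'$, using $1-a(\lambda)=-\lambda a(\lambda)$ and $\lambda^{1/2}b(\lambda)=a(\lambda)-1$, one gets $S_\fact'\eta=\eta$, hence $\eta\in\lcal_\fact'$ (the same manipulations show $\xi\in D(\Delta_\fact^{1/2})$ and $\eta\in D(\Delta_\fact^{-1/2})$). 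Since $\lcal_\fact\cap\lcal_\fact'=\{0\}$, the splitting $v=\xi+\eta$ is the one defining $P_\fact$, so $v\in D(P_\fact)$ and $P_\fact v=\xi=R_0v$. Thus $R_0\subseteq P_\fact$; as $P_\fact$ is closed, $R_0$ is closable and $(R_0)^-\subseteq P_\fact$.

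For the reverse inclusion it suffices to check that $\mathcal D_0$ is a core for $P_\fact$. Let $v=\xi+\eta\in D(P_\fact)$ with $\xi\in\lcal_\fact$, $\eta\in\lcal_\fact'$, and pick Borel functions $f_n$ on $(0,\infty)$ with $0\le f_n\le1$, $f_n(1/\lambda)=f_n(\lambda)$, $f_n\equiv0$ on $[1-1/n,1+1/n]$, and $f_n\to1$ pointwise off $\{1\}$. Then $v_n:=f_n(\Delta_\fact)v\in\mathcal D_0$ (since $af_n$ and $bf_n$ are bounded), and $v_n\to v$ (since $\Delta_\fact$ has no spectral mass at $\{1\}$). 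Using $J_\lcal\xi=\Delta_\fact^{1/2}\xi$ and $J_\lcal\eta=\Delta_\fact^{-1/2}\eta$ (the eigenequations for $S_\fact$ and $S_\fact'$), the intertwining relation, the symmetry of $f_n$, and $b(1/\lambda)=-b(\lambda)$, a short computation gives $R_0f_n(\Delta_\fact)\xi=(1-\Delta_\fact)a(\Delta_\fact)f_n(\Delta_\fact)\xi=f_n(\Delta_\fact)\xi$ and $R_0f_n(\Delta_\fact)\eta=0$, so $R_0v_n=f_n(\Delta_\fact)\xi\to\xi=P_\fact v$. Hence $v\in D((R_0)^-)$ with $(R_0)^-v=P_\fact v$, giving $P_\fact\subseteq(R_0)^-$; combined with the previous paragraph, $P_\fact=(R_0)^-=\big(a(\Delta_\fact)+J_\lcal b(\Delta_\fact)\big)^-$.

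The step I expect to be most delicate is the domain bookkeeping at the modular spectral value $\lambda=1$: since $P_\fact$, $\Delta_\fact^{\pm1/2}$ and $a(\Delta_\fact)$ are all genuinely unbounded and the spectrum of $\Delta_\fact$ may accumulate at $1$, one has to check carefully both that the splitting $v=\xi+\eta$ really lies in $D(\Delta_\fact^{\pm1/2})$ and that the cutoffs $f_n(\Delta_\fact)$ converge strongly — and this is exactly where factoriality, in the form of vanishing spectral mass of $\Delta_\fact$ at $\{1\}$ (inherited from the splitting in Lemma~\ref{lemma:kdecomp}), is indispensable; for the abelian part, where $\Delta=1$, no formula of this type holds. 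As a consistency check, upon diagonalizing $\Delta_\fact$ all the identities reduce to the elementary computation for the standard subspace $\{(z,\lambda^{1/2}\bar z):z\in\cbb\}\subset\cbb^2$ with $\Delta_\fact=\mathrm{diag}(\lambda,\lambda^{-1})$ and $J_\lcal(z_1,z_2)=(\bar z_2,\bar z_1)$.
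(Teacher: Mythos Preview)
Your proof is correct. The paper does not supply its own argument for this lemma; it merely cites \cite[Theorem~2.2]{CLR:waveinfo}, so there is nothing in the present paper to compare against. Your two-inclusion strategy (first showing $R_0\subseteq P_\fact$ via the identities $S_\fact\xi=\xi$, $S_\fact'\eta=\eta$, then exhibiting a core by spectral cutoffs away from $\lambda=1$) is the natural modular-theoretic proof and matches the approach of the cited reference; the key structural input---that $\Delta_\fact$ has no spectral mass at $\{1\}$ because the abelian part has been split off---is exactly what the decomposition in Lemma~\ref{lemma:kdecomp} provides, and the paper uses this same fact elsewhere (e.g.\ in the proof of Theorem~\ref{theorem:entroGen}, where $\lcal_\abel\pf=\ker\log\Delta_\lcal$).
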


For use in future sections, we also consider the closed, real-linear projector 
\begin{equation}\label{def:cutproj}
 \begin{aligned}
  Q_\lcal &= 0 \oplus (1-P_{\abel}) \oplus P_{\fact}  \oplus \idop 
  \\
\text{with domain  }\;\dom Q_\lcal &= \lcal_0\pf \oplus \lcal_{\abel}\pf \oplus (\lcal_\fact + \lcal_\fact') \oplus    \lcal_\infty\pf .
\end{aligned}
\end{equation}
Note that $\img Q_\lcal = \lcal$ in the purely factorial case ($\lcal=\lcal_\fact$), but in general $\operatorname{img} Q_\lcal \neq \lcal$; rather, as will become clear in the next subsection, $Q_\lcal$ projects onto the ``entropy-relevant part'' of the space. (See Lemma~\ref{lem:entrform}(\ref{it:sq}) and Theorem~\ref{theorem:entroGen} in particular.) However, we always have $\operatorname{ker} Q_\lcal = \lcal'$. In other words, $\operatorname{img}Q_\lcal \neq (\operatorname{ker}Q_\lcal)'$ in general. We also note:

\begin{lemma}\label{lemma:spectralcut}
   For $0<\epsilon <1$, let $Q^{(\epsilon)}$ be the spectral projector of $\log \Delta_\lcal$ for the set $(-\epsilon^{-1}, -\epsilon) \cup (\epsilon, \epsilon^{-1}) \cup \{0\}$, extended by $\idop$ to $\lcal_0\pf$ and $\lcal_\infty\pf$.  
   Let $\dcut := \cup_{0<\epsilon<1} Q^{(\epsilon)} \kcal\pf$. Then $\dcut$ is a core for $Q_\lcal$, and $\dcut\cap \lcal_\std\pf$ a common core for $\Delta_\lcal$ and  $\log \Delta_\lcal$.
\end{lemma}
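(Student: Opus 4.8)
The plan is to exploit the fact that $Q_\lcal$ acts blockwise as $0 \oplus (1-P_\abel) \oplus P_\fact \oplus \idop$, so that it suffices to treat each summand separately; the spaces $\lcal_0\pf$, $\lcal_\abel\pf$, $\lcal_\infty\pf$ present no difficulty, since $Q^{(\epsilon)}$ acts as $\idop$ on $\lcal_0\pf$ and $\lcal_\infty\pf$ by definition, and $Q_\lcal$ is bounded (indeed the orthogonal projector $1-P_\abel$) on $\lcal_\abel\pf$, where moreover $\log\Delta_\lcal$ vanishes so that $Q^{(\epsilon)}$ is the full identity there as well. Hence the entire content of the lemma is about the factorial part: I must show that $\dcut\cap\lcal_\fact\pf = \cup_\epsilon Q^{(\epsilon)}\lcal_\fact\pf$ is a core for $P_\fact$ (restricted to $\lcal_\fact\pf$), and is a common core for $\Delta_\fact$ and $\log\Delta_\fact$.

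First I would dispatch the statement about $\Delta_\lcal$ and $\log\Delta_\lcal$. On $\lcal_\fact\pf$ these are functions of the single self-adjoint operator $\log\Delta_\fact$ via the Borel functional calculus, and $Q^{(\epsilon)}\restriction\lcal_\fact\pf$ is precisely the spectral projector for $(-\epsilon^{-1},-\epsilon)\cup(\epsilon,\epsilon^{-1})$ (note $0$ is not in the spectrum of $\log\Delta_\fact$, since the kernel of $\log\Delta_\lcal$ lies in $\lcal_0\pf$, and the eigenvalues $\pm\infty$, corresponding to $\lcal_\infty\pf$, are likewise excluded). It is a standard fact that for a self-adjoint operator $A$ with spectral projectors $E(\cdot)$, the union $\cup_{n} E([-n,n])\hcal$ — or here the union over the truncated symmetric intervals that exhaust $\rbb\setminus\{0\}$ up to the point $0$ which is not in $\sigma(\log\Delta_\fact)$ — is a core for $A$ and for any $f(A)$ with $f$ continuous on $\sigma(A)$; one checks this by the usual graph-norm density argument, approximating $\psi\in\dom f(A)$ by $E((-\epsilon^{-1},-\epsilon)\cup(\epsilon,\epsilon^{-1}))\psi$ and using dominated convergence in the spectral integral. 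Since $\Delta_\fact = \exp(-\log\Delta_\fact)$ is such a continuous function, $\dcut\cap\lcal_\std\pf$ (whose only nontrivial part is the factorial one) is a common core for $\Delta_\lcal$ and $\log\Delta_\lcal$.

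For the core property of $Q_\lcal$, the key input is Lemma~\ref{lemma:pfmodular}: on $\lcal_\fact\pf$, $P_\fact$ is the closure of $a(\Delta_\fact)+J_\lcal b(\Delta_\fact)$ with $a(\lambda)=(1-\lambda)^{-1}$ and $b(\lambda)=\lambda^{1/2}(1-\lambda)^{-1}$. On the range of $Q^{(\epsilon)}\restriction\lcal_\fact\pf$, the operator $\log\Delta_\fact$ is bounded with spectrum in $(-\epsilon^{-1},-\epsilon)\cup(\epsilon,\epsilon^{-1})$, so $\Delta_\fact$ is bounded with spectrum bounded away from $1$; hence $a(\Delta_\fact)$ and $b(\Delta_\fact)$ are bounded on this range, and it is invariant under $\Delta_\fact$ hence under $J_\lcal$ (as $J_\lcal\Delta_\lcal J_\lcal = \Delta_\lcal^{-1}$, so $J_\lcal$ commutes with the spectral projectors of $|\log\Delta_\lcal|$, which is what cuts out $Q^{(\epsilon)}$ after accounting for the symmetry of the chosen set). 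Therefore $\dcut\cap\lcal_\fact\pf$ is contained in $\dom P_\fact$. To see it is a core, take $\psi$ in the domain of the closed operator $a(\Delta_\fact)+J_\lcal b(\Delta_\fact)$; then $Q^{(\epsilon)}\psi\to\psi$ and, since $a$ and $b$ are continuous on $\sigma(\Delta_\fact)\setminus\{1\}$ and $Q^{(\epsilon)}$ commutes with $\Delta_\fact$ and $J_\lcal$, we get $(a(\Delta_\fact)+J_\lcal b(\Delta_\fact))Q^{(\epsilon)}\psi = Q^{(\epsilon)}(a(\Delta_\fact)+J_\lcal b(\Delta_\fact))\psi \to (a(\Delta_\fact)+J_\lcal b(\Delta_\fact))\psi$, establishing graph-norm convergence. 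Assembling the four blocks gives that $\dcut$ is a core for $Q_\lcal$.

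The main obstacle I anticipate is bookkeeping rather than depth: one must be careful that the spectral set $(-\epsilon^{-1},-\epsilon)\cup(\epsilon,\epsilon^{-1})\cup\{0\}$ interacts correctly with the block decomposition — in particular that adding $\{0\}$ and extending by $\idop$ on $\lcal_0\pf\oplus\lcal_\infty\pf$ is exactly what is needed so that $Q^{(\epsilon)}$ converges strongly to $\idop$ on all of $\kcal\pf$ while still truncating $a(\Delta_\fact)$, $b(\Delta_\fact)$ on the factorial block — and that the commutation of $J_\lcal$ with $Q^{(\epsilon)}$ is justified via $J_\lcal\Delta_\lcal J_\lcal=\Delta_\lcal^{-1}$ (so $J_\lcal$ preserves the symmetric spectral set in $\log\Delta_\lcal$). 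The only genuinely analytic point is the standard lemma that a symmetric exhausting family of spectral projectors yields a common core for a self-adjoint operator and any continuous function of it, which I would either cite or prove in two lines via dominated convergence.
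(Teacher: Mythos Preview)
Your proposal is correct and follows the same approach as the paper: reduce to the factorial block (the other blocks being trivial since $Q^{(\epsilon)}$ acts as the identity there and $Q_\lcal$ is bounded on $\lcal_\abel\pf$), then invoke functional calculus for the common-core statement and Lemma~\ref{lemma:pfmodular} together with the commutation $J_\lcal Q^{(\epsilon)} = Q^{(\epsilon)} J_\lcal$ for the $Q_\lcal$-core statement. One small slip: the kernel of $\log\Delta_\lcal$ contains $\lcal_\abel\pf$ as well as $\lcal_0\pf$ (since $\Delta_\lcal\restriction\lcal_\abel\pf=1$), not only $\lcal_0\pf$; but your intended conclusion---that $0$ lies outside the point spectrum of $\log\Delta_\fact$ on the factorial block, so the $\{0\}$ in the spectral set is immaterial there---is correct and the argument is unaffected.
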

\begin{proof}
We can suppose without loss of generality that we are in the factorial case, i.e., $\lcal=\lcal_{\fact}$, since on $\lcal_\abel^{\oplus}$ we have that $\Delta_\lcal\restriction {\lcal_\abel^{\oplus}}=1$, $\log \Delta_\lcal\restriction{\lcal_\abel^{\oplus}}=0$, and $Q_\lcal\restriction{\lcal_\abel^{\oplus}}$ is bounded, while on $\lcal_0^\oplus$ and $\lcal_\infty^\oplus$ the statement is clearly trivial.  That $\dcut\cap \lcal_\std\pf$ is a common core for $\Delta_{\lcal}$ and $\log \Delta_{\lcal}$ is immediate by functional calculus. That $\dcut$ is a core for $Q_\lcal$ in the factorial case follows by the expression of $Q_\lcal$ in terms of $\Delta_\lcal$ and $J_\lcal$ given in Lemma~\ref{lemma:pfmodular}.
\optqed{}\end{proof}

\subsection{CCR algebras and relative entropy}\label{sec:ccrentro}

We now pass to the CCR algebras on the symplectic space $(\kcal,\sigma)$; see, e.g., the monographs \cite{Petz:ccr,DerezinskiGerard:quantization}. We denote by $\A_\kcal:=\ccr(\kcal,\sigma)$ the $C^\ast$ algebra generated by elements $W(f)$, $f \in \kcal$, with the relations
\begin{equation}\label{eq:weyl}
   W(f)W(g) = e^{-i\sigma(f,g)}W(f+g), \quad W(f)^\ast = W(-f).
\end{equation}
Similarly, for a closed subspace $\lcal\subset\kcal$, we define $\A_{\lcal} :=\ccr(\lcal,\sigma)\subset \A_\kcal$, $\A_{\kcal}\pf := \ccr(\kcal\pf,\sigma\pf)\supset \A_\kcal$, and write the relevant subalgebras as $\A_\infty := \ccr(\lcal_\infty,\sigma\pf)$ etc. 

On $\A_\kcal$, the bilinear form $\tau$ induces the quasifree state\footnote{also known in the literature as a quasifree state \emph{with vanishing one-point function}} $\omega$ by
\begin{equation}
   \omega(W(f)) = e^{-\tau(f,f)/2};
\end{equation}
we use the same notation for its extension by $\tau\pf$ to $\A_\kcal\pf$ and the restrictions to subalgebras, suppressing the dependence on $\tau$ where no confusion can arise. Related to $\omega$, for each $g \in \kcal$ we consider the coherent state\footnote{An alternative nomenclature is \emph{quasifree state with nonvanishing one-point function}.}
\begin{equation}
   \omega_g = \omega(W(g)^\ast \cdotarg W(g));
\end{equation}
note that $\omega_0=\omega$.

We are interested in the relative entropy between the $\omega_g$ (for different $g$) as states on the $C^\ast$-algebra $\A_\lcal$; see Appendix~\ref{app:entro} for a brief review of this concept. As a first step, we remark that the relative entropy respects the decomposition of $\lcal$:

\begin{proposition}\label{prop:entrosum}
  Let $(\kcal,\tau,\sigma)$ be a symplectic Hilbert space. For any closed subspace $\lcal\subset\kcal$, we have
  \begin{equation}\label{eq:entrosum}
    \relent{\omega_g}{\omega}{ \A_\lcal} 
    = 
    \relent{\omega_{P_{\abel} \pf g}}{\omega}{ \A_{\abel}} 
+
    \relent{\omega_{P_{\fact} \pf g}}{\omega}{ \A_{\fact}} 
+
    \relent{\omega_{P_\infty\pf g}}{\omega}{ \A_\infty} .
\end{equation}
\end{proposition}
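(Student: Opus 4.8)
The plan is to reduce the relative entropy of the pair $(\omega_g,\omega)$ on $\A_\lcal$ to a tensor‑product situation dictated by the orthogonal decomposition of $\lcal$ in Lemma~\ref{lemma:kdecomp}, and then to invoke additivity of the relative entropy over tensor factors. First I would recall that, by Lemma~\ref{lemma:kdecomp}, $\lcal \isom 0 \oplus \lcal_\abel \oplus \lcal_\fact \oplus \lcal_\infty$ as an \emph{orthogonal} direct sum inside $\kcal\pf$ (orthogonal with respect to $\tau\pf$, and also with respect to $\sigma\pf$ since $\ipf$ commutes with the decomposition). Orthogonality for \emph{both} $\tau\pf$ and $\sigma\pf$ is exactly what is needed for the Weyl relations \eqref{eq:weyl} to factorize: the $C^\ast$‑algebra $\A_\lcal = \ccr(\lcal,\sigma)$ becomes the $C^\ast$‑tensor product $\A_\abel \otimes \A_\fact \otimes \A_\infty$, with $W(f_\abel + f_\fact + f_\infty)$ identified with $W(f_\abel)\otimes W(f_\fact)\otimes W(f_\infty)$, because the cross terms in $\sigma\pf$ vanish and hence the exponential factor in \eqref{eq:weyl} splits. (One should be slightly careful here: $\lcal$ need not be the full real span of its three pieces if $\sigma$ is degenerate, but the nonseparating part $\lcal_\infty$ is handled by $\sigma\pf$, and on $\lcal_\infty\pf$, $\lcal_\abel\pf$, $\lcal_\fact\pf$ the forms are as required; the relevant $C^\ast$‑algebra over $\lcal$ sits inside $\A_\abel\otimes\A_\fact\otimes\A_\infty$ and contains the dense $\ast$‑subalgebra generated by the product Weyl elements, which is enough for the computation.)

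Next I would check that both states in question respect this factorization. Since $\tau\pf = \tau\oplus\tau$ on $\kcal\pf$ and the three subspaces are mutually $\tau\pf$‑orthogonal, the quasifree state $\omega$ on $\A_\lcal$ is the product state $\omega_\abel \otimes \omega_\fact \otimes \omega_\infty$, where each factor is the quasifree state induced by the restriction of $\tau\pf$ to the corresponding subspace: indeed $\omega(W(f)) = e^{-\tau\pf(f,f)/2} = \prod_i e^{-\tau\pf(f_i,f_i)/2}$. For the coherent state, write $g = g_\abel + g_\fact + g_\infty + g_0$ with $g_0 \in \lcal_0\pf$; since $W(g_0)$ commutes (up to a scalar) with all of $\A_\lcal$ — more precisely $g_0 \in \lcal'$ so $\sigma\pf(g_0,\cdot)$ vanishes on $\lcal$ — the component $g_0$ drops out of $\omega_g\restriction\A_\lcal$, and using the Weyl relations one gets $\omega_g\restriction\A_\lcal = \omega_{P_\abel\pf g}\otimes \omega_{P_\fact\pf g}\otimes \omega_{P_\infty\pf g}$, each factor being the corresponding coherent perturbation of $\omega_\abel$, $\omega_\fact$, $\omega_\infty$. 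Here $P_\fact\pf g$ may be replaced by $P_\fact g$ for the purpose of defining the state on $\A_\fact$, but keeping $P_\fact\pf$ is harmless since the extra $\lcal_\fact\pf\ominus$‑component again lies in the symplectic complement relative to $\lcal_\fact$ and washes out; I would state it consistently with the conventions of the paper.

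Finally I would invoke additivity of Araki's relative entropy under tensor products of states on a tensor product of von Neumann (or $C^\ast$‑) algebras: $\relent{\varphi_1\otimes\varphi_2}{\psi_1\otimes\psi_2}{\M_1\otimes\M_2} = \relent{\varphi_1}{\psi_1}{\M_1} + \relent{\varphi_2}{\psi_2}{\M_2}$, applied twice (this is standard; see the properties recalled in Appendix~\ref{app:entro}). Applying it to the three‑fold factorization established above yields exactly \eqref{eq:entrosum}. The main obstacle, and the one deserving care in the write‑up, is the algebraic step: making the identification $\A_\lcal \cong \A_\abel \otimes \A_\fact \otimes \A_\infty$ precise at the level of $C^\ast$‑algebras — checking that the symplectic form genuinely has no cross terms between the three summands (which follows from $\tau\pf$‑orthogonality together with $\ipf$‑invariance of each summand, since $\sigma\pf = \tau\pf(\cdot,-\ipf\cdot)$), and being careful that the degeneracy of $\sigma$ and the discarded $\lcal_0\pf$‑component do not interfere. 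Once the factorization of the algebra and of the two states is in hand, the conclusion is immediate from additivity.
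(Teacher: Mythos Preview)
Your proposal is correct and follows essentially the same route as the paper: use Lemma~\ref{lemma:kdecomp} to factorize $\A_\lcal$ as a (spatial) tensor product, check that both $\omega$ and $\omega_g$ factorize accordingly, and apply additivity of the relative entropy (Lemma~\ref{lemma:additivity}). The only cosmetic difference is that the paper factorizes the larger algebra $\A_\kcal\pf \isom \A_0\pf\otimes\A_\abel\pf\otimes\A_\fact\pf\otimes\A_\infty\pf$ first and then identifies $\A_\lcal$ with $\cbb\idop\otimes\A_\abel\otimes\A_\fact\otimes\A_\infty$ inside it (so the $\lcal_0\pf$-part contributes a trivial tensor factor and hence zero entropy), whereas you dispose of the $\lcal_0\pf$-component of $g$ by observing that $g_0\in\lcal'$ and so it drops out of $\omega_g\restriction\A_\lcal$; these are equivalent bookkeeping choices.
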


\begin{proof}
Due to Lemma~\ref{lemma:kdecomp}, and noting that the pure quasifree states are faithful on the respective subalgebras, we know that $\A_{\kcal}\pf$ is isomorphic to the (spatial) tensor product of $C^\ast$-algebras
\begin{equation}
     \A_{\kcal}\pf \isom \A_0\pf \otimes \A_{\abel}\pf \otimes \A_{\fact}\pf \otimes \A_\infty\pf  
\end{equation}
and under this isomorphism
\begin{equation}
    \A_\lcal \isom  \cbb\idop \otimes \A_{\abel} \otimes \A_{\fact} \otimes \A_{\infty}
\end{equation}
and
\begin{equation}
   \omega_{g} \isom \omega_{P_0\pf g} \otimes \omega_{P_{\abel}\pf g} \otimes \omega_{P_{\fact}\pf g} \otimes \omega_{P_\infty\pf g};
\end{equation} 
$\omega$ decomposes in the same way.
This decomposition holds analogously for the induced von Neumann algebras in the GNS representation of $\A_{\kcal}\pf$ associated with $\omega$. Thus, due to additivity of the relative entropy in this situation (see Lemma~\ref{lemma:additivity} in the appendix), we obtain \eqref{eq:entrosum}. (This includes the obvious observation that the summand with respect to $\lcal_0\pf$ vanishes.) 
\optqed{}\end{proof}

We will now compute the three terms in \eqref{eq:entrosum} individually. We start with the abelian part, following standard methods (cf.~\cite{VS:malliavin}).

\begin{proposition}\label{prop:entroA} For any $g \in \lcal_{\abel}\pf$,
   \begin{equation}\label{eq:relenta}    
        \relent{\omega_{g}}{\omega}{ \A_{\abel}} = 2 (\|  (1-P_{\abel}) g \|\pf)^2 
   \end{equation}
   where $P_{\abel}$ is the (real-linear) projector onto $\lcal_{\abel}$.
\end{proposition}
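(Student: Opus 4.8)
The plan is to reduce the claim to a computation with Gaussian measures. Since $\lcal_{\abel}=\lcal\cap\lcal'$ is contained in its symplectic complement, $\sigma\pf$ vanishes identically on $\lcal_{\abel}$, so $\A_{\abel}=\ccr(\lcal_{\abel},\sigma\pf)$ is \emph{abelian}; this is the reason for the name. As $\omega\restriction\A_{\abel}$ is faithful, its GNS representation realizes $\pi_\omega(\A_{\abel})''$ as a maximal abelian von Neumann algebra $L^\infty(X,\mu)$, where (by the standard Bochner--Minlos/wave construction for the commutative CCR algebra) $\mu$ is the centered Gaussian probability measure determined by $\int_X e^{i\phi_f}\,d\mu=e^{-\tau\pf(f,f)/2}$ for a Gaussian family $\{\phi_f\}_{f\in\lcal_{\abel}}$ with $\int_X\phi_f\phi_h\,d\mu=\tau\pf(f,h)$, under which $W(f)\mapsto e^{i\phi_f}$. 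Here $\lcal_{\abel}$ is the Cameron--Martin space of $\mu$, and the relative entropy of states on $\A_{\abel}$ equals the classical Kullback--Leibler divergence of the corresponding measures (cf.\ Appendix~\ref{app:entro}).

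First I would determine the characteristic function of $\omega_g\restriction\A_{\abel}$. The Weyl relations \eqref{eq:weyl} give $W(g)^\ast W(f)W(g)=e^{2i\sigma\pf(g,f)}W(f)$, hence $\omega_g(W(f))=e^{2i\sigma\pf(g,f)}\,\omega(W(f))$ for $f\in\lcal_{\abel}$. Writing $g=g_1+\ipf g_2$ with $g_1,g_2\in\lcal_{\abel}$ (using $\lcal_{\abel}\pf=\lcal_{\abel}\oplus\ipf\lcal_{\abel}$ as a real orthogonal direct sum, cf.\ the proof of Lemma~\ref{lemma:kdecomp}), and using that $\lcal_{\abel}\perp\ipf\lcal_{\abel}$ in $(\kcal\pf,\tau\pf)$ together with the $\tau\pf$-orthogonality of $\ipf$, one computes $\sigma\pf(g,f)=\tau\pf(g_1,-\ipf f)+\tau\pf(\ipf g_2,-\ipf f)=-\tau\pf(g_2,f)$. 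Hence $\omega_g\restriction\A_{\abel}$ corresponds to the translate $\mu_g$ of $\mu$ by the vector $v:=-2g_2\in\lcal_{\abel}$. Since $v$ lies in the Cameron--Martin space, the Cameron--Martin theorem gives $\mu_g\sim\mu$ (so $\omega_g$ is normal and the entropy finite) with $d\mu_g/d\mu=\exp\bigl(\phi_v-\tfrac12(\|v\|\pf)^2\bigr)$; integrating $\log(d\mu_g/d\mu)$ against $\mu_g$, under which $\phi_v$ has mean $(\|v\|\pf)^2$, yields $\relent{\omega_g}{\omega}{\A_{\abel}}=\tfrac12(\|v\|\pf)^2=2(\|g_2\|\pf)^2$.

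It remains to identify $(\|g_2\|\pf)^2=(\|(1-P_{\abel})g\|\pf)^2$: since $g_1\in\lcal_{\abel}$ lies in the range of $P_{\abel}$ while $\ipf g_2\in\ipf\lcal_{\abel}$ is $\tau\pf$-orthogonal to $\lcal_{\abel}$, we have $(1-P_{\abel})g=\ipf g_2$, and $\|\ipf g_2\|\pf=\|g_2\|\pf$ because $\ipf$ is $\tau\pf$-orthogonal; this gives \eqref{eq:relenta}. The only genuinely non-routine ingredient is the first paragraph: realizing $(\A_{\abel},\omega)$ concretely as a Gaussian (wave) representation and matching the operator-algebraic relative entropy with the Kullback--Leibler divergence; the remaining steps are bookkeeping with the Weyl relations and the Cameron--Martin formula. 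Alternatively one could bypass measure theory and argue via the relative modular operator, which for an abelian algebra acts as multiplication by $d\mu_g/d\mu$ and leads to the same integral.
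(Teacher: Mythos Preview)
Your argument is correct. The computation of $\sigma\pf(g,f)=-\tau\pf(g_2,f)$ is right (using $\lcal_{\abel}\perp_{\tau\pf}\ipf\lcal_{\abel}$ and $\tau\pf$-orthogonality of $\ipf$), the identification of $\omega_g\restriction\A_{\abel}$ with the Gaussian shifted by $-2g_2$ is correct, and the Cameron--Martin/KL computation yields $2(\|g_2\|\pf)^2=2(\|(1-P_{\abel})g\|\pf)^2$ as claimed.

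Your route differs from the paper's. The paper first reduces to finite-dimensional $\lcal_{\abel}\pf$ via separability of $\kcal$ and the martingale property of the relative entropy (Lemma~\ref{lemma:entrounion}); it then identifies $\lcal_{\abel}\pf\cong\cbb^n$, writes down the GNS representation on $L^2(\rbb^n,d\mu)$ with $\mu$ the standard Gaussian, exhibits explicit GNS vectors for $\omega$ and $\omega_g$, computes the relative modular group directly, and evaluates \eqref{eq:entropydef}. You instead work in infinite dimensions from the outset, invoking the abstract Gaussian (isonormal) process and the Cameron--Martin formula. Your approach is more conceptual and, as the paper itself remarks after its proof, avoids the separability hypothesis; the paper's approach is more elementary in that it needs no infinite-dimensional measure theory, only finite-dimensional Gaussian integrals and the supremum property of relative entropy. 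The one point you should make fully explicit is that for a maximal abelian $L^\infty(X,\mu)$ with vector state given by the constant function $1$, the Araki relative entropy \eqref{eq:entropydef} agrees with the classical Kullback--Leibler divergence (the spatial derivative is multiplication by $d\mu_g/d\mu$); you flag this but do not spell it out.
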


\begin{proof}
 Since $\kcal$ is separable, the von Neumann envelope of $\A_\lcal$ is generated by the algebras for finite-dimensional subspaces of $\lcal$.  Lemma~\ref{lemma:entrounion} in the appendix shows that $\relent{\omega_{g}}{\omega}{ \A_{\abel}}$ is determined by the supremum of the entropy for these subalgebras; hence it suffices to prove the statement for the case of finite-dimensional $\lcal_a\pf$. Also, on the algebra $\A_\abel$, the state $\omega_{g}$ coincides with $\omega_{\hat g}$ where $\hat g = (1 - P_a)g$; hence we can assume without loss that $g \in (1-P_a)\lcal_a\pf = \ipf \lcal_a$.
 
 In this case, after a suitable choice of basis, $\lcal_\abel\pf$ with the scalar product $\hscalar{\cdot}{\cdot}\pf$ can be identified with $\cbb^n$ and its  standard scalar product, with the real subspace $\rbb^n$ corresponding to $\lcal_a$. The GNS representation $\pi$ for $(\A_{a}, \omega)$ acts on $L^2(\rbb^n,d\mu)$ where $d\mu = (2\pi)^{-n/2}\exp(-\|x\|^2/2)d^nx$, with $\pi( W(f) )$ being multiplication with $\exp i \hscalar{f}{ \cdotarg}$, and  $\pi( \A_{a} )'' = L^\infty(\rbb^n,d\mu)$. The states $\omega$ and $\omega_g$ are vector states with vectors 
 $\Omega(x) = 1$, $\Omega_g( x) = \exp (\hscalar{\ipf g}{x} - (\|g\|\pf)^2)$. 
 The relative modular group turns out to act by multiplication with $ \exp ( -2it\hscalar{\ipf g}{x} + 2it(\|g\|\pf)^2)$.  
The relative entropy can then be computed from the general definition \eqref{eq:entropydef}, which yields the result \eqref{eq:relenta}.
\optqed{}\end{proof}

Of course, this relative entropy coincides with the usual Kullback-Leibler divergence of Gaussian distributions (cf.~\cite[p.~81]{OP04}). In the proof, we have used our simplifying assumption that $\kcal$ is separable, but by methods of the theory of Gaussian fields \cite{VS:malliavin}, we expect that this assumption is actually dispensable.

Next, we consider the factorial part, for which the relative entropy is known from \cite{CLR:waveinfo}.

\begin{proposition}\label{prop:entroF}
For any $g \in \lcal_{\fact}\pf\cap \dom K_\lcal$, one has $\ipf K_\lcal g \in \dom P_\fact$ and
   \begin{equation}\label{entrform}
        \relent{\omega_{g}}{\omega}{ \A_\fact} = \sigma\pf( g, P_\fact \, \ipf K_\lcal g).
   \end{equation}
\end{proposition}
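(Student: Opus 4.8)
The statement concerns the purely factorial situation, so the plan is to reduce to the known formula for pure quasifree states on factorial subspaces. Since $(\kcal\pf,\tau\pf,\sigma\pf)$ is a pure symplectic Hilbert space, $\lcal_\fact$ is a standard subspace of the complex Hilbert space $\lcal_\fact\pf$ and the state $\omega$ on $\A_\fact$ is the pure quasifree state whose GNS vector is cyclic and separating. Thus the setting of \cite{CLR:waveinfo} applies directly: the Tomita-Takesaki objects $J_\lcal,\Delta_\lcal$ restricted to $\lcal_\fact\pf$ are exactly the modular data of the standard subspace $\lcal_\fact$, and $K_\lcal = -\log\Delta_\lcal$ on $\lcal_\fact\pf$. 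The only thing to be checked before quoting the formula is the domain claim, namely that $g \in \dom K_\lcal$ implies $\ipf K_\lcal g \in \dom P_\fact = \lcal_\fact + \lcal_\fact'$.

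First I would establish the domain assertion. Using Lemma~\ref{lemma:pfmodular}, $P_\fact$ is the closure of $a(\Delta_\fact) + J_\lcal b(\Delta_\fact)$ with $a(\lambda)=(1-\lambda)^{-1}$ and $b(\lambda)=\lambda^{1/2}a(\lambda)$; so it suffices to show $\ipf K_\lcal g$ lies in the domain of $a(\Delta_\fact)$ and $b(\Delta_\fact)$, i.e.\ that the functions $\mu\mapsto \mu\,e^{-\mu}(1-e^{-\mu})^{-1}$ and $\mu\mapsto \mu\,e^{-\mu/2}(1-e^{-\mu})^{-1}$ (writing $K_\lcal$ spectrally with eigenvalue $\mu=-\log\Delta_\fact$) are bounded relative to the spectral measure of $g$ under $K_\lcal$. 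Both functions are in fact \emph{bounded} on $\rbb\setminus\{0\}$: near $\mu=0$ one has $\mu/(1-e^{-\mu})\to 1$, and for $|\mu|\to\infty$ the exponential factors force decay. Hence $a(\Delta_\fact)$ and $b(\Delta_\fact)$ are bounded on $\dom K_\lcal$ composed with multiplication by $\mu$, which gives $\ipf K_\lcal g \in \dom P_\fact$ for every $g\in\dom K_\lcal$. (One should note $\ipf$ commutes with $K_\lcal$, so $\ipf K_\lcal g$ is again in the relevant spectral subspace.)

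Next I would invoke the entropy formula. For a pure quasifree state and a standard factorial subspace, \cite{CLR:waveinfo} gives $\relent{\omega_{g}}{\omega}{\A_\fact} = \sigma\pf(g, P_\fact\,\ipf K_\lcal g)$ for $g$ in the stated domain — this is precisely \eqref{entrform}. The verification amounts to matching notation: the vector $g$ here plays the role of the single-particle wavefunction, $K_\lcal$ is their modular Hamiltonian, and $P_\fact$ (the real-linear projector with image $\lcal_\fact$ and kernel $\lcal_\fact'$) is their ``cutting projector''. Alternatively, if one wants a self-contained derivation rather than a citation: compute the relative modular operator $\Delta_{\omega_g,\omega}$ on the GNS space, which for coherent states differs from $\Delta_\omega$ by a unitary Weyl cocycle generated by $g$; then $\relent{\omega_g}{\omega} = -\langle\Omega\mid\log\Delta_{\omega_g,\omega}\Omega\rangle$ evaluates to the quadratic expression in $g$ after a computation identical to the one in \cite[Sec.~2]{CLR:waveinfo}, using that on $\lcal_\fact\pf$ the one-particle modular generator is $K_\lcal$ and that the projection onto $\lcal_\fact$ along $\lcal_\fact'$ implements the restriction to the subalgebra at the single-particle level.

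The main obstacle is bookkeeping rather than a genuine difficulty: one must be careful that the operator $P_\fact\,\ipf K_\lcal$ appearing in \eqref{entrform} is only densely defined, so the formula is an identity of the \emph{quadratic form} $g\mapsto \sigma\pf(g,P_\fact\ipf K_\lcal g)$ on the core $\dom K_\lcal$, and one should check this form is well-defined there — which is exactly what the domain step above secures — and that it is closable/extends to the full form domain if needed later. No delicate analysis beyond the functional calculus bounds on $a,b$ is required; everything else follows from the pure case treated in \cite{CLR:waveinfo} together with Lemma~\ref{lemma:kdecomp} and Lemma~\ref{lemma:pfmodular}.
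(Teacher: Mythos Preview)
Your overall strategy coincides with the paper's: reduce to the pure, factorial situation on $\lcal_\fact\pf$, identify $K_\lcal\restriction\lcal_\fact\pf$ with the one-particle modular Hamiltonian, and then quote (or rederive) the formula from \cite{CLR:waveinfo}. The paper's proof follows the same path but with a slightly different emphasis: it first carries out the Fock-space computation explicitly for $g\in\lcal_\fact\cap\dom K_\lcal$ (using the Weyl relations and $\Delta_\Omega^{it}=\Gamma(\Delta_\lcal^{it})$), notes the formula is trivial for $g\in\lcal_\fact'\cap\dom K_\lcal$, and only then extends to all of $\lcal_\fact\pf\cap\dom K_\lcal$ by a density argument invoking Lemma~\ref{lemma:pfmodular} and \cite[Sec.~4.4]{CLR:waveinfo}. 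The domain assertion $\ipf K_\lcal g\in\dom P_\fact$ is not proved separately there; it is absorbed into that density step.

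There is, however, a concrete error in your domain argument. In the spectral variable $\mu$ of $K_\lcal$ (so $\Delta_\fact=e^{-\mu}$), the symbol of $a(\Delta_\fact)K_\lcal$ is $\mu/(1-e^{-\mu})$, not $\mu\,e^{-\mu}/(1-e^{-\mu})$; there is no extra exponential factor. More importantly, neither version is bounded: $\mu/(1-e^{-\mu})\sim\mu$ as $\mu\to+\infty$, while $\mu\,e^{-\mu}/(1-e^{-\mu})=\mu/(e^{\mu}-1)\sim|\mu|$ as $\mu\to-\infty$. So the claim ``both functions are bounded on $\rbb\setminus\{0\}$'' is false. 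The argument is easily repaired: the correct symbol grows at most linearly in $|\mu|$, hence is dominated by $C(1+|\mu|)$, and since $g\in\dom K_\lcal$ this suffices to put $K_\lcal g$ into $\dom a(\Delta_\fact)$. (For $b(\Delta_\fact)K_\lcal$ the symbol $\mu/(2\sinh(\mu/2))$ really is bounded.) With this fix your domain step goes through, and the rest of the proposal matches the paper.
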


\begin{proof}
We sketch the relevant techniques from \cite{CLR:waveinfo}. Since $(\lcal_\fact\pf,\tau\pf,\sigma\pf)$ is pure, the GNS representation $\pi$ of $(\A_\fact, \omega)$ acts on the Fock space over $\lcal_\fact\pf$, and in that representation both $\omega$ and $\omega_g$ are vector states: $\omega$ corresponds to the Fock vacuum vector $\Omega$, and $\omega_g$ to the vector $\Omega_g:=\pi(W(g)) \Omega$. The vector $\Omega$ is cyclic and separating for $\pi(\A_\fact)''$, and the associated Tomita-Takesaki modular group is $\Delta_{\Omega}^{it}= \Gamma( \Delta_\lcal^{it} )$, the ``second quantization'' of the unitary $\Delta_\lcal^{it} \restriction \lcal_\fact\pf$. 

Now first let $g \in \lcal_\fact \cap \dom K_\lcal$. Using that $W(g) \in \A_\fact$, one finds 
$\Delta_{\Omega,\Omega_g}^{it}= \Delta_\Omega^{it}$,
and consequently
\begin{equation}
\begin{aligned}
\relent{\omega_g}{\omega}{\A_\fact}
&= i \frac{d}{dt} \hscalar{\Omega_g}{\Delta_{\Omega,\Omega_g}^{it} \Omega_g} \Big\vert_{t=0}
= i \frac{d}{dt} \hscalar{\Omega}{\pi(W(g))^*\Delta_\Omega^{it}\,\pi(W(g))\Omega} \Big\vert_{t=0}\\
&=i \frac{d}{dt} \hscalar{\Omega}{\pi(W(g))^*\Delta_\Omega^{it}\,\pi(W(g))\Delta_\Omega^{-it}\Omega} \Big\vert_{t=0}.
\end{aligned}
\end{equation}
With the Weyl relations \eqref{eq:weyl} and $\Delta_\Omega^{it}=\Gamma( \Delta_\lcal^{it}) $, 

\begin{equation}
\pi(W(g))^*\Delta_\Omega^{it}\,\pi(W(g))\Delta_\Omega^{-it}
= \pi(W(g))^\ast \pi(W(\Delta_\lcal^{it} g))
= \pi(W(\Delta_\lcal^{it} g - g)) e^{ i\sigma\pf(g, \Delta_\lcal^{it} g ) }.
\end{equation}
Therefore the relative entropy is 
\begin{equation}
\relent{\omega_g}{\omega}{\A_\fact}
= i \frac{d}{dt} e^{-(\| \Delta_\lcal^{it} g-g \|\pf)^2/2} e^{ i\sigma\pf(g, \Delta_\lcal^{it} g ) }\Big\vert_{t=0}
=  \sigma\pf(g, \ipf K_\lcal g).
\end{equation}
Hence \eqref{entrform} holds for $g \in \lcal_f \cap \dom K_\lcal$. It also holds for $g \in \lcal_\fact' \cap \dom K_\lcal$, since in that case both 
sides of the equation vanish. The result for general $g \in \lcal_\fact\pf \cap \dom K_\lcal$ follows by a density argument that employs Lemma~\ref{lemma:pfmodular}; see \cite[Sec.~4.4]{CLR:waveinfo}. 
\optqed{}\end{proof}

On the nonseparating part, one finds the relative entropy as follows:

\begin{proposition}\label{prop:entroInf}
For any $g \in \lcal_\infty\pf$,
   \begin{equation}    
        \relent{\omega_{g}}{\omega}{ \A_{\infty}} = \begin{cases}
            0 \quad &\text{if $g = 0$,}
\\
            \infty \quad &\text{otherwise.}
\end{cases}
   \end{equation}
\end{proposition}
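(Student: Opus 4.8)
The plan is to reduce the claim to a statement about von Neumann algebras that are \emph{not} in standard form with respect to the relevant vector, and then invoke the standard fact that the relative entropy is infinite as soon as the "second" state fails to be absolutely continuous with respect to (indeed, simply lies outside a suitable support of) the first. Concretely, recall that $\lcal_\infty\pf = \lcal \cap \ipf\lcal$, so on $\lcal_\infty\pf$ the purified symplectic Hilbert space is \emph{complex}: $\ipf$ preserves $\lcal_\infty\pf$. This means that the real symplectic subspace $\lcal_\infty = \lcal_\infty\pf$ (as a real subspace of itself) satisfies $\lcal_\infty + \ipf\lcal_\infty = \lcal_\infty\pf$ but also $\lcal_\infty \cap \ipf\lcal_\infty = \lcal_\infty$, i.e. it is as far from standard as possible. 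Equivalently, in the GNS representation of $(\A_\infty,\omega)$ on the Fock space over $\lcal_\infty\pf$, the vacuum $\Omega$ is cyclic but \emph{not separating} for $\pi(\A_\infty)''$ — in fact $\pi(\A_\infty)''$ is a maximal abelian or, more precisely, a "full" factor acting on the whole Fock space with $\Omega$ non-separating — hence there exists a nonzero vector (indeed a dense set of vectors) orthogonal to the closure of $\pi(\A_\infty)\Omega'$... I would instead argue directly.

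The cleanest route: for $g = 0$ the two states coincide and the entropy is $0$ by the normalization in Appendix~\ref{app:entro}. For $g \neq 0$ in $\lcal_\infty\pf$, I would show that $\omega_g$ and $\omega$, restricted to $\A_\infty$, are \emph{disjoint} (not quasi-equivalent), which by the general properties of relative entropy recalled in the appendix (in particular Lemma~\ref{lemma:additivity} and the monotonicity/support properties, cf.~\cite{Araki:relent1,OP04}) forces $\relent{\omega_g}{\omega}{\A_\infty} = \infty$. To see disjointness, note that on $\lcal_\infty\pf$ one has $\ipf\lcal_\infty = \lcal_\infty$, so $W(\ipf h) \in \A_\infty$ for every $h \in \lcal_\infty$; thus the algebra $\A_\infty$ contains the Weyl operators for an entire complex subspace, and $\pi(\A_\infty)''$ contains the field operators $\phi(h)$ and $\phi(\ipf h)$, hence the full Weyl algebra over $\lcal_\infty\pf$ as a complex space, which on the Fock space over $\lcal_\infty\pf$ is irreducible. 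On an irreducible representation the vector states $\omega$ and $\omega_g$ are unitarily implemented by $\Omega$ and $W(g)\Omega$, and these are distinct unit vectors; more to the point, the relative entropy of two vector states on $B(\hcal)$ with density matrices the rank-one projections onto $\Omega$ and $\Omega_g$ is $\infty$ precisely when $\Omega_g \neq \Omega$ (since $\operatorname{Tr}(\rho_{\Omega_g}(\log\rho_{\Omega_g} - \log\rho_\Omega)) = +\infty$ whenever $\operatorname{supp}\rho_{\Omega_g} \not\leq \operatorname{supp}\rho_\Omega$, i.e. whenever $\Omega_g$ is not proportional to $\Omega$). Since $g \neq 0$ implies $\Omega_g = \pi(W(g))\Omega \neq \Omega$ (the coherent vector differs from the vacuum), we get $\relent{\omega_g}{\omega}{\A_\infty} = \infty$.

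So the key steps, in order, are: (1) dispose of $g=0$ by normalization; (2) use $\lcal_\infty\pf = \lcal_\infty = \ipf\lcal_\infty$ to conclude that $\A_\infty$, in the GNS representation of $\omega$, generates the \emph{irreducible} Weyl algebra over the complex Hilbert space $\lcal_\infty\pf$ on its Fock space, with $\Omega$ cyclic but not separating; (3) identify $\omega$ and $\omega_g$ as the vector states given by the Fock vacuum and the coherent vector $\pi(W(g))\Omega$, which are non-collinear for $g\neq 0$; (4) invoke the formula \eqref{eq:entropydef} (or its specialization to type~I / $B(\hcal)$, cf.~\cite{OP04}) to conclude that the relative entropy of two non-collinear vector states on $B(\hcal)$ is $+\infty$ because the support projection of $\omega_g$ is not dominated by that of $\omega$. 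The main obstacle I anticipate is purely bookkeeping: making precise in what sense $\pi(\A_\infty)'' = B(\fcal(\lcal_\infty\pf))$ — one must check that the self-adjoint closures of $\phi(h)$ for $h$ ranging over $\lcal_\infty\pf$ (now a genuine complex space, since $\ipf\lcal_\infty = \lcal_\infty$) generate $B(\fcal(\lcal_\infty\pf))$, which is the standard irreducibility of the Fock representation; and then to quote the correct general fact that $S(\varphi\|\psi) = \infty$ when $s(\varphi) \not\leq s(\psi)$, which is part of the basic properties of relative entropy recalled in the appendix. Everything else is immediate from the definitions and from Lemma~\ref{lemma:kdecomp}.
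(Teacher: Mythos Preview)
Your proposal is correct and follows essentially the same route as the paper: purity of $(\lcal_\infty\pf,\tau\pf,\sigma\pf)$ gives irreducibility of the GNS representation, so $\pi(\A_\infty)'' = B(\fcal(\lcal_\infty\pf))$; then $\omega$ and $\omega_g$ are rank-one vector states, and the support condition in \eqref{eq:entropydef} fails for $g\neq 0$. (Your first paragraph wanders a bit---``maximal abelian'' is a slip, since the algebra is all of $B(\hcal)$---but your enumerated steps (1)--(4) are exactly the paper's argument, just with more bookkeeping spelled out.)
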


\begin{proof}
    Since $(\lcal_\infty,\tau\pf,\sigma\pf)$ is pure, the GNS representation $\pi$ of $(\A_{\infty},\omega$) is irreducible \cite[Ch.~4]{Petz:ccr} and $\omega$ and $\omega_g$ are given by vector states $\Omega$ and $\Psi:=\pi(W(g)) \Omega$ there. The support projections of these states are hence the projectors $P_\Omega$ and $P_\Psi$ respectively; and $P_\Omega \leq P_\Psi$ if and only if they are equal, i.e., for $g = 0$. The statement then follows from the definition of the relative entropy, see the appendix.
\optqed{}\end{proof}

Our goal is now to establish a unified formula that applies to all these cases, linking the relative entropy on the CCR algebras to a quadratic form at single-particle level. To that end:

\begin{lemma}\label{lem:entrform} 
(cf.~\cite[Prop.~2.5]{CLR:waveinfo})
Consider the real-linear operator on $\dcut\cap \lcal_\std\pf$,
\begin{equation}\label{eq:sop}
%\begin{aligned}
   R_\lcal := c(K_\lcal) (1-J_\lcal) c(K_\lcal), \quad \text{where } c(\lambda)=\sqrt{\frac{\lambda}{1-e^{-\lambda}} }\, ,
%\end{aligned}
  \end{equation}
  extended by zero to $\lcal_0\pf$ and undefined on $\lcal_\infty\pf \backslash \{0\}$. (The function $c$ is extended by continuity to $\lambda=0$.)
  Then:
  \begin{enumerate}[(i)]
   \item \label{it:rpos} There is a unique closed real-linear quadratic form $S_\lcal$ associated with $R_\lcal$, which is positive;
   \item \label{it:rbd} one has $\|R_\lcal - c(K_\lcal)^2\|\leq 1$ as operators on $\lcal_0\pf \oplus \lcal_\std\pf$;
   \item \label{it:rdom} $\dom S_\lcal = \lcal_0\pf \oplus \lcal_\abel\pf \oplus \dom (E_+ |K_\lcal|^{1/2}) \oplus \{0\}$,
     where $E_+$ denotes the spectral projector of $K_\lcal$ for the interval $(0,\infty)$;

   \item \label{it:rker} $\ker S_\lcal = \lcal'$;
   \item \label{it:sq} $S_\lcal (Q_\lcal f, Q_\lcal f) = S_\lcal(f,f)$ for all $f \in \dom Q_\lcal \cap \dom S_\lcal$.
  \end{enumerate}  
\end{lemma}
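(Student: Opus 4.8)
The strategy is to reduce everything to the factorial part, exactly as in the proof of Lemma~\ref{lemma:spectralcut}. On $\lcal_0\pf$ the operator $R_\lcal$ is zero; on $\lcal_\abel\pf$ the modular operator $\Delta_\lcal$ is the identity, so $K_\lcal=0$, $c(K_\lcal)=1$, $J_\lcal$ acts as the reflection fixing $\lcal_\abel$ and negating $\ipf\lcal_\abel$, and hence $R_\lcal=1-J_\lcal=2(1-P_\abel)$, a bounded positive operator; on $\lcal_\infty\pf\setminus\{0\}$ the form is declared $+\infty$, which is consistent with Proposition~\ref{prop:entroInf}. So the content of (i)--(v) lives on $\lcal_\fact\pf$, and from now on I assume $\lcal=\lcal_\fact$. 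This is the setting of \cite[Prop.~2.5]{CLR:waveinfo}; the point of the present lemma is to restate it in a form compatible with the decomposition, so most items are cited or are short computations.

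For (ii): on $\lcal_\std\pf$ write $R_\lcal-c(K_\lcal)^2 = -c(K_\lcal)J_\lcal c(K_\lcal)$. Since $J_\lcal$ is a (real-linear, antiunitary-type) involution with $\|J_\lcal\|=1$ and $J_\lcal\Delta_\lcal J_\lcal=\Delta_\lcal^{-1}$, hence $J_\lcal K_\lcal J_\lcal=-K_\lcal$, one gets $c(K_\lcal)J_\lcal c(K_\lcal)=J_\lcal\,c(-K_\lcal)c(K_\lcal)=J_\lcal\,d(K_\lcal)$ with $d(\lambda)=c(-\lambda)c(\lambda)$; the elementary identity $c(-\lambda)c(\lambda)=\bigl(\tfrac{-\lambda}{1-e^{\lambda}}\cdot\tfrac{\lambda}{1-e^{-\lambda}}\bigr)^{1/2}$ gives $d(\lambda)^2=\tfrac{\lambda^2 e^{\lambda}}{(e^\lambda-1)^2}\le 1$ for all real $\lambda$ (equality only at $\lambda=0$), so $\|c(K_\lcal)J_\lcal c(K_\lcal)\|\le 1$. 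This bound, together with $c(K_\lcal)^2=c(0)^2=1$ on $\lcal_0\pf$ and $R_\lcal=0$ there, proves (ii) and shows $R_\lcal\ge 0$ on $\dcut\cap\lcal_\std\pf$, so the form it defines is nonnegative; closability of the associated form is then the statement that $R_\lcal$ is (essentially) self-adjoint / nonnegative on the core $\dcut\cap\lcal_\std\pf$, which follows because $c(K_\lcal)^2=c(K_\lcal)^\dagger c(K_\lcal)$ is a function of $K_\lcal$ and $\dcut$ is a core for it (Lemma~\ref{lemma:spectralcut}); this is (i). For (iii), from $\|R_\lcal-c(K_\lcal)^2\|\le 1$ one has $R_\lcal\le 2\,c(K_\lcal)^2$, and conversely $R_\lcal\ge c(K_\lcal)^2-1$; since $c(\lambda)^2=\tfrac{\lambda}{1-e^{-\lambda}}$ grows like $\lambda$ as $\lambda\to+\infty$ and is bounded as $\lambda\to-\infty$, the form domain of $c(K_\lcal)^2$ equals $\dom(E_+|K_\lcal|^{1/2})$ on $\lcal_\fact\pf$, and the two-sided bound shows $\dom S_\lcal$ is the same space, which combined with the trivial parts gives (iii).

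For (iv): $\ker S_\lcal=\ker R_\lcal$ (on the natural domain). Using $R_\lcal=c(K_\lcal)(1-J_\lcal)c(K_\lcal)$ and that $c(K_\lcal)$ is injective with dense range, $R_\lcal f=0$ forces $c(K_\lcal)f\in\ker(1-J_\lcal)$, i.e.\ $c(K_\lcal)f$ is $J_\lcal$-fixed; by the Rieffel--van Daele description the $+1$ eigenspace of $J_\lcal$ (on $\lcal_\std\pf$) together with the appropriate behaviour on $\lcal_0\pf$ characterises the symplectic complement, and one checks $\ker R_\lcal=\lcal_\fact'\oplus\lcal_0\pf=\lcal'\cap(\lcal_0\pf\oplus\lcal_\std\pf)$, i.e.\ $\lcal'$ after adding back $\lcal_\abel'$-contributions; alternatively cite \cite[Prop.~2.5]{CLR:waveinfo} for the factorial case and add the abelian part where $\ker(2(1-P_\abel))=\lcal_\abel=\lcal_\abel\cap\lcal_\abel'$. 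Finally (v): this is where the identity $P_\fact=(a(\Delta_\fact)+J_\lcal b(\Delta_\fact))^-$ of Lemma~\ref{lemma:pfmodular} does the work. One computes, on the core $\dcut$, that $c(K_\lcal)Q_\lcal=c(K_\lcal)$ modulo $\ker R_\lcal$; concretely, the claim $S_\lcal(Q_\lcal f,Q_\lcal f)=S_\lcal(f,f)$ is equivalent to $(1-Q_\lcal)f\in\ker S_\lcal=\lcal'$ for $f$ in the common domain, plus the mixed term $S_\lcal(Q_\lcal f,(1-Q_\lcal)f)=0$; since $\ker Q_\lcal=\lcal'$ on $\lcal_\fact\pf$ and $\img(1-Q_\lcal)=\lcal_\fact'$ there (and on $\lcal_\abel\pf$, $1-Q_\lcal=P_\abel$ projects onto $\lcal_\abel\subset\ker S_\lcal$), both facts hold. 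Assembling the three blocks gives (v).

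\textbf{Main obstacle.} The only genuinely delicate point is the interplay of \emph{unbounded} objects in (v): $Q_\lcal$, $K_\lcal$, $P_\fact$ are all unbounded with domains described via disjoint spectral regions of $\log\Delta_\lcal$, so the manipulation ``$c(K_\lcal)Q_\lcal=c(K_\lcal)$ up to $\ker$'' must be justified on the explicit core $\dcut$ from Lemma~\ref{lemma:spectralcut} (where all spectral projections are finite and everything is bounded) and then extended by the closedness of $S_\lcal$ established in (i). Getting the domains to match — in particular checking that $Q_\lcal f\in\dom S_\lcal$ whenever $f\in\dom Q_\lcal\cap\dom S_\lcal$, so that the left-hand side of (v) even makes sense — is the step that needs care; everything else is either cited from \cite{CLR:waveinfo} or a one-line functional-calculus estimate.
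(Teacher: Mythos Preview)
Your proposal follows essentially the same route as the paper's proof: reduce to the factorial summand, use the identity $R_\lcal-c(K_\lcal)^2=-J_\lcal\,c(K_\lcal)c(-K_\lcal)$ and the bound $c(\lambda)c(-\lambda)\le 1$ for (ii), read off the form domain from the asymptotics of $c$ for (iii), and identify $\ker S_\lcal$ with $\lcal'$ via Lemma~\ref{lemma:pfmodular} (and the obvious abelian computation) for (iv).

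Two remarks, though:

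\textbf{Positivity in (i).} You claim that the norm bound in (ii) ``shows $R_\lcal\ge 0$''. It does not: for $\lambda<0$ one has $c(\lambda)^2=\lambda/(1-e^{-\lambda})<1$, so $R_\lcal\ge c(K_\lcal)^2-1$ is not a positivity estimate. The paper obtains positivity in one line from the structure of $R_\lcal$ itself: $(1-J_\lcal)/2$ is a real-orthogonal projector, hence $R_\lcal=c(K_\lcal)(1-J_\lcal)c(K_\lcal)\ge 0$. Your argument for (i) should use this, not (ii).

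\textbf{Item (v) is much simpler than your ``main obstacle'' suggests.} Once (iv) is in place, no core argument or spectral manipulation with $c(K_\lcal)Q_\lcal$ is needed. For $f\in\dom Q_\lcal$ write $f=g+g'$ with $g=Q_\lcal f$ and $g'\in\ker Q_\lcal=\lcal'=\ker S_\lcal$ by (iv). If in addition $f\in\dom S_\lcal$, then since $g'\in\ker S_\lcal\subset\dom S_\lcal$ and $\dom S_\lcal$ is linear, one has $g=f-g'\in\dom S_\lcal$ automatically; the domain worry you flag dissolves. Then $S_\lcal(f,f)=S_\lcal(g,g)=S_\lcal(Q_\lcal f,Q_\lcal f)$ because the cross terms and the $g'$ term vanish by $g'\in\ker S_\lcal$. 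This is exactly the paper's two-line argument.
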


\begin{proof}
Since $(1-J_\lcal)/2$ is a real-orthogonal projector, it is clear that $R_\lcal$ is positive. 
Thus $R_\lcal$ has a unique positive closed quadratic form associated with it, showing (\ref{it:rpos}). 
Further, one computes on $\lcal_\std\pf$,
\begin{equation}
  R_\lcal - c(K_\lcal)^2 = -J_\lcal  c(K_\lcal)c(-K_\lcal),
\end{equation}
and since $\lambda \mapsto c(\lambda)c(-\lambda)$ is bounded by 1, (\ref{it:rbd}) follows, also on $\lcal_0\pf$. Consequently, the form domain of $S_\lcal$ is the same as the operator domain of $c(K_\lcal)$; since $c(\lambda) \to 0$ as $\lambda \to -\infty$ and $c(\lambda) \sim \lambda^{1/2}$ as $\lambda \to \infty$, it can be written as in~(\ref{it:rdom}). We prove (\ref{it:rker}) separately for the restrictions to $\lcal_\abel\pf$ and $\lcal_\fact\pf$; it is trivial on $\lcal_0\pf$. Now on $\lcal_\abel\pf$, the statement follows from $c(0)=1$, while on $\lcal_\fact\pf$, one computes $\ker S_\lcal = \ker P_\fact$ by Lemma~\ref{lemma:pfmodular}, and $\ker P_\fact = \lcal' \cap \lcal_\fact\pf$.
Finally for (\ref{it:sq}), let $f \in \dom Q_\lcal$. Then $f=g+g'$ with $g \in \img Q_\lcal$ and $g'\in \ker Q_\lcal = \lcal' = \ker S_\lcal$. If additionally $f \in \dom S_\lcal$, then also $g \in \dom S_\lcal$, and $S_\lcal(f,f)=S_\lcal(g,g)=S_\lcal(Q_\lcal f,Q_\lcal f)$ follows.
\optqed{}\end{proof}

We will sometimes write $S_\lcal(f)$ as shorthand for $S_\lcal(f,f)$. We are now ready to state the main result of the section:
\begin{theorem}\label{theorem:entroGen}
For any $f,g\in\kcal\pf$, we have
  \begin{equation}\label{eq:entropylift}
     \relent{\omega_g}{\omega_{f}}{ \A_\lcal} = S_\lcal(g-f);    
  \end{equation}
  in particular, the left-hand side is finite if and only if $g-f \in \operatorname{dom} S_\lcal$.
\end{theorem}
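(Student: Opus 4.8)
The plan is to reduce the general two‑state entropy to the case $\relent{\omega_h}{\omega}{\A_\lcal}$ with $h=g-f$, then to split everything along the decomposition of $\lcal$ from Lemma~\ref{lemma:kdecomp} and match the result with the block structure of the form $S_\lcal$. For the reduction, let $\beta_v$ (for $v\in\kcal\pf$) denote the automorphism of $\A_{\kcal}\pf$ given by conjugation with the Weyl unitary $W(v)$. The Weyl relations give $\beta_v(W(k))=e^{-2i\sigma\pf(v,k)}W(k)$ for $k\in\lcal$, so $\beta_v$ restricts to an automorphism of $\A_\lcal$; evaluating on Weyl generators one checks $\omega_g\circ\beta_f=\omega_{g-f}$ and $\omega_f\circ\beta_f=\omega$ on $\A_\lcal$. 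Since the relative entropy is unchanged under composing both of its arguments with an algebra automorphism (Appendix~\ref{app:entro}), it follows that $\relent{\omega_g}{\omega_f}{\A_\lcal}=\relent{\omega_{g-f}}{\omega}{\A_\lcal}$, so it suffices to prove \eqref{eq:entropylift} for $f=0$.

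For the second step, Proposition~\ref{prop:entrosum} writes $\relent{\omega_h}{\omega}{\A_\lcal}$ as the sum of the entropies relative to $\A_\abel$, $\A_\fact$, $\A_\infty$ of the states associated with $P_\abel\pf h$, $P_\fact\pf h$, $P_\infty\pf h$. On the other side, $K_\lcal$ and $J_\lcal$ are block‑diagonal for the orthogonal decomposition $\kcal\pf=\lcal_0\pf\oplus\lcal_\abel\pf\oplus\lcal_\fact\pf\oplus\lcal_\infty\pf$, hence so are $R_\lcal$ and the form $S_\lcal$; since $S_\lcal$ vanishes on $\lcal_0\pf$ this gives $S_\lcal(h)=S_\lcal(P_\abel\pf h)+S_\lcal(P_\fact\pf h)+S_\lcal(P_\infty\pf h)$, and it remains to match the three summands, including the finiteness statement. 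The abelian and nonseparating blocks are immediate: on $\lcal_\abel\pf$ one has $K_\lcal=0$, so $c(K_\lcal)=1$, and — since $\lcal_\abel$ is real‑orthogonal to $\ipf\lcal_\abel$ with $\lcal_\abel\pf=\lcal_\abel\oplus\ipf\lcal_\abel$ standard — $J_\lcal=2P_\abel-1$, whence $R_\lcal\restriction\lcal_\abel\pf=2(1-P_\abel)$ and $S_\lcal(k)=2(\|(1-P_\abel)k\|\pf)^2$, which is Proposition~\ref{prop:entroA}; on $\lcal_\infty\pf$ the form $S_\lcal$ is undefined away from $0$ (Lemma~\ref{lem:entrform}(\ref{it:rdom})), matching Proposition~\ref{prop:entroInf} and fixing the finiteness criterion $h\in\dom S_\lcal$.

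The factorial block carries the analytic content and is essentially \cite[Prop.~2.5]{CLR:waveinfo}. On the core $\dcut\cap\lcal_\fact\pf\subseteq\dom K_\lcal$ of Lemma~\ref{lemma:spectralcut}, Proposition~\ref{prop:entroF} gives the entropy as $\sigma\pf(g,P_\fact\,\ipf K_\lcal g)$; inserting $P_\fact=(a(\Delta_\fact)+J_\lcal b(\Delta_\fact))^-$ from Lemma~\ref{lemma:pfmodular}, using $\sigma\pf(u,v)=\tau\pf(u,-\ipf v)$ together with the functional‑calculus identities $a(\Delta)K=c(K)^2$ and $b(\Delta)K=c(K)c(-K)$ (with $K=-\log\Delta$), the intertwiner $J_\lcal c(K_\lcal)=c(-K_\lcal)J_\lcal$, and the evenness of $\lambda\mapsto c(\lambda)c(-\lambda)$, one rewrites this as $\tau\pf(g,R_\lcal g)=S_\lcal(g)$. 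Since $\dcut\cap\lcal_\fact\pf$ is a form core for $S_\lcal$ (Lemma~\ref{lem:entrform}(\ref{it:rbd}) identifies the form domain with $\dom c(K_\lcal)$), the closed form $S_\lcal$ and the lower semicontinuous map $g\mapsto\relent{\omega_g}{\omega}{\A_\fact}$ agree there; the remaining point — that they agree on all of $\lcal_\fact\pf$ and that the entropy is genuinely $+\infty$ off $\dom S_\lcal$ — is the substance of \cite[Sec.~4.4]{CLR:waveinfo}, where one passes through finite‑dimensional subalgebras (as in Proposition~\ref{prop:entroA}) and uses monotonicity. Substituting the three identities into the splitting from the second step yields \eqref{eq:entropylift}.

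I expect the hard part to be precisely this last extension in the factorial case: lower semicontinuity of the relative entropy only gives one inequality, and pinning down that the entropy is infinite exactly outside $\dom S_\lcal=\lcal_0\pf\oplus\lcal_\abel\pf\oplus\dom(E_+|K_\lcal|^{1/2})\oplus\{0\}$ requires the approximation argument of \cite{CLR:waveinfo}. By contrast, the gauge reduction, the abelian and nonseparating blocks, and the domain bookkeeping should be routine, provided the functional‑calculus manipulations are carried out on the common core supplied by Lemma~\ref{lemma:spectralcut}.
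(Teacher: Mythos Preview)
Your proposal is correct and follows essentially the same route as the paper: both reduce to $f=0$ via the inner automorphism $\operatorname{ad}W(-f)$, split the entropy along the decomposition of Proposition~\ref{prop:entrosum}, identify the abelian block via $K_\lcal\restriction\lcal_\abel\pf=0$ and $R_\lcal\restriction\lcal_\abel\pf=2(1-P_\abel)$, the nonseparating block via Proposition~\ref{prop:entroInf}, and on the factorial block use the functional-calculus identity (on the core $\dcut\cap\lcal_\fact\pf$) that turns $\sigma\pf(g,P_\fact\,\ipf K_\lcal g)$ into $\tau\pf(g,R_\lcal g)$, then extend to all of $\lcal_\fact\pf$ by the approximation argument of \cite[Sec.~4.4/Thm.~4.5]{CLR:waveinfo}. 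Your identification of the extension step as the genuinely delicate point is exactly right; everything else is routine.
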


\begin{proof}
  The automorphism $\alpha = \operatorname{ad} W(-f)$ of $\A_\lcal$ fulfills $\omega_{f} \circ \alpha = \omega$ and $\omega_g \circ \alpha = \omega_{g-f}$; hence we can assume $f=0$ without loss of generality.
  
  First let $g \in \lcal_\fact\pf \cap \dcut$. From Lemma~\ref{lemma:pfmodular}, spectral calculus shows  $P_\fact \ipf \log \Delta_\lcal = - \ipf R_\lcal$  on $\dcut\cap  \lcal_\fact\pf$,  thus
  \begin{equation}
     \sigma\pf(g, P_\fact \ipf \log \Delta_\lcal \,g) = \im \hscalar{g}{ P_\fact \ipf \log \Delta_\lcal \,g}\pf = \re \hscalar{g}{R_\lcal g}\pf =  \hscalar{g}{R_\lcal g}\pf,
  \end{equation}
  and \eqref{eq:entropylift} follows for all $g \in \lcal_\fact\pf \cap \dcut$ from Proposition~\ref{prop:entroF}. Using approximation techniques \cite[Theorem~4.5]{CLR:waveinfo}, the relation can be extended to all $g \in \lcal_\fact\pf$, including the case where the two sides of \eqref{eq:entropylift} are infinite.

Now consider $g \in \lcal_\abel\pf$. We note that $\lcal_\abel\pf = \ker\log \Delta_\lcal$ and $c(0)=1$, hence $R_\lcal$ acts as $1-J_\lcal= 2 (1-P_\abel)$ on $\lcal_\abel\pf$. Thus the proposed result \eqref{eq:entropylift} holds for $g \in \lcal_\abel\pf$, see Proposition~\ref{prop:entroA}.

Likewise, Proposition~\ref{prop:entroInf} shows that \eqref{eq:entropylift} holds for $g \in \lcal_\infty\pf$, with both sides being infinite unless $g=0$.
Applying  Proposition~\ref{prop:entrosum} now concludes the proof.
\optqed{}\end{proof}

\subsection{Approximation properties} \label{sec:approx}

For the following, we establish some approximation properties for the entropy form and the modular group.
Apart from the Hilbert space norm given by $\tau$ on $\kcal$ (and extended to $\gnorm{\cdotarg}{}\pf$ on $\kcal\pf$), we consider the following norms on $\kcal\pf$ or subsets of it:
\begin{itemize}
 \item the $K_\lcal$-graph norm, $\gnorm{f}{K,\lcal} := \|K_\lcal f\|\pf + \|f\|\pf$,
 \item the $S_\lcal$-graph norm, $\gnorm{f}{S,\lcal}^2 := S_\lcal(f) + \tau\pf(f,f)$,
 \item the seminorm $\|\cdot \|_\lcal$ defined by $\|f \|_\lcal^2:=S_\lcal(f)$.
\end{itemize}
It is clear that the $K_\lcal$-graph norm is stronger than the $S_\lcal$-graph norm, which is in turn stronger than the seminorm $\|\cdot \|_\lcal$.
We denote the closure of $\dom S_\lcal$ in $\|\cdot \|_\lcal$, modulo the kernel $\lcal'$ of the seminorm, as $\es_\lcal$; for formal reasons we explicitly denote the isometric inclusion of $(\dom S_\lcal, S_\lcal)$ into $\es_\lcal$ as $\varphi_\lcal$. Then $\es_\lcal$ becomes a Hilbert space with the (continuous extension of) the scalar product $\hscalar{\varphi_\lcal f}{\varphi_\lcal g}_\lcal=S_\lcal(f,g)$.

\begin{lemma}\label{lemma:graphcont}
 The modular group $U_\lcal$ maps $\dom S_\lcal$ into $\dom S_\lcal$, and this action is strongly continuous in the $S_\lcal$-graph norm. 
\end{lemma}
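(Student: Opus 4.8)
The plan is to reduce the statement to the factorial part and there work with the spectral representation of $K_\lcal$. First I would observe that on $\lcal_0\pf$ the modular group acts trivially (as the identity), and on $\lcal_\abel\pf = \ker\log\Delta_\lcal$ it also acts as the identity, while $S_\lcal$ restricted there is the bounded form $2(1-P_\abel)$; so on those summands the claim is immediate. On $\lcal_\infty\pf$ the operators $U_\lcal$ and $S_\lcal$ are not defined, but $\dom S_\lcal\cap\lcal_\infty\pf=\{0\}$ by Lemma~\ref{lem:entrform}(\ref{it:rdom}), so nothing is to be shown. Hence it suffices to treat $\lcal=\lcal_\fact$, i.e.\ to show that $U_\lcal(x)=e^{-\ipf x K_\lcal}$ preserves $\dom S_\lcal = \dom(E_+|K_\lcal|^{1/2})$ and is strongly $\gnorm{\cdot}{S,\lcal}$-continuous.

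The key technical input is Lemma~\ref{lem:entrform}(\ref{it:rbd}): $R_\lcal$ differs from $c(K_\lcal)^2$ by a bounded operator (norm $\le 1$), so the $S_\lcal$-graph norm is equivalent to the graph norm of the single-particle operator $c(K_\lcal)$, i.e.\ $\gnorm{f}{S,\lcal}^2 \simeq \|c(K_\lcal)f\|_{\pf}^2 + \|f\|_{\pf}^2$, where $c(\lambda)=\sqrt{\lambda/(1-e^{-\lambda})}$. Now $c(K_\lcal)$ is a function of the self-adjoint operator $K_\lcal$, and $U_\lcal(x)$ is the unitary group $e^{-\ipf xK_\lcal}$; since $\ipf$ commutes with $K_\lcal$ (the modular objects are defined relative to the complex structure $\ipf$, and $K_\lcal=-\log\Delta_\lcal$ is complex-linear), $U_\lcal(x)$ commutes with every bounded Borel function of $K_\lcal$, in particular with $c(K_\lcal)$. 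Therefore $U_\lcal(x)$ maps $\dom c(K_\lcal)$ into itself and $c(K_\lcal)U_\lcal(x)f = U_\lcal(x)c(K_\lcal)f$. Strong continuity in the $c(K_\lcal)$-graph norm then follows from strong continuity of the unitary group applied to the two vectors $f$ and $c(K_\lcal)f$: for $f\in\dom c(K_\lcal)$,
\begin{equation}
\gnorm{U_\lcal(x)f - f}{S,\lcal}^2 \simeq \|c(K_\lcal)(U_\lcal(x)f-f)\|_{\pf}^2 + \|U_\lcal(x)f-f\|_{\pf}^2 = \|U_\lcal(x)(c(K_\lcal)f) - c(K_\lcal)f\|_{\pf}^2 + \|U_\lcal(x)f - f\|_{\pf}^2,
\end{equation}
and both terms tend to $0$ as $x\to 0$ by strong continuity of $U_\lcal$ on $\kcal\pf$. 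Transporting the equivalence of norms back to $\gnorm{\cdot}{S,\lcal}$ finishes the argument. One small point to be careful about: $c(K_\lcal)$ as used here should be understood on the full $\lcal_\std\pf$ (with $c(0)=1$ on the abelian part), consistently with the definition of $R_\lcal$; then the equivalence in (\ref{it:rbd}) holds globally on $\lcal_0\pf\oplus\lcal_\std\pf$ and no case distinction beyond the above is needed.

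The main obstacle is not any deep estimate but rather bookkeeping across the orthogonal decomposition: one must check that $U_\lcal$, which is only \emph{a priori} defined on $\lcal_0\pf\oplus\lcal_\std\pf$, genuinely leaves $\dom S_\lcal$ invariant as a \emph{subset} of that domain (the $\lcal_\infty\pf$-part causing no trouble only because it meets $\dom S_\lcal$ trivially), and that the norm equivalence from Lemma~\ref{lem:entrform}(\ref{it:rbd}) is used in the correct direction to pass from convergence in the $c(K_\lcal)$-graph norm to convergence in $\gnorm{\cdot}{S,\lcal}$. Once the reduction to the factorial (or standard) part is made explicit and the commutation of $U_\lcal(x)$ with the spectral calculus of $K_\lcal$ is noted, the continuity is a routine consequence of strong continuity of a unitary one-parameter group.
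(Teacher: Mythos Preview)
Your argument is correct and essentially the same as the paper's: both use Lemma~\ref{lem:entrform}(\ref{it:rbd}) to control $S_\lcal$ by $\|c(K_\lcal)\,\cdot\,\|\pf^2+\|\cdot\|\pf^2$, then exploit that $U_\lcal(x)$ commutes with $c(K_\lcal)$ to reduce to strong continuity of the unitary group in $\kcal\pf$. The paper works directly with the inequality on $\lcal_0\pf\oplus\lcal_\std\pf$ rather than first splitting off the trivial summands, but this is only a cosmetic difference.
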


 \begin{proof}
 From Lemma~\ref{lem:entrform}(\ref{it:rdom}), it is clear that $U_\lcal=\exp( -\ipf K_\lcal)$ preserves $\dom S_\lcal$. Further, for $f \in \dom S_\lcal$ and $f_s := U_\lcal(s)f$, we have
 \begin{equation}
 \begin{aligned}
    S_\lcal(f_s-f,f_s-f) &\leq \hscalar{f_s-f}{c(K_\lcal)^2 (f_s-f)}\pf + ( \gnorm{f_s-f}{} \pf) ^2 
 \\   &= 2 \re \bighscalar{c(K_\lcal) f}{(\idop - U_\lcal(s)) \,c(K_\lcal) f}\pf   + ( \gnorm{f_s-f}{} \pf) ^2,
 \end{aligned}
 \end{equation}
where Lemma~\ref{lem:entrform}(\ref{it:rbd}) was used. This vanishes as $s \to 0$ due to strong continuity of $U_\lcal$ in the $\kcal\pf$-norm.
\optqed{}\end{proof}

The following lemmas for a fixed closed subspace $\lcal\subset\kcal$ will allow us to identify $\es_\lcal$ with a ``concrete'' Hilbert space in examples.

\begin{lemma}\label{density64}
$\varphi_\lcal(\img Q_\lcal \cap\dom K_\lcal)$ is dense in $\es_\lcal$ with respect to $\|\cdot\|_\lcal$. In particular, if $\lcal=\lcal_{\fact}$, then $\varphi_\lcal(\lcal \cap\dom K_\lcal)$ is dense in $\es_\lcal$ with respect to $\|\cdot\|_\lcal$.
\end{lemma}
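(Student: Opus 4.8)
The plan is to reduce the general statement to the factorial case and then handle each summand of the decomposition $\kcal\pf \isom \lcal_0\pf \oplus \lcal_\abel\pf \oplus \lcal_\fact\pf \oplus \lcal_\infty\pf$ separately, using that $S_\lcal$ respects this decomposition and that $\img Q_\lcal = \lcal_0\pf \oplus \lcal_\abel\pf \oplus \lcal_\fact \oplus \lcal_\infty\pf$. On $\lcal_0\pf$ the seminorm $\|\cdot\|_\lcal$ vanishes so the corresponding summand of $\es_\lcal$ is trivial and nothing is to prove; on $\lcal_\infty\pf$ the seminorm is also $+\infty$ off zero, so again the contribution to $\es_\lcal$ is trivial. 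On $\lcal_\abel\pf$ we have $K_\lcal = 0$, so $\dom K_\lcal \cap \lcal_\abel\pf = \lcal_\abel\pf$ and $Q_\lcal$ restricts to the bounded projector $1-P_\abel$ whose image is all of $\ipf\lcal_\abel$; since $S_\lcal$ acts there as $2(1-P_\abel)$, the space $\es_\lcal$ restricted to $\lcal_\abel\pf$ is exactly (a rescaling of) $\ipf\lcal_\abel = \img(1-P_\abel)$, and density is immediate. So the content is entirely in the factorial part, which is why the ``in particular'' clause is really the heart of the matter.

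In the factorial case $\lcal = \lcal_\fact$, I would argue as follows. By definition $\es_\lcal$ is the completion of $\varphi_\lcal(\dom S_\lcal)$ in $\|\cdot\|_\lcal$, so it suffices to show $\varphi_\lcal(\img Q_\lcal \cap \dom K_\lcal)$ is $\|\cdot\|_\lcal$-dense in $\varphi_\lcal(\dom S_\lcal)$. First, by Lemma~\ref{lemma:spectralcut}, $\dcut$ is a core for $Q_\lcal$; more to the point, the spectral projectors $Q^{(\epsilon)}$ commute with $K_\lcal$ and with $S_\lcal$ (being functions of $\log\Delta_\lcal$), and for $f\in\dom S_\lcal$ one has $Q^{(\epsilon)}f \in \dom K_\lcal$ (the spectrum of $\log\Delta_\lcal$ on $Q^{(\epsilon)}\kcal\pf$ being bounded) and $Q^{(\epsilon)}f \to f$ in $\|\cdot\|_\lcal$ as $\epsilon\to 0$, since $S_\lcal((1-Q^{(\epsilon)})f) \to 0$ by dominated convergence in the functional calculus. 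Thus $\dcut\cap\dom S_\lcal$ is $\|\cdot\|_\lcal$-dense in $\dom S_\lcal$, and its elements lie in $\dom K_\lcal$. It then remains to move from $\dcut$ into $\img Q_\lcal = \lcal_\fact$: given $h \in \dcut\cap\dom S_\lcal$, set $g := Q_\lcal h \in \lcal_\fact$; by Lemma~\ref{lem:entrform}(\ref{it:sq}) $S_\lcal(g) = S_\lcal(h)$ and $\varphi_\lcal g = \varphi_\lcal h$ in $\es_\lcal$ (they differ by an element of $\ker S_\lcal = \lcal'$), and since $Q_\lcal$ preserves $\dcut$-type bounds via Lemma~\ref{lemma:pfmodular} — $P_\fact$ being $(a(\Delta_\fact) + J_\lcal b(\Delta_\fact))^-$ and $a,b$ bounded on the spectral support of $Q^{(\epsilon)}$ — we get $g\in\dom K_\lcal$ as well. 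Hence $\varphi_\lcal g = \varphi_\lcal h$ with $g \in \lcal_\fact\cap\dom K_\lcal$, which gives the claimed density.

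The main obstacle I anticipate is the last step: verifying that applying $Q_\lcal$ (equivalently $P_\fact$) to a vector in $\dcut$ keeps it inside $\dom K_\lcal$, rather than just inside $\dom Q_\lcal$. The point is that $P_\fact$ is unbounded, but on each spectral subspace $Q^{(\epsilon)}\kcal\pf$ the operators $a(\Delta_\fact)$ and $b(\Delta_\fact)$ are bounded (the spectrum of $\log\Delta_\fact$ being bounded away from $0$ and $\infty$ there), so $P_\fact$ maps $Q^{(\epsilon)}\kcal\pf$ boundedly into $\kcal\pf$; combined with the fact that $J_\lcal$ maps $\dom K_\lcal$ to $\dom K_\lcal$ (it anticommutes with $\log\Delta_\lcal$) and that $Q^{(\epsilon)}\kcal\pf \subset \dom K_\lcal$, one concludes $P_\fact Q^{(\epsilon)}\kcal\pf \subset \dom K_\lcal$. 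One should be a little careful that the closure in $(a(\Delta_\fact)+J_\lcal b(\Delta_\fact))^-$ does not create new domain issues, but on the bounded spectral subspaces no closure is needed. Everything else — the decomposition bookkeeping and the dominated-convergence argument for $Q^{(\epsilon)}f\to f$ in $\|\cdot\|_\lcal$ — is routine given Lemmas~\ref{lemma:kdecomp}, \ref{lemma:pfmodular}, \ref{lemma:spectralcut}, \ref{lem:entrform}.
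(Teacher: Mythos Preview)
Your proposal is correct and follows essentially the same line as the paper's proof: approximate $v\in\dom S_\lcal$ by the spectral cutoffs $Q^{(\epsilon)}v$, observe that $Q^{(\epsilon)}v\to v$ in the $\|\cdot\|_\lcal$-seminorm, then apply $Q_\lcal$ and note that $Q^{(\epsilon)}v-Q_\lcal Q^{(\epsilon)}v\in\ker Q_\lcal=\lcal'=\ker S_\lcal$, so $\varphi_\lcal(Q_\lcal Q^{(\epsilon)}v)=\varphi_\lcal(Q^{(\epsilon)}v)$; finally, functional calculus (via Lemma~\ref{lemma:pfmodular} on the bounded spectral subspaces, as you spell out) gives $Q_\lcal Q^{(\epsilon)}v\in\img Q_\lcal\cap\dom K_\lcal$. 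The paper does this in one stroke without decomposing into the summands $\lcal_0\pf,\lcal_\abel\pf,\lcal_\fact\pf,\lcal_\infty\pf$ first, simply writing ``by functional calculus'' for the step you worry about; your explicit verification of that step via the boundedness of $a,b$ on the truncated spectrum is a useful elaboration of the same idea. One small slip in your overview: $\img Q_\lcal$ is $\{0\}\oplus \ipf\lcal_\abel\oplus\lcal_\fact\oplus\lcal_\infty\pf$, not $\lcal_0\pf\oplus\lcal_\abel\pf\oplus\lcal_\fact\oplus\lcal_\infty\pf$ as you first wrote, but you use the correct description later.
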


\begin{proof}
Let $\varepsilon>0$ and $Q^{(\varepsilon)}$ be as in Lemma \ref{lemma:spectralcut}. Then for any $v\in\kcal\pf$, by Lemma~\ref{lemma:spectralcut} we have $Q^{(\varepsilon)}v\in  \dom Q_\lcal$. 
Also, for any $v\in\dom(S_\lcal)$, by the expression for the relative entropy in Theorem \ref{theorem:entroGen}, we have $Q^{(\varepsilon)} v\rightarrow v$ in $S_\lcal$-graph norm as $\varepsilon \rightarrow 0$. 
Furthermore for $v\in\dom(S_{\lcal})$, by functional calculus, $Q_\lcal Q^{(\varepsilon)}v\in  \img Q_\lcal\cap \dom(K_\lcal)$. Thus
\begin{equation}
\|\varphi_\lcal(v-Q_\lcal Q^{(\varepsilon)}v)\|_\lcal \leq S_\lcal\big(v-Q^{(\varepsilon)}v \big) + S_\lcal\big( Q^{(\varepsilon)}v-Q_\lcal Q^{(\varepsilon)}v \big)
= S_\lcal( v-Q^{(\varepsilon)}v )\rightarrow 0
\end{equation}
as $\varepsilon\rightarrow 0$.---%
If here $\lcal=\lcal_{\fact}$, one has $\img Q_\lcal=\lcal$ and the second statement follows.
\optqed{}\end{proof}

\begin{lemma}\label{core}
Let $\dcal\subset \lcal$ be a core for the generator of the strongly continuous (with respect to the norm of $\kcal$) one-parameter group $s\rightarrow U_\lcal(s) \restriction \lcal$, i.e., $\ipf K_\lcal \restriction \lcal$. Then $\dcal$ is dense in $\lcal \cap\dom K_\lcal$ in the $S_\lcal$-graph norm.
\end{lemma}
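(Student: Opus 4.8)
The plan is to prove the statement by a two-step approximation argument: first pass from the core $\dcal$ of $\ipf K_\lcal\restriction\lcal$ to all of $\lcal\cap\dom K_\lcal$ in the $K_\lcal$-graph norm, and then use that the $K_\lcal$-graph norm dominates the $S_\lcal$-graph norm (noted just before Lemma~\ref{lemma:graphcont}). The second step is immediate; the content is in the first step, which is really a statement about the single generator $\ipf K_\lcal\restriction\lcal$, not about $S_\lcal$ at all.

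First I would record that $\dcal$ being a core for $\ipf K_\lcal\restriction\lcal$ means exactly: for every $f\in\lcal\cap\dom K_\lcal$ there exist $f_n\in\dcal$ with $f_n\to f$ and $\ipf K_\lcal f_n\to \ipf K_\lcal f$ in $\gnorm{\cdot}{}\pf$. Since $\ipf$ is a $\tau\pf$-isometry (it is the complex structure, unitary on $\kcal\pf$), $\ipf K_\lcal f_n\to \ipf K_\lcal f$ is equivalent to $K_\lcal f_n\to K_\lcal f$, so $f_n\to f$ in the $K_\lcal$-graph norm $\gnorm{\cdot}{K,\lcal}$. One subtlety to address: the domain of $\ipf K_\lcal\restriction\lcal$ should be understood as $\lcal\cap\dom K_\lcal$, so that ``core'' refers to a $\gnorm{\cdot}{K,\lcal}$-dense subspace of $\lcal\cap\dom K_\lcal$ on which $\ipf K_\lcal$ is essentially self-adjoint (as an operator on $\lcal$ in the appropriate sense); this is consistent with how $K_\lcal$ was set up on $\lcal_\std\pf$ via Tomita--Takesaki, and on the purely factorial/abelian pieces the generator is genuinely (skew-)self-adjoint. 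I would state this identification cleanly at the outset.

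Then the remaining step is: $\gnorm{f_n-f}{S,\lcal}\le C\,\gnorm{f_n-f}{K,\lcal}$ for a universal constant. This follows from Lemma~\ref{lem:entrform}(\ref{it:rbd}): $S_\lcal(h)=\hscalar{h}{R_\lcal h}\pf\le \hscalar{h}{c(K_\lcal)^2 h}\pf + (\gnorm{h}{}\pf)^2$, and since $c(\lambda)^2=\lambda/(1-e^{-\lambda})$ grows only linearly in $\lambda$ as $\lambda\to+\infty$ and is bounded as $\lambda\to-\infty$, one has $c(\lambda)^2\le a|\lambda|^2 + b$ for suitable constants (in fact $c(\lambda)^2\le 1+\lambda$ for $\lambda\ge0$ and $c(\lambda)^2\le1$ for $\lambda\le0$ suffices), hence $\hscalar{h}{c(K_\lcal)^2 h}\pf\le (\gnorm{K_\lcal h}{}\pf)^2+(\gnorm{h}{}\pf)^2\le(\gnorm{h}{K,\lcal})^2$ by functional calculus. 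Applying this with $h=f_n-f$ gives $\gnorm{f_n-f}{S,\lcal}\to0$, which is the claim.

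I do not expect a serious obstacle here; the only thing requiring care is the bookkeeping around what ``core for the generator'' means on $\lcal$ (as opposed to on $\lcal_\std\pf$ or $\kcal\pf$) and making sure the isometry property of $\ipf$ is invoked to convert between the $\ipf K_\lcal$-graph norm and the $K_\lcal$-graph norm. If one wanted to be maximally careful one could note that $\dcal\subset\lcal\cap\dom K_\lcal$ automatically since $\dcal$ lies in the domain of the generator, so the inclusion in Lemma~\ref{density64}'s sense is well-posed; but this is routine. In short, the proof is: core $\Rightarrow$ $\gnorm{\cdot}{K,\lcal}$-dense in $\lcal\cap\dom K_\lcal$ $\Rightarrow$ $\gnorm{\cdot}{S,\lcal}$-dense, via Lemma~\ref{lem:entrform}(\ref{it:rbd}).
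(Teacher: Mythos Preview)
Your proposal is correct and follows essentially the same route as the paper: both arguments observe that being a core for $\ipf K_\lcal\restriction\lcal$ means $\gnorm{\cdot}{K,\lcal}$-density in $\lcal\cap\dom K_\lcal$, and then use Lemma~\ref{lem:entrform}(\ref{it:rbd}) together with a functional-calculus bound on $c(\lambda)^2$ to conclude that the $K_\lcal$-graph norm dominates the $S_\lcal$-graph norm. The paper phrases the middle step via the equivalence of the norm induced by $c(K)^2+c(-K)^2$ with that induced by $1+|K|$, whereas you go directly with a one-sided bound $c(\lambda)^2\le a\lambda^2+b$; your route is slightly leaner (only the one-sided estimate is needed). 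One cosmetic point: your parenthetical bounds $c(\lambda)^2\le 1+\lambda$ ($\lambda\ge0$) and $c(\lambda)^2\le1$ ($\lambda\le0$) give $c(\lambda)^2\le 1+|\lambda|\le \tfrac{3}{2}(1+\lambda^2)$, so the final displayed inequality holds with a harmless constant rather than with constant~$1$---but this is exactly what your ``universal constant $C$'' already anticipated.
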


\begin{proof}
In this proof, we drop the index $\lcal$ on $K_\lcal$. Note that by functional calculus, the norm induced by the positive self-adjoint operator $c(K)^2+c(-K)^2$, where $\lambda\rightarrow c(\lambda)$ is as in Lemma \ref{lem:entrform},  is equivalent to the norm induced by $1+|K|$, while the graph norm of $\ipf K \restriction \lcal$ is induced by $1+K^2$ and thus stronger. Hence since $\dcal$ is a core for $\ipf K\restriction{\lcal}$, it is also dense in $\lcal \cap \dom K$ in the norm induced by $c(K)^2+c(-K)^2$ and consequently, by Lemma~\ref{lem:entrform}(\ref{it:rpos})--(\ref{it:rbd}), dense in the $S_\lcal$-graph norm.
\optqed{}\end{proof}

\section{Entropies for subspaces} \label{sec:subspaces}

We will now consider several subspaces $\lcal_t \subset \kcal$, and relations between the entropies related to them. To simplify notation, we will usually denote the related objects as $S_t$, $\Delta_t$, etc.\ rather than $S_{\lcal_t}$, $\Delta_{\lcal_t}$ etc.

\subsection{Two subspaces}

We begin with the relation between two subspaces, say, $\lcal_0$ and $\lcal_1$, and their associated entropy forms. Let us first mention:
\begin{lemma}\label{lemma:entropygrow}
   If $\lcal_0 \subset \lcal_1$, then $S_0(f,f) \leq S_1(f,f)$ for any $f \in \kcal$.
\end{lemma}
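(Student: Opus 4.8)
The statement is the monotonicity of the relative entropy under inclusion of algebras, transported to the single-particle level via Theorem~\ref{theorem:entroGen}. The plan is to reduce it to a known monotonicity property of the relative entropy on $C^\ast$-algebras and then invoke the lift established in the previous section.

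\textbf{Approach.} First I would observe that $\lcal_0 \subset \lcal_1$ (closed subspaces of $\kcal$) yields an inclusion of CCR algebras $\A_{\lcal_0} \subset \A_{\lcal_1} \subset \A_\kcal$, simply because every Weyl generator $W(f)$ with $f \in \lcal_0$ lies among the generators of $\A_{\lcal_1}$. The states $\omega$ and $\omega_g$ on $\A_{\lcal_1}$ restrict to the correspondingly named states on $\A_{\lcal_0}$ (both are given by the same formula on Weyl operators). Now I would apply the monotonicity of the relative entropy under restriction to a subalgebra (this is one of the fundamental properties recalled in Appendix~\ref{app:entro}; it holds for the Araki relative entropy on arbitrary $C^\ast$-algebras): for any two states $\varphi, \psi$,
\begin{equation}
  \relent{\varphi}{\psi}{\A_{\lcal_0}} \le \relent{\varphi}{\psi}{\A_{\lcal_1}}.
\end{equation}
Applying this with $\varphi = \omega_g$, $\psi = \omega$ gives $\relent{\omega_g}{\omega}{\A_{\lcal_0}} \le \relent{\omega_g}{\omega}{\A_{\lcal_1}}$.

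\textbf{Conclusion.} By Theorem~\ref{theorem:entroGen} (applied with $f = 0$), the left-hand side equals $S_0(g,g)$ and the right-hand side equals $S_1(g,g)$, for any $g \in \kcal \subset \kcal\pf$; hence $S_0(g,g) \le S_1(g,g)$. Note the inequality is to be read in $[0,\infty]$, so it is meaningful even when the forms are not everywhere defined; if $g \notin \dom S_0$ the left side is $+\infty$, which forces $g \notin \dom S_1$ as well, consistent with the statement. (One could also phrase this as: $\dom S_1 \subseteq \dom S_0$ with $S_0 \le S_1$ on the smaller domain — but since $g$ ranges over all of $\kcal$ the stated form already covers this.)

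\textbf{Main obstacle.} There is no serious obstacle; the only point requiring a little care is making sure that the abstract monotonicity of relative entropy quoted from the appendix applies in the generality needed (general $C^\ast$-algebras, possibly non-faithful states, values in $[0,\infty]$) — this is exactly Uhlmann's monotonicity / the standard property of the Araki relative entropy, and is stated among the fundamental properties in Appendix~\ref{app:entro}. If one instead wished to avoid invoking the $C^\ast$-algebraic result and argue purely at the single-particle level, one could try to compare the operators $R_{\lcal_0}$ and $R_{\lcal_1}$ directly via their modular data, but that comparison is not transparent (the modular operators $\Delta_{\lcal_0}$, $\Delta_{\lcal_1}$ are associated to different subspaces and need not commute), so the algebraic route is decisively cleaner and is the one I would take.
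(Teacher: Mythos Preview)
Your argument is correct and is exactly the paper's approach: invoke Theorem~\ref{theorem:entroGen} to identify $S_j(f,f)$ with $\relent{\omega_f}{\omega}{\A_{\lcal_j}}$ and then use the monotonicity of the relative entropy under restriction to a subalgebra (Lemma~\ref{lem:restrentropy}). Your additional remarks on the $[0,\infty]$-valued reading and on why a direct single-particle comparison of $R_{\lcal_0}$ and $R_{\lcal_1}$ would be awkward are accurate but not needed for the proof.
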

\begin{proof}
 This follows from Theorem~\ref{theorem:entroGen}, since the relative entropy is known to increase with the algebra considered (Lemma \ref{lem:restrentropy}).
\optqed{}\end{proof}

We now investigate the relation between the projector $Q_0$ (on the ``entropy relevant part'' for $S_0$) and the entropy $S_1$. Heuristically, we expect in relevant cases that
\begin{equation}\label{eq:sqortho}
   S_1( Q_0 f, g) = S_1(f, Q_0 g),
\end{equation}
in other words, that the projector $Q_0$ is ``orthogonal'' with respect to the bilinear form $S_1$. However, the relation \eqref{eq:sqortho} needs to be read with care, as in general neither domain nor image of $Q_0$ will consist only of vectors of finite entropy $S_1$. The precise version of our condition is given as:

\begin{definition}\label{def:dmp}
   Let $\lcal_0$ and $\lcal_1$ be closed subspaces of a symplectic Hilbert space ($\kcal$, $\tau$, $\sigma$). Define
   \begin{equation}\label{fundamdom}
      \dcal_{01}^+ :=  \img Q_0 \cap \dom S_1, \quad
      \dcal_{01}^- :=  \ker Q_0 \cap \dom S_1, \quad
      \dcal_{01} := \dcal_{01}^+ + \dcal_{01}^-.
   \end{equation}
   We say that the pair $(\lcal_0, \lcal_1)$ is \emph{in differential modular position} if
   \begin{enumerate}[(i)]
    \item \label{it:phidense} $\varphi_1 \dcal_{01} $ is dense in $\es_1$;
    \item \label{it:sortho} For all $f^\pm \in \dcal_{01}^\pm$, one has $S_1(f^+ ,f^-) = 0$. 
   \end{enumerate}
\end{definition}

This condition may seem restrictive, but it is in fact fulfilled in many relevant examples: in the conformal $U(1)$-current for half-line algebras, both in the vacuum (Example~\ref{ex:u1vac}) and in KMS states (Example~\ref{ex:u1kms}), for lightlike translated wedge algebras in the free massive field as in \cite{CLR:waveinfo}, as well as in certain abelian (Example~\ref{ex:abelian}) and finite-dimensional (Example~\ref{ex:fdstraight}) situations. Nevertheless, it is a nonempty condition (Example~\ref{ex:fdskew}). It may be seen as reminiscent of geometric modular action, as we shall see in Lemma~\ref{lemma:subsup} below.

If $(\lcal_0, \lcal_1)$ is in differential modular position, then we can define a projector $\bar Q_0$ on (a dense set of) $\es_1$ by
\begin{equation}
    \hscalar{\varphi_1 f}{\bar Q_0 \varphi_1 g}_1 = S_1( f, Q_0 g), \quad f,g \in \dcal_{01}.
\end{equation}
Because of item (\ref{it:sortho}) in the definition, this projector is actually \emph{real-orthogonal,} and hence extends uniquely to a bounded operator on all of $\es_1$.

The spaces $\dcal_{01}^\pm$ are somewhat difficult to explicitly describe in examples, we therefore give more directly applicable sufficient criteria for Def.~\ref{def:dmp}.

\begin{lemma}\label{lemma:subsup}
 Suppose that $\varphi_1 \dcal_{01}$ is dense in $\es_1$. Further suppose either
 \begin{enumerate}[(a)]
  \item \label{it:diff} $\lcal_0 \subset \lcal_1$, and the closure $i\pf \bar R_1$ of $i\pf R_1$ restricts to an $S_1$-graph-densely defined operator from $\dcal_{01}^+$ to $\lcal_0$; or  
  \item \label{it:difffact} $\lcal_0 \subset \lcal_1$, both $\lcal_0$ and $\lcal_1$ are purely factorial, and $i\pf K_1$ restricts to an $S_1$-graph-densely defined operator from $\dcal_{01}^+$ to $\lcal_0$; or  

 \item \label{it:subset} $\lcal_0 \subset \lcal_1$,  both $\lcal_0$ and $\lcal_1$ are purely factorial, and $U_1(x) \lcal_0 \subset \lcal_0$ for all $x\geq 0$ [or all $x \leq 0$]; or  

  \item \label{it:diffabel} $\lcal_0 \subset \lcal_1$, and both $\lcal_0$ and $\lcal_1$ are purely abelian; or  

  \item \label{it:supset} $\lcal_0 \supset \lcal_1$.  
 \end{enumerate}
 Then the pair $(\lcal_0, \lcal_1)$ is in differential modular position. In case (\ref{it:supset}), the associated projector $\bar Q_0$ is the identity.
\end{lemma}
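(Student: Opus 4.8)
The plan is to observe first that the standing hypothesis of the lemma is exactly item~(\ref{it:phidense}) of Definition~\ref{def:dmp}, so in every case it remains only to verify item~(\ref{it:sortho}): $S_1(f^+,f^-)=0$ for all $f^\pm\in\dcal_{01}^\pm$. Since $S_1$ is a positive closed form, Cauchy--Schwarz $|S_1(f^+,f^-)|\le\|f^+\|_1\,\|f^-\|_1$ together with the fact that the $S_1$-graph norm dominates $\|\cdot\|_1$ reduces this, in each case, to exhibiting an $S_1$-graph-dense subspace $\dcal\subseteq\dcal_{01}^+$ on which $S_1(f^+,f^-)=0$ for all $f^-\in\dcal_{01}^-$; the full statement then follows by passing to limits. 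The computational core for cases~(\ref{it:diff})--(\ref{it:subset}) is the identity, valid for $f^+\in\dom\bar R_1\supseteq\dom K_1$ and $f^-\in\dom S_1$ and obtained from the fact that $S_1$ is the form of the Friedrichs extension of $R_1$ (so $\bar R_1$ is contained in that extension), $S_1(f^+,f^-)=\tau\pf(f^-,\bar R_1 f^+)=\tau\pf\big(f^-,-\ipf(\ipf\bar R_1 f^+)\big)=\sigma\pf(f^-,\ipf\bar R_1 f^+)$, which vanishes whenever $\ipf\bar R_1 f^+\in\lcal_0$, because $f^-\in\dcal_{01}^-\subseteq\ker Q_0=\lcal_0'$.

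With this, case~(\ref{it:diff}) is immediate, its hypothesis being precisely the existence of such an $S_1$-graph-dense $\dcal$ with $\ipf\bar R_1\dcal\subseteq\lcal_0$. Case~(\ref{it:difffact}) I would reduce to~(\ref{it:diff}): when $\lcal_1$ is purely factorial the orthogonal projector $P_\fact$ of Lemma~\ref{lemma:pfmodular} acts as the identity on $\lcal_1\pf=\lcal_{1,\fact}\pf$, so the identity $\ipf R_1=P_\fact\,\ipf K_1$ from the proof of Theorem~\ref{theorem:entroGen} gives $\ipf\bar R_1 f^+=P_\fact(\ipf K_1 f^+)=\ipf K_1 f^+$ for $f^+\in\dom K_1$ with $\ipf K_1 f^+\in\lcal_0\subseteq\lcal_1\pf$; thus the hypothesis of~(\ref{it:difffact}) furnishes the dense domain demanded by~(\ref{it:diff}).

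The substance of the proof is case~(\ref{it:subset}), which I would derive from~(\ref{it:difffact}) by manufacturing the required domain out of the one-sided invariance $U_1(x)\lcal_0\subseteq\lcal_0$ for $x\ge0$ (the sign $x\le0$ being symmetric). Given $f\in\lcal_0\cap\dom S_1=\dcal_{01}^+$ and a smooth approximate identity $g_\varepsilon$ with $\operatorname{supp}g_\varepsilon\subseteq(0,\varepsilon)$, set $f_\varepsilon:=\int g_\varepsilon(x)\,U_1(x)f\,dx$. Then $f_\varepsilon\in\lcal_0$ because the integrand takes values in the closed subspace $\lcal_0$ on $\operatorname{supp}g_\varepsilon$, and $f_\varepsilon\in\dom K_1$ as a smeared vector, with $\ipf K_1 f_\varepsilon=\int g_\varepsilon'(x)\,U_1(x)f\,dx\in\lcal_0$ by integration by parts; moreover $f_\varepsilon\to f$ in the $S_1$-graph norm as $\varepsilon\to0$ by the $S_1$-graph continuity of $x\mapsto U_1(x)f$ from Lemma~\ref{lemma:graphcont}. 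Hence $\{f_\varepsilon\}$ is an $S_1$-graph-dense subspace of $\dcal_{01}^+$ with $\ipf K_1 f_\varepsilon\in\lcal_0$, which is the hypothesis of~(\ref{it:difffact}). I expect the main obstacle to be precisely this bookkeeping: verifying that the smeared vectors are genuinely $S_1$-graph dense (where Lemma~\ref{lemma:graphcont} is essential) and lie in $\dom K_1\subseteq\dom\bar R_1$, and, in~(\ref{it:diff}), keeping straight the interplay between the form domain $\dom S_1$ and the operator domain of $\bar R_1$.

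The last two cases are elementary. In case~(\ref{it:diffabel}), in the purely abelian situation one has $K_1=0$ and $R_1=1-J_1$, which equals twice the orthogonal projection onto $\ipf\lcal_1$ (and vanishes on $\lcal_{1,0}\pf$); hence $\dom S_1=\kcal\pf$, $\dcal_{01}^+=\img Q_0=\ipf\lcal_0$, $\dcal_{01}^-=\lcal_0'=(\ipf\lcal_0)^\perp$, and for $f^+\in\ipf\lcal_0\subseteq\ipf\lcal_1$ a short computation gives $S_1(f^+,f^-)=2\,\tau\pf(f^+,f^-)=0$. In case~(\ref{it:supset}), $\lcal_0\supseteq\lcal_1$ forces $\lcal_0'\subseteq\lcal_1'=\ker S_1$ by Lemma~\ref{lem:entrform}(\ref{it:rker}), so $\dcal_{01}^-\subseteq\ker S_1$ and~(\ref{it:sortho}) is trivial; moreover $g-Q_0 g\in\ker Q_0=\lcal_0'\subseteq\ker S_1$, so $S_1(f,Q_0 g)=S_1(f,g)$ for $f,g\in\dcal_{01}$, i.e.\ $\bar Q_0=\idop$ on the dense set $\varphi_1\dcal_{01}$ and hence on $\es_1$, which is the final assertion.
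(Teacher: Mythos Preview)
Your proof is correct and follows essentially the same route as the paper's: the key identity $S_1(f^-,f^+)=\sigma\pf(f^-,\ipf\bar R_1 f^+)$ for case~(\ref{it:diff}), the reduction of (\ref{it:difffact}) and (\ref{it:diffabel}) to (\ref{it:diff}), the mollification argument for (\ref{it:subset}), and the use of $\lcal_0'\subseteq\lcal_1'=\ker S_1$ for (\ref{it:supset}) all match the paper (which uses the simpler average $\epsilon^{-1}\!\int_0^\epsilon U_1(x)f^+\,dx$ in place of your smooth mollifier, and phrases (\ref{it:supset}) via monotonicity of the entropy rather than Lemma~\ref{lem:entrform}(\ref{it:rker}), but these are equivalent). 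One small slip: in case~(\ref{it:difffact}) you call $P_\fact$ ``orthogonal'' and say it acts as the identity on $\lcal_{1,\fact}\pf$---neither is true---but what you actually use, namely $P_\fact(\ipf K_1 f^+)=\ipf K_1 f^+$ because $\ipf K_1 f^+\in\lcal_0\subseteq(\lcal_1)_\fact=\img P_\fact$, is correct.
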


Part~(\ref{it:difffact}) motivates the wording ``differential modular'', since it refers to the generator of the modular group only.

\begin{proof} 
 For (\ref{it:diff}), it suffices to show (by the assumed $S_1$-graph density) that  $S_1(f^-,f^+)=0$ for all $f^+ \in \dom \bar R_1 \cap \dcal_{01}^+$ and all $f^-\in \dcal_{01}^-$. But for these, we can write 
 \begin{equation}
   S_1(f^-,f^+) = \tau\pf(f^-, \bar R_1 f^+) = \sigma\pf(f^-, \ipf \bar R_1 f^+)  = 0,
 \end{equation}
 since $\ipf \bar R_1 f^+ \in \lcal_0$ by assumption, and $f^- \in \ker Q_0 =\lcal_0'$.

 Items (\ref{it:difffact}) and (\ref{it:diffabel}) are special cases of (\ref{it:diff}): If $\lcal_0$ and $\lcal_1$ are purely factorial, then for $f^+\in \dom K_1 \cap \dcal_{01}^+ \subset (\lcal_0)_\fact \subset (\lcal_1)_\fact$ we have $\ipf \bar R_1 f^+ = P_{1,\fact} \ipf K_1 f^+$. By assumption, $\ipf K_1 f^+ \in \lcal_0 \cap (\lcal_1)_\fact$ and hence $P_{1,\fact} \ipf K_1 f^+  = \ipf K_1 f^+ \in \lcal_0$.
 Likewise, if $\lcal_0$ and $\lcal_1$ are purely abelian, then for $f^+\in \img Q_0 = \ipf (\lcal_0)_\abel \subset \ipf (\lcal_1)_\abel$ we have $\ipf R_1 f^+ = \ipf (1-P_{1,\abel}) f^+ = \ipf f^+ \in \lcal_0$.
 
 For (\ref{it:subset}), we consider the case $x\geq 0$, the other case being analogous; that is, $U_1$ is a strongly continous semigroup on $\operatorname{img} Q_0=(\lcal_0)_\fact$ with respect to the $\kcal$-norm, and on $\dcal_{01}^+$ with respect to the $S_1$-graph norm.
 Now for $f^+ \in \dcal_{01}^+$ and $\epsilon>0$, consider
  \begin{equation}\label{eq:fplusint}
     f_\epsilon^+ = \epsilon^{-1} \int_{0}^\epsilon U_1(x) f^+\;dx \in \lcal_0;
  \end{equation}
  then $f_\epsilon^+ \to f^+$ in $S_1$-graph norm as $\epsilon \to 0$ (Lemma~\ref{lemma:graphcont}). But also $f_\epsilon^+ \in \dom K_1$ \cite[Ch.~II Lemma~1.3]{EngelNagel:semigroups} and $i\pf K_1 f_\epsilon^+ \in \lcal_0$. Since vectors of the form \eqref{eq:fplusint} are a core for the generator on $\dcal_{01}^+$, 
  we can apply part~(\ref{it:difffact}).

  For (\ref{it:supset}), we use monotonicity of the entropy (Lemma~\ref{lemma:entropygrow}) to show for $f^- \in \dcal_{01}^- \subset \lcal_0'$,
  \begin{equation}
     0 \leq S_1(f^-,f^-) \leq S_0(f^-,f^-) = 0;
  \end{equation}
  hence by the Cauchy-Schwarz inequality for $S_1$, we have $S_1(f^-,f^+)=0$ for all $f^\pm \in \dcal_{01}^\pm$.
  Thus $(\lcal_0,\lcal_1)$ is in differential modular position, but also $\idop-\bar Q_0 = 0$, i.e., $\bar Q_0 = \idop$.
\optqed{}\end{proof}

\subsection{Families of subspaces}

Closer to the applications we have in mind, we now proceed to a family of subspaces, labelled by a real parameter $t$; in particular, we are interested in the situation where the subspaces increase with the parameter.

\begin{definition}\label{def:difffamily}
  A \emph{family of differential modular inclusions} is a family  $(\lcal_t)_{t \in \rbb}$ of closed subspaces of $\kcal$ which is increasing\footnote{We do not demand that it is \emph{strictly} increasing.} (i.e., $\lcal_s \subset \lcal_t$ for each $s \leq t$) and where each pair $(\lcal_s,\lcal_t)$ ($s,t\in\rbb$) is in differential modular position (Definition \ref{def:dmp}).
\end{definition}

We will show later (Sec.~\ref{sec:halfsided}) that the usual notion of (single-particle) half-sided modular inclusions \cite{Bor:ttt,LonNotes} is a special case of Def.~\ref{def:difffamily}. However, the notion of differential modular inclusions is more general: It also applies to other situations where the modular group acts geometrically (Example~\ref{ex:u1kms}) or where the space $\lcal_t$ takes discrete steps (Example~\ref{ex:fdstraight}). Also, notice that Def.~\ref{def:difffamily} is invariant under monotonous reparametrizations of the parameter $t$, whereas half-sided modular inclusions are not.

We note that for $t\leq \hat t$, Lemma~\ref{lemma:entropygrow} gives us a canonical map $\rho_{t \hat t}: \es_{\hat t} \to \es_t$ which fulfills $\varphi_t = \rho_{t \hat t}\circ \varphi_{\hat t}$, and $\|\rho_{t \hat t}\| \leq 1$.
With respect to this inclusion map, we can now formulate some compatibility properties for the projectors $\bar Q_s$ on $\es_t$.

\begin{lemma}\label{lemma:qcompat}
   If $(\lcal_t)_{t \in \rbb}$ is a family of differential modular inclusions, then: 
   \begin{enumerate}[(a)]
    \item \label{it:sdiff} 
          For $s \leq \hat s$ and any $t$, the projectors $\bar Q_s$ and $\bar Q_{\hat s}$ on $\es_t$ fulfil $\bar Q_s \leq \bar Q_{\hat s}$.
    \item \label{it:tdiff} 
         For any $s$ and $t\leq \hat t$, let $\bar Q_s$ be the extension of $Q_s$ to $\es_t$ and $\hat Q_s$ the corresponding extension to $\es_{\hat t}$. We have $\bar Q_s \circ \rho_{t\hat t} = \rho_{t\hat t} \circ \hat Q_{s} $.
   \end{enumerate}
\end{lemma}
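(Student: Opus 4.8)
The plan is to verify the two compatibility identities directly from the defining relations of the extended projectors $\bar Q_s$ together with the monotonicity map $\rho_{t\hat t}$, reducing everything to equalities of the bilinear forms $S_t$, $S_{\hat t}$ on the common dense domains $\dcal_{s t}$, $\dcal_{s \hat t}$. The key structural fact I would exploit throughout is that $\bar Q_s$ is the continuous extension of the single-particle projector $Q_s$, characterized by $\hscalar{\varphi_t f}{\bar Q_s \varphi_t g}_t = S_t(f, Q_s g)$ for $f,g \in \dcal_{st}$, and that $\varphi_t = \rho_{t\hat t}\circ\varphi_{\hat t}$ with $\rho_{t\hat t}$ a norm-decreasing map satisfying $\hscalar{\rho_{t\hat t}\varphi_{\hat t}f}{\rho_{t\hat t}\varphi_{\hat t}g}_t = S_t(f,g)$ for $f,g\in\dom S_{\hat t}\subseteq\dom S_t$ (the inclusion of domains being part of Lemma~\ref{lemma:entropygrow} applied via Theorem~\ref{theorem:entroGen}).

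For part~(\ref{it:sdiff}), since $Q_s$ and $Q_{\hat s}$ are both closed real-linear idempotents with $Q_s Q_{\hat s} = Q_{\hat s} Q_s = Q_s$ at single-particle level — this holds because $\lcal_s\subset\lcal_{\hat s}$ forces $\img Q_s\subset\img Q_{\hat s}$ (the ``entropy-relevant'' subspace grows) and $\ker Q_{\hat s}=\lcal_{\hat s}'\subset\lcal_s'=\ker Q_s$ — I would first establish the analogous relation $\bar Q_s\bar Q_{\hat s} = \bar Q_{\hat s}\bar Q_s = \bar Q_s$ for the extensions on $\es_t$ by checking it on the dense set $\varphi_t(\dcal_{st}\cap\dcal_{\hat s t})$ and using continuity. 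Given that both $\bar Q_s$ and $\bar Q_{\hat s}$ are real-orthogonal projectors (as noted after Def.~\ref{def:dmp}), the operator identity $\bar Q_s\bar Q_{\hat s}=\bar Q_s$ between orthogonal projectors is exactly the statement $\bar Q_s\le\bar Q_{\hat s}$. A subtlety here is making sure $\dcal_{st}\cap\dcal_{\hat s t}$ — or more precisely a subspace on which both $Q_s$ and $Q_{\hat s}$ act and whose $\varphi_t$-image is dense — is actually available; I expect this follows by intersecting with the spectral cutoff domain $\dcut$ of Lemma~\ref{lemma:spectralcut} for the relevant modular operators, as in the proof of Lemma~\ref{density64}.

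For part~(\ref{it:tdiff}), I would test the claimed identity $\bar Q_s\circ\rho_{t\hat t} = \rho_{t\hat t}\circ\hat Q_s$ on vectors $\varphi_{\hat t}g$ with $g$ in a suitable dense domain and pair against $\varphi_t f$. On one side, $\hscalar{\varphi_t f}{\bar Q_s\rho_{t\hat t}\varphi_{\hat t}g}_t = \hscalar{\varphi_t f}{\bar Q_s\varphi_t g}_t = S_t(f, Q_s g)$ for $f,g$ in the appropriate domain; on the other side, $\hscalar{\varphi_t f}{\rho_{t\hat t}\hat Q_s\varphi_{\hat t}g}_t = S_t(f, Q_s g)$ as well, using $\rho_{t\hat t}\varphi_{\hat t} = \varphi_t$ and the defining property of $\hat Q_s$ together with $S_t(f,h) = \hscalar{\varphi_t f}{\varphi_t h}_t = \hscalar{\rho_{t\hat t}\varphi_{\hat t}f}{\rho_{t\hat t}\varphi_{\hat t}h}_t$ whenever $h\in\dom S_{\hat t}$. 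The point needing care is that $Q_s$ is a single-particle operator that does not depend on $t$ or $\hat t$ — only its extension does — so the same vector $Q_s g$ appears on both sides provided $g\in\dom Q_s\cap\dom S_{\hat t}$, and this set has $\varphi_{\hat t}$-image dense in $\es_{\hat t}$ by Def.~\ref{def:dmp}(\ref{it:phidense}) intersected with the spectral cutoff as before. Concluding then requires that $\rho_{t\hat t}$ has dense range (true since $\varphi_t(\dom S_{\hat t})\subseteq\rho_{t\hat t}\es_{\hat t}$ and $\dom S_{\hat t}$ is $S_t$-dense in $\dom S_t$, which follows from Lemma~\ref{lemma:graphcont}-type arguments or directly from monotone convergence of the cutoffs).

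The main obstacle I anticipate is purely a domain-bookkeeping one: ensuring at each step that there is a single linear subspace of $\kcal\pf$ on which $Q_s$ (or $Q_s$ and $Q_{\hat s}$ simultaneously) is defined, which lies in $\dom S_t$ (resp.\ $\dom S_{\hat t}$), and whose image under the relevant $\varphi$ is dense in the relevant $\es$. The differential modular position hypotheses give density of $\varphi\dcal$ but not obviously of the intersection with a common core; I expect the resolution is exactly Lemma~\ref{lemma:spectralcut} together with the observation (as in Lemma~\ref{density64}) that the spectral cutoffs $Q^{(\varepsilon)}$ commute with $Q_s$ and converge to the identity in $S$-graph norm, so that $\bigcup_\varepsilon Q^{(\varepsilon)}\dcal_{st}$ serves as the desired common core. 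Once the domains are pinned down, both identities are short computations with bilinear forms.
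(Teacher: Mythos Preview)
Your argument for part~(\ref{it:tdiff}) is essentially the paper's, though you make it harder than necessary: since $\dom S_{\hat t}\subset\dom S_t$ one has $\dcal_{s\hat t}\subset\dcal_{st}$, so for $g\in\dcal_{s\hat t}$ both sides are literally the vector $\varphi_t Q_s g$; density of $\varphi_{\hat t}\dcal_{s\hat t}$ in $\es_{\hat t}$ (Definition~\ref{def:dmp}(\ref{it:phidense})) then finishes the proof, with no pairing against test vectors, no spectral cutoffs, and no dense-range property of $\rho_{t\hat t}$ required.

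Part~(\ref{it:sdiff}), however, has a genuine gap. Your single-particle identity $Q_sQ_{\hat s}=Q_{\hat s}Q_s=Q_s$ rests on the claim $\img Q_s\subset\img Q_{\hat s}$, which you justify only heuristically (``the entropy-relevant subspace grows''). This claim is not established in the paper and is in fact false in general: recall from \eqref{def:cutproj} that $\img Q_\lcal=\ipf\lcal_\abel\oplus\lcal_\fact\oplus\lcal_\infty\pf$, which depends on the \emph{decomposition} of $\kcal\pf$ relative to $\lcal$, not just on $\lcal$ itself. If $\lcal_s$ is purely abelian and $\lcal_{\hat s}\supset\lcal_s$ purely factorial, then $\img Q_s=\ipf\lcal_s\subset\ipf\lcal_{\hat s}$, while $\img Q_{\hat s}=\lcal_{\hat s}$; but $\lcal_{\hat s}\cap\ipf\lcal_{\hat s}=\{0\}$ by factoriality, so $\img Q_s\cap\img Q_{\hat s}=\{0\}$. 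Your proposed rescue via spectral cutoffs does not help here: the projectors $Q^{(\varepsilon)}$ of Lemma~\ref{lemma:spectralcut} are tied to a \emph{fixed} modular operator $\Delta_\lcal$, and there is no reason the cutoff for $\lcal_t$ (or $\lcal_s$) should commute with $Q_{\hat s}$, nor that $Q_{\hat s}g$ lands in $\dom Q_s$.

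The paper circumvents all of this by passing to the complementary projectors. The kernel inclusion $\ker Q_{\hat s}=\lcal_{\hat s}'\subset\lcal_s'=\ker Q_s$ \emph{does} always hold, giving $\dcal_{\hat s t}^-\subset\dcal_{st}^-$. Then for $f=f^++f^-\in\dcal_{\hat s t}$ one computes directly
\[
(1-\bar Q_{\hat s})\varphi_t f=\varphi_t f^-=\varphi_t(1-Q_s)f^-=(1-\bar Q_s)\varphi_t f^-=(1-\bar Q_s)(1-\bar Q_{\hat s})\varphi_t f,
\]
using only that $f^-\in\dcal_{st}^-$ at the third step; density of $\varphi_t\dcal_{\hat s t}$ then yields $(1-\bar Q_s)(1-\bar Q_{\hat s})=1-\bar Q_{\hat s}$, which for orthogonal projectors is equivalent to $\bar Q_s\le\bar Q_{\hat s}$. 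No image comparison and no auxiliary cores are needed.
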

(Because of the last property, we will not indicate the dependence of $\bar Q_s$ on the extension space $\es_t$ beyond the proof of this lemma.)

\begin{proof}
 For (\ref{it:sdiff}), note first that $\ker Q_{s} = \lcal'_{s}$ and similarly for $\hat s$. Since $\lcal_s \subset \lcal_{\hat s}$, this yields
 \begin{equation}
    \dcal_{\hat s,t}^- \subset \dcal_{s,t}^- \, ,
    \quad \text{and }  
    (1-Q_s) f^- = f^- \quad \text{for all } f^- \in \dcal_{\hat st}^-.
 \end{equation}
 Now for any $f = f^++ f^- \in \dcal_{\hat s t}$, compute
 \begin{equation}
 \begin{aligned}
    (1-\bar Q_{\hat s})\varphi_t f &= \varphi_t (1-Q_{\hat s}) f = \varphi_t f^- = \varphi_t (1-Q_s) f^- = (1-\bar Q_s)  \varphi_t f^-
    \\&= (1-\bar Q_s) \varphi_t (1-Q_{\hat s}) f = (1-\bar Q_s)(1-\bar Q_{\hat s})\varphi_t f.
 \end{aligned}
 \end{equation}
 By density of $\varphi_t \dcal_{\hat s t}$ in $\es_t$, we obtain (using orthogonality of the projectors),
 \begin{equation}
   (1-\bar Q_s)(1-\bar Q_{\hat s}) = (1-\bar Q_{\hat s}) \quad \Rightarrow \quad Q_s \leq Q_{\hat s}. 
 \end{equation}

 Regarding (\ref{it:tdiff}): For $f\in \dcal_{s \hat t} \subset \dcal_{st}$, we compute 
 \begin{equation}
    \bar Q_s \rho_{t \hat t} \varphi_{\hat t} f = \bar Q_s \varphi_{t} f = \varphi_t Q_s f = \rho_{t \hat t} \varphi_{\hat t} Q_s f = \rho_{t \hat t} \hat Q_s \varphi_{\hat t} f.   
 \end{equation}
 By density of $\varphi_{\hat t} \dcal_{s \hat t}$ in $\es_{\hat t}$, we conclude $\bar Q_s \circ \rho_{t \hat t} = \rho_{t\hat t} \circ \hat Q_{s} $.
\optqed{}\end{proof}

\subsection{Derivatives of the entropy}

For a family of differential modular inclusions, we now investigate how the entropy $S_t(f,f)$ of a given vector $f$ depends on the parameter $t$; here we take $f\in \bar\dcal:=\cap_{t\in\rbb} \dom S_t$. Actually, in order to study the ``bulk'' vs. ``boundary'' terms mentioned in the introduction, we consider the function $T_f:\rbb^2 \to \rbb$ given by
\begin{equation}\label{deft}
  T_f(s,t) := \hscalar{ \varphi_t f}{\bar Q_s\varphi_t f}_t = \| \bar Q_s \varphi_t f\|_t^2;
\end{equation}
we have $T_f(t,t)=S_t(f)$, and we aim at estimates for $d^2 S_t/dt^2$ in terms of the partial derivatives of $T_f$, which will in general exist only in the sense of distributions.

Crucial to this analysis are certain monotonicity properties of $T_f$, in particular on the cones $\ccal_\pm:=\{ (s,t) \in \rbb^2: \pm(s-t) \geq 0 \}$.

\begin{lemma}\label{lemma:tfprop}
For any $f \in \bar\dcal$, the function $T_f$ enjoys the following properties:
\begin{enumerate}[(a)]
 \item \label{it:tmonotone} It is increasing in $t$ everywhere;
 \item \label{it:smonotone} It is increasing in $s$ everywhere, and constant in $s$ on $\ccal_+$;
 \item \label{it:diagmonotone} Along the diagonal, it is increasing, i.e., $T_f(t,t)$ is increasing in $t$.
 \item \label{it:mixed}  One has the ``mixed monotonicity'' estimate
 \begin{equation}\label{eq:mixedmon}
    \forall s<\hat s, \; t< \hat t:
    \quad
    T_f(\hat s,\hat t) - T_f(\hat s, t) - T_f(s, \hat t) + T_f(s,t) \geq 0. 
 \end{equation}
\end{enumerate}
 
\end{lemma}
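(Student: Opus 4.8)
The plan is to obtain all four items purely from the two compatibility properties of the projectors established in Lemma~\ref{lemma:qcompat}, the contractivity $\|\rho_{t\hat t}\|\le 1$ of the inclusion maps, the projector ordering, and the identity $T_f(t,t)=S_t(f)$ noted after \eqref{deft}. Throughout I would write $\bar Q_s^{(t)}$ for the real-orthogonal projector on $\es_t$ extending $Q_s$; this exists for every pair $(s,t)$ since, the family being one of differential modular inclusions, every pair $(\lcal_s,\lcal_t)$ is in differential modular position.

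For (\ref{it:tmonotone}) I would fix $s$ and, for $t\le\hat t$, use $\varphi_t=\rho_{t\hat t}\circ\varphi_{\hat t}$ together with Lemma~\ref{lemma:qcompat}(\ref{it:tdiff}) to get $\bar Q_s^{(t)}\varphi_t f=\rho_{t\hat t}\,\bar Q_s^{(\hat t)}\varphi_{\hat t}f$, whence $T_f(s,t)=\|\rho_{t\hat t}\bar Q_s^{(\hat t)}\varphi_{\hat t}f\|_t^2\le\|\bar Q_s^{(\hat t)}\varphi_{\hat t}f\|_{\hat t}^2=T_f(s,\hat t)$ by contractivity. The monotonicity part of (\ref{it:smonotone}) is immediate from $\bar Q_s^{(t)}\le\bar Q_{\hat s}^{(t)}$ for $s\le\hat s$ (Lemma~\ref{lemma:qcompat}(\ref{it:sdiff})); the constancy on $\ccal_+$ follows since $(s,t)\in\ccal_+$ means $s\ge t$, hence $\lcal_t\subset\lcal_s$, so Lemma~\ref{lemma:subsup}(\ref{it:supset}) applies to $(\lcal_s,\lcal_t)$ and gives $\bar Q_s^{(t)}=\idop$, i.e.\ $T_f(s,t)=\|\varphi_t f\|_t^2=S_t(f)$, independently of $s$. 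Finally (\ref{it:diagmonotone}) follows by chaining: for $s\le t$, $T_f(s,s)\le T_f(s,t)\le T_f(t,t)$ by (\ref{it:tmonotone}) and (\ref{it:smonotone}); equivalently, $T_f(t,t)=S_t(f)$ is increasing by Lemma~\ref{lemma:entropygrow}.

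The substantive point is (\ref{it:mixed}). I would fix $s<\hat s$ and set $h(t'):=T_f(\hat s,t')-T_f(s,t')$, so the left-hand side of \eqref{eq:mixedmon} is $h(\hat t)-h(t)$ and it suffices to show $h$ is increasing. The key step is to rewrite $h$ as a single squared projection norm: since $\bar Q_s^{(t')}\le\bar Q_{\hat s}^{(t')}$ are commuting real-orthogonal projectors on $\es_{t'}$, their difference $R_{t'}:=\bar Q_{\hat s}^{(t')}-\bar Q_s^{(t')}$ is again a real-orthogonal projector, and the Pythagorean identity gives $h(t')=\hscalar{\varphi_{t'}f}{R_{t'}\varphi_{t'}f}_{t'}=\|R_{t'}\varphi_{t'}f\|_{t'}^2$. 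Subtracting the two instances of Lemma~\ref{lemma:qcompat}(\ref{it:tdiff}) for $\bar Q_{\hat s}$ and for $\bar Q_s$ yields, for $t\le\hat t$, the intertwining relation $R_t\,\rho_{t\hat t}=\rho_{t\hat t}\,R_{\hat t}$; applied to $\varphi_{\hat t}f$ and using $\rho_{t\hat t}\varphi_{\hat t}f=\varphi_t f$ this gives $R_t\varphi_t f=\rho_{t\hat t}R_{\hat t}\varphi_{\hat t}f$, so contractivity of $\rho_{t\hat t}$ finishes: $h(t)=\|R_t\varphi_t f\|_t^2\le\|R_{\hat t}\varphi_{\hat t}f\|_{\hat t}^2=h(\hat t)$.

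I expect (\ref{it:mixed}) to be the only real obstacle. Expanding the four terms of \eqref{eq:mixedmon} directly produces a four-term operator expression built from $\bar Q_s$, $\bar Q_{\hat s}$ and $\rho_{t\hat t}^{\ast}\rho_{t\hat t}$ whose nonnegativity is not manifest; the argument only runs smoothly after one repackages $T_f(\hat s,\cdot)-T_f(s,\cdot)$ into the single projector $R_{t'}$ (which needs the ordering in Lemma~\ref{lemma:qcompat}(\ref{it:sdiff})) and then exploits that $R_{t'}$ intertwines with the inclusion maps. Everything else reduces to routine manipulation with the projector ordering, the intertwining of Lemma~\ref{lemma:qcompat}(\ref{it:tdiff}), and $\|\rho_{t\hat t}\|\le 1$.
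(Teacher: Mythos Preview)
Your proof is correct and follows essentially the same approach as the paper: items (\ref{it:tmonotone})--(\ref{it:diagmonotone}) are handled identically, and for (\ref{it:mixed}) the paper likewise rewrites the four-term expression as $\|(\bar Q_{\hat s}-\bar Q_s)\varphi_{\hat t}f\|_{\hat t}^2-\|\rho_{t\hat t}(\bar Q_{\hat s}-\bar Q_s)\varphi_{\hat t}f\|_t^2$ and concludes by contractivity of $\rho_{t\hat t}$, which is exactly your $h(\hat t)-h(t)$ with $R_{t'}=\bar Q_{\hat s}^{(t')}-\bar Q_s^{(t')}$.
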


\begin{proof}
 For item (\ref{it:tmonotone}), observe for $t \leq \hat t$ that 
 \begin{equation}
  T_f(s,t) = \| \bar Q_s\varphi_t  f \|_t^2 = \|  \rho_{t\hat t} \bar Q_s \varphi_{\hat t}  f \|_t^2 
 \leq \| \bar Q_s\varphi_{\hat t}  f \|_{\hat t}^2 \leq T_f(s,\hat t), 
 \end{equation}
 where Lemma~\ref{lemma:qcompat}(b) and $\|\rho_{t\hat t}\| \leq 1$ have been used.   
 Item (\ref{it:smonotone}) follows similarly from Lemma~\ref{lemma:qcompat}(a), along with $\bar Q_s = \bar Q_t=\idop$ for $s \geq t$ (Lemma~\ref{lemma:subsup}
(\ref{it:supset})).
 Item (\ref{it:diagmonotone}) is a consequence of (\ref{it:tmonotone}) and (\ref{it:smonotone}).
 For item (\ref{it:mixed}), one rewrites using Lemma~\ref{lemma:qcompat},
 \begin{equation}
    T_f( \hat s, \hat t) - T_f(\hat s, t) - T_f(s,\hat t) + T_f(s,t) 
    = \|(\bar Q_{\hat s} - \bar Q_{s}) \varphi_{\hat t} f \|^2_{\hat t} -  \| \rho_{t\hat t}(\bar Q_{\hat s} - \bar Q_{s}) \varphi_{\hat t} f \|^2_{t}  ,
 \end{equation}
 which is nonnegative since $\|\rho_{t\hat t}\|\leq 1$.
\optqed{}\end{proof}

Item (\ref{it:diagmonotone}) above implies $dS_t(f)/dt \geq 0$, at least in the sense of distributions. For $d^2S_t(f)/dt^2$, we will derive estimates stemming from item (\ref{it:mixed}).
For simplicity, let us first assume that $T_f$ is smooth outside the diagonal $s=t$, and at least $C^1$ at the diagonal. (Smoothness overall does not even occur in otherwise well-behaved examples, such as Example~\ref{ex:u1vac}).
\begin{proposition}\label{prop:secondderivsmooth}
 Let $f \in \bar\dcal$. Suppose that $T_f$ is of class $C^1$, and that there are functions $\hat T_\pm \in C^2(\rbb^2)$  such that $T_f\restrict \ccal_\pm = \hat T_\pm \restrict \ccal_\pm$. 
 Then
\begin{equation}\label{eq:secondderest}
   \lim_{\epsilon \searrow 0} \frac{\partial^2 T_f}{ \partial t^2} (t-\epsilon, t) \leq \frac{d^2 S_t(f)}{dt^2} = \lim_{\epsilon \searrow 0} \frac{\partial^2 T_f}{ \partial t^2} (t+\epsilon, t).
\end{equation}
\end{proposition}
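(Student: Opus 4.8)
The plan is to reduce the identity \eqref{eq:secondderest} to the one-variable function $t \mapsto S_t(f) = T_f(t,t)$ by carefully combining the smoothness hypotheses on $\hat T_\pm$ with the monotonicity properties of Lemma~\ref{lemma:tfprop}. Write $\phi(t) := T_f(t,t) = \hat T_+(t,t) = \hat T_-(t,t)$, which is $C^1$ by assumption; the claim is that $\phi$ is actually twice differentiable with $\phi''(t)$ equal to the right-hand limit $\lim_{\epsilon \searrow 0} \partial_t^2 T_f(t+\epsilon,t)$, and that this limit dominates the left-hand limit. First I would record that by Lemma~\ref{lemma:tfprop}(\ref{it:smonotone}), $T_f$ is constant in $s$ on $\ccal_+$, so on that cone $T_f(s,t) = T_f(t,t) = \phi(t)$; hence $\hat T_+(s,t) = \phi(t)$ identically on $\ccal_+$, and since $\hat T_+ \in C^2(\rbb^2)$ this forces $\partial_s \hat T_+ = 0$, $\partial_s^2 \hat T_+ = 0$, $\partial_s \partial_t \hat T_+ = 0$ on the \emph{closed} cone $\ccal_+$ (the derivatives of a $C^2$ function that vanishes on an open half-space vanish on its closure by continuity). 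In particular $\partial_t^2 \hat T_+(t,t) = \phi''(t)$ exists as an ordinary second derivative of $\hat T_+$ restricted to the diagonal, and equals the right-hand limit in \eqref{eq:secondderest}.

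Next I would handle the diagonal $C^1$-matching to get the left inequality. Since $T_f$ is globally $C^1$ and agrees with $\hat T_-$ on $\ccal_-$ and with $\hat T_+$ on $\ccal_+$, the first-order Taylor data of $\hat T_\pm$ match along the diagonal: $\nabla \hat T_-(t,t) = \nabla \hat T_+(t,t) = (0, \phi'(t))$, using the previous step for the $s$-component. Now apply the mixed-monotonicity estimate \eqref{eq:mixedmon}: for $h>0$ set $s = t$, $\hat s = t+h$, and compare $t$ with $\hat t = t+h$; after using $s$-constancy on $\ccal_+$ to simplify the terms, \eqref{eq:mixedmon} becomes a statement purely about second differences of $\phi$ versus mixed second differences of $\hat T_-$. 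Concretely I expect to extract, in the limit $h \searrow 0$, the inequality $\partial_t^2 \hat T_-(t,t) \le \phi''(t)$, i.e.\ the left-hand limit of \eqref{eq:secondderest} is $\le d^2 S_t(f)/dt^2$; the sign works out because the lost term is $\|\rho_{t\hat t}(\bar Q_{\hat s} - \bar Q_s)\varphi_{\hat t} f\|_t^2 \ge 0$ in the proof of Lemma~\ref{lemma:tfprop}(\ref{it:mixed}). Throughout, $d^2 S_t(f)/dt^2 = \phi''(t)$ makes sense as an ordinary derivative once we know $\phi'$ is differentiable, which follows from $\phi'(t) = \partial_t \hat T_+(t,t)$ and $\partial_t \hat T_+ \in C^1$.

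The main obstacle I anticipate is the bookkeeping at the diagonal: $T_f$ is only $C^1$ there, so one cannot simply differentiate $T_f(t,t)$ twice using the chain rule across the two pieces, and one must be careful that ``the'' second derivative $d^2 S_t(f)/dt^2$ in \eqref{eq:secondderest} is interpreted as $\lim_{\epsilon\searrow 0}\partial_t^2 \hat T_+(t+\epsilon,t)$ — i.e.\ the statement is simultaneously asserting existence of this limit, equality with $\phi''$, and the lower bound by the $\hat T_-$ side. I would therefore organize the argument so that the $\hat T_+$ computation (giving both the equality and the existence of $\phi''$) comes first, and the $\hat T_-$ comparison via \eqref{eq:mixedmon} comes second, each reduced to elementary calculus on $C^2$ functions of one or two variables. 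A secondary technical point is justifying that the $s$-derivatives of the $C^2$ extension $\hat T_+$ vanish up to second order on all of $\ccal_+$ including its boundary diagonal; this is where one uses that $\hat T_+$ is $C^2$ on the \emph{whole} plane, not merely on the closed cone, so that continuity of the second partials propagates the vanishing from the open cone to its boundary.
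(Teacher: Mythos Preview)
Your plan is correct and follows essentially the same route as the paper: both arguments hinge on the observation that $\partial_s T_f\vert_{s=t}=0$ (from $s$-constancy on $\ccal_+$ and the $C^1$ hypothesis), then combine the chain rule for $\phi(t)=\hat T_\pm(t,t)$ with sign information on $\ccal_-$. The paper differentiates the relation $\partial_s \hat T_\pm(t,t)=0$ along the diagonal to get $\partial_s^2+\partial_s\partial_t=0$ there, whence $\phi''=\partial_t^2\hat T_\pm+\partial_s\partial_t\hat T_\pm$ on the diagonal; the equality then comes from $\partial_s\partial_t\hat T_+=0$ and the inequality from $\partial_s\partial_t\hat T_-\geq 0$ (mixed monotonicity, Lemma~\ref{lemma:tfprop}(\ref{it:mixed})).

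One small remark on your $\ccal_-$ step: with your choice $s=t$, $\hat s=\hat t=t+h$, the inequality \eqref{eq:mixedmon} collapses (after $s$-constancy on $\ccal_+$) to $\phi(t+h)\geq \hat T_-(t,t+h)$, which is just $s$-monotonicity (Lemma~\ref{lemma:tfprop}(\ref{it:smonotone})), not genuinely mixed monotonicity. That is still enough: with $g(h):=\phi(t+h)-\hat T_-(t,t+h)$ one has $g(0)=g'(0)=0$ (using $\partial_s\hat T_-(t,t)=0$) and $g(h)\geq 0$ for $h>0$, hence $g''(0)=\phi''(t)-\partial_t^2\hat T_-(t,t)\geq 0$. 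So your Taylor argument goes through, but it actually uses only item~(\ref{it:smonotone}); the paper's variant isolates the gap as $\partial_s\partial_t\hat T_-(t,t)\geq 0$, which is where item~(\ref{it:mixed}) enters.
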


\begin{proof}
 Since $T_f$ is $C^1$, Lemma~\ref{lemma:tfprop}(\ref{it:smonotone}) implies 
%\begin{equation}
  $ \partial T_f / {\partial s} \vert_{s=t} = 0$,
 %\end{equation}
 and we can differentiate this relation along the diagonal to yield
 \begin{equation}\label{eq:diagzero2}
   \frac{\partial^2 \hat T_\pm}{\partial s^2}\Big\vert_{s=t} + \frac{\partial^2 \hat T_\pm}{\partial s \partial t}\Big\vert_{s=t} = 0.
 \end{equation}
 On the other hand, $S_t=\hat T_\pm(t,t)$, which yields together with \eqref{eq:diagzero2},
 \begin{equation}
   \frac{d^2S_t}{dt^2} = \frac{\partial^2 \hat T_\pm}{\partial t^2}\Big\vert_{s=t} + \frac{\partial^2 \hat T_\pm}{\partial s \partial t}\Big\vert_{s=t}
   = \lim_{\epsilon\searrow 0} \Big(\frac{\partial^2  T_f}{\partial t^2}(t\pm\epsilon,t) + \frac{\partial^2  T_f}{\partial s\partial t}(t\pm\epsilon,t) \Big). 
 \end{equation}
 Now  $\partial^2  T_f/ \partial s\partial t$ vanishes on $\ccal_+$ by Lemma~\ref{lemma:tfprop}(\ref{it:smonotone}), and is nonnegative on $\ccal_-$ by Lemma~\ref{lemma:tfprop}(\ref{it:mixed}); hence the result follows.
\optqed{}\end{proof}

In other words, $d^2S_f/dt^2$ is bounded above und below by a ``bulk term'' (determined by the change of modular data in $R_t$), but the lower bound may allow for a positive ``boundary term'' (involving also a change in $\bar Q_s$). 

It is instructive to look at estimate \eqref{eq:secondderest} in specific examples. In certain situations, in particular in the conformal $U(1)$-current in the vacuum (Example~\ref{ex:u1vac}), one has $\partial^2 T_f / \partial t^2=0$ on $\ccal_-$, and the only contribution to $d^2S/dt^2$ is the ``boundary term'' $\partial^2 T_f/\partial s \partial t \geq 0$. Thus \eqref{eq:secondderest} implies convexity of the entropy in $t$ in this case. However, in other situation, such as thermal states on the conformal $U(1)$-current (Example~\ref{ex:u1kms}), or even when just reparametrizing a half-sided modular inclusion (Example~\ref{ex:u1rescale}), the ``bulk term'' $\partial^2 T_f / \partial t^2$ need not vanish, and indeed can take any sign. Thus $S_t(f)$ need not be convex in $t$, while the estimate \eqref{eq:secondderest} still holds.

We now want to establish a generalization of Prop.~\ref{prop:secondderivsmooth} without smoothness assumptions on $T_f$.
In preparation, we first prove:

\begin{lemma}\label{lemma:scont}
   For almost every $t \in \rbb$, the function $T_f$ is continuous at the point $(t,t)$.
\end{lemma}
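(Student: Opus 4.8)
The plan is to deduce joint continuity of $T_f$ at diagonal points purely from the monotonicity properties already established, by a two-sided squeeze against the values on the diagonal. First I would record that, by Lemma~\ref{lemma:tfprop}(\ref{it:tmonotone})--(\ref{it:smonotone}), $T_f$ is nondecreasing with respect to the coordinatewise (product) partial order on $\rbb^2$: for $s\le s'$ and $t\le t'$ one has $T_f(s,t)\le T_f(s',t)\le T_f(s',t')$. Moreover $T_f$ takes finite nonnegative values everywhere, since $f\in\bar\dcal$ guarantees that $\varphi_t f\in\es_t$ is defined for every $t$ and $\bar Q_s$ is a bounded operator on $\es_t$.

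Next I would isolate the exceptional set. The diagonal function $u(t):=S_t(f)=T_f(t,t)$ is nondecreasing by Lemma~\ref{lemma:tfprop}(\ref{it:diagmonotone}); being a monotone function of one real variable it has at most countably many discontinuity points, which form a Lebesgue-null set $N\subset\rbb$. I claim that $T_f$ is continuous at $(t,t)$ for every $t\in\rbb\setminus N$, which suffices for the statement.

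To prove the claim, fix $t\notin N$ and let $(s_n,t_n)\to(t,t)$ in $\rbb^2$. For any $\eta>0$ and all sufficiently large $n$ we have $t-\eta<s_n<t+\eta$ and $t-\eta<t_n<t+\eta$, so product-order monotonicity of $T_f$ gives
\[
 u(t-\eta)=T_f(t-\eta,t-\eta)\ \le\ T_f(s_n,t_n)\ \le\ T_f(t+\eta,t+\eta)=u(t+\eta).
\]
Passing to $\liminf_n$ and $\limsup_n$, and then letting $\eta\searrow 0$, continuity of $u$ at $t$ forces $\liminf_n T_f(s_n,t_n)=\limsup_n T_f(s_n,t_n)=u(t)=T_f(t,t)$. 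Hence $T_f$ is continuous at $(t,t)$ whenever $t\notin N$.

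I do not expect a genuine obstacle here; the only points requiring routine care are checking that $T_f$ is everywhere finite for $f\in\bar\dcal$ so that the squeeze inequalities are meaningful, and observing that the joint monotonicity in the product order is exactly the combination of the two separate monotonicity statements in Lemma~\ref{lemma:tfprop}. Note that the argument in fact yields slightly more than stated: the set of $t$ for which $T_f$ fails to be continuous at $(t,t)$ is at most countable, not merely null.
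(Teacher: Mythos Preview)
Your proof is correct and follows essentially the same approach as the paper: both exploit that the diagonal function $t\mapsto T_f(t,t)$ is monotone, hence continuous off a null (in fact countable) set, and then squeeze $T_f(s_n,t_n)$ between diagonal values using coordinatewise monotonicity. The only cosmetic difference is that the paper sandwiches via $u_n=\min\{s_n,t_n\}$ and $v_n=\max\{s_n,t_n\}$ rather than an $\eta$-box, which amounts to the same estimate.
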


\begin{proof}
  The map $t \mapsto T_f(t,t)$ is monotonic, hence continuous almost everywhere; we fix a point $t$ of continuity. 
  Consider a sequence $(s_n,t_n)$ which converges to $(t,t)$, and set $u_n:=\min\{s_n,t_n\}$, $v_n:=\max\{s_n,t_n\}$.
  Since $T_f$ is increasing in both variables by Lemma~\ref{lemma:tfprop}(\ref{it:tmonotone}),(\ref{it:smonotone}), we have
  \begin{equation}\label{eq:tsandwich}
     T_f(u_n,u_n) \leq T_f(s_n,t_n) \leq T_f(v_n,v_n).
  \end{equation}
  As $n \to \infty$, both sides of this inequality tend to $T_f(t,t)$, showing that $T_f$ is continuous (in two variables) at $(t,t)$.
\optqed{}\end{proof}

Further, we note that $S_t$ and $T_f$ are locally integrable (due to their monotonicity properties) and hence can be understood as distributions in $C_c^\infty(\rbb)'$ and $C_c^\infty(\rbb^2)'$ respectively.
Regarding test functions, we fix---for all what follows---a nonnegative function $h \in C_c^\infty(\rbb_+)$ with $\int h = 1$, and for any $g \in C_c^\infty(\rbb)$ and $\epsilon>0$, we define $g_\epsilon^\pm \in C_c^\infty(\rbb^2)$ by
\begin{equation}
   g_\epsilon^\pm (s,t) := \frac{1}{\epsilon} g(t)  h\Big(\pm\frac{s-t}{\epsilon}\Big),
\end{equation}
which has support in the interior of $\ccal_\pm$.  The dual pairing between distributions and test functions will be denoted as $\dev{ \cdotarg }{\cdotarg}$. 
With this notation, our generalisation of Prop.~\ref{prop:secondderivsmooth} to the non-smooth case is:

\begin{theorem} \label{theorem:secondderivnonsmooth}
Let $f \in \bar\dcal$. For any nonnegative $g \in C_c^\infty(\rbb)$, one has
\begin{equation}\label{eq:secondderivnonsmooth}
    \limsup_{\epsilon \searrow 0}  \dev{ \frac{\partial^2 T_f}{\partial t^2} + \frac{\partial^2 T_f}{\partial s^2}}{ g_\epsilon^- }  
     \leq \dev{ \frac{d^2S_t(f)}{dt^2}}{ g } 
     = \lim_{\epsilon \searrow 0} \dev{ \frac{\partial^2 T_f}{\partial t^2} }{ g_\epsilon^+ }.
\end{equation}
\end{theorem}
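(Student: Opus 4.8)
The plan is to reduce the distributional statement to the smooth-case identities of Proposition~\ref{prop:secondderivsmooth}, run through two successive mollifications. First I would record what is available even without smoothness: $T_f$ is separately monotone (Lemma~\ref{lemma:tfprop}(\ref{it:tmonotone}),(\ref{it:smonotone})), $S_t(f)=T_f(t,t)$ is monotone (Lemma~\ref{lemma:tfprop}(\ref{it:diagmonotone})), all three functions are locally integrable hence define distributions, $T_f$ is continuous at $(t,t)$ for a.e.\ $t$ (Lemma~\ref{lemma:scont}), and $T_f$ is constant in $s$ on $\ccal_+$ (Lemma~\ref{lemma:tfprop}(\ref{it:smonotone})) while the ``mixed increment'' \eqref{eq:mixedmon} is nonnegative, which is exactly the statement that the distribution $\partial^2 T_f/\partial s\partial t$ is a nonnegative measure on the open region $s<t$ and vanishes on $s>t$.

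For the equality on the right, I would use that $g_\epsilon^+$ is supported in the interior of $\ccal_+$, where $T_f(s,t)=\hat T_+(s,t)$ would be if $T_f$ were smooth — but here I argue directly. On $\{s>t\}$ we have $\partial T_f/\partial s=0$ as a distribution, so $\partial^2 T_f/\partial s\partial t = 0$ and $\partial^2 T_f/\partial s^2=0$ there. Pair the distributional identity coming from $S_t(f)=T_f(t,t)$: for $g\in C_c^\infty(\rbb)$,
\begin{equation}
  \dev{\frac{d^2 S_t(f)}{dt^2}}{g} = \int S_t(f)\, g''(t)\, dt,
\end{equation}
and I would like to realize the right side as $\lim_{\epsilon\searrow0}\dev{\partial^2 T_f/\partial t^2}{g_\epsilon^+}$. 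Since $T_f$ is continuous at $(t,t)$ for a.e.\ $t$ and $g_\epsilon^+(\cdot,t)$ is an approximate identity pushing $s$ down to $t$ from above, dominated convergence (using local boundedness of $T_f$ near the compact diagonal segment $\operatorname{supp}g$, which follows from separate monotonicity) gives $\int T_f(s,t) g_\epsilon^+(s,t)\,ds\,dt \to \int T_f(t,t) g(t)\,dt$; differentiating twice in $t$ under the integral — legitimate because $\partial_t^2$ falls on the smooth factor $g$ while $h$ is held fixed, i.e.\ $\dev{\partial^2 T_f/\partial t^2}{g_\epsilon^+} = \dev{T_f}{\partial_t^2 g_\epsilon^+}$ and $\partial_t^2 g_\epsilon^+$ is again an $s$-mollifier of the smooth function $g''(t)$ up to $\epsilon$-corrections that integrate against the locally bounded $T_f$ to $0$ — yields the claimed equality. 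One must be slightly careful that $\partial_t$ of $h((s-t)/\epsilon)$ produces $\epsilon^{-1}h'$ terms; but these are exactly $-\partial_s$ of the same expression, and paired against $T_f$ they give $\dev{\partial_s T_f}{\cdots}$ with a test function supported in $\{s>t\}$, where $\partial_s T_f=0$; so those terms drop, leaving only the $g''(t)$ contribution.

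For the $\limsup$ inequality on the left, I would run the analogous computation with $g_\epsilon^-$, supported in the interior of $\ccal_-$ where $s<t$. Expanding $\dev{\partial_t^2 T_f + \partial_s^2 T_f}{g_\epsilon^-} = \dev{T_f}{\partial_t^2 g_\epsilon^- + \partial_s^2 g_\epsilon^-}$ and using the algebraic identity $\partial_t^2 + \partial_s^2 = (\partial_t+\partial_s)^2 - 2\partial_s\partial_t$ applied to $g_\epsilon^-$: the $(\partial_t+\partial_s)^2$ piece only sees the ``diagonal direction'' and, by the same mollifier argument as above (now approaching the diagonal from below), pairs with $T_f$ to give $\int S_t(f) g''(t)\,dt + o(1)$; the remaining $-2\dev{T_f}{\partial_s\partial_t g_\epsilon^-} = -2\dev{\partial_s\partial_t T_f}{g_\epsilon^-}$ is $\leq 0$ because $\partial_s\partial_t T_f$ is a nonnegative measure on $\{s<t\}$ and $g_\epsilon^-\geq0$. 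Hence $\dev{\partial_t^2 T_f+\partial_s^2 T_f}{g_\epsilon^-} \leq \int S_t(f) g'' + o(1)$, and taking $\limsup$ gives the left inequality. I expect the main obstacle to be the interchange of the double-mollification with the two derivatives: one has to justify that $\dev{T_f}{\partial_t^2 g_\epsilon^-}$ (where two $t$-derivatives have hit a factor that depends on both $t$ directly and through $(s-t)/\epsilon$) converges to $\int S_t(f)g''(t)\,dt$ rather than blowing up like $\epsilon^{-2}$, which requires carefully grouping the $\epsilon^{-1}h'$ and $\epsilon^{-2}h''$ terms into $\partial_s$-derivatives that are killed by $\partial_s T_f = 0$ on $\{s>t\}$ (for $g_\epsilon^+$) or, for $g_\epsilon^-$, into the combination $(\partial_s+\partial_t)$ plus a genuine $\partial_s\partial_t$ term with a definite sign — exactly the decomposition above. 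Continuity of $T_f$ at a.e.\ diagonal point (Lemma~\ref{lemma:scont}) together with local boundedness is what powers the dominated-convergence step and pins down the limit; the positivity of the mixed second derivative on $\ccal_-$ (Lemma~\ref{lemma:tfprop}(\ref{it:mixed})) is what makes the left side a genuine lower bound rather than an equality.
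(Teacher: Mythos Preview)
Your argument is correct and is essentially the paper's proof. Both hinge on the same three ingredients: the identity $(\partial_s+\partial_t)\,g_\epsilon^\pm = \tfrac{1}{\epsilon}g'(t)h(\pm(s-t)/\epsilon)$ (so the diagonal Laplacian picks out only $g''$), the mollifier convergence $\int T_f(s,t)\tfrac{1}{\epsilon}g''(t)h(\pm(s-t)/\epsilon)\,ds\,dt \to \int S_t(f)g''(t)\,dt$ via Lemma~\ref{lemma:scont} and local boundedness, and the sign information from Lemma~\ref{lemma:tfprop} on $\ccal_\pm$. The only cosmetic difference is that the paper reaches the key identity by a change of variables $u=s-t$ (so that $g''$ already sits outside and the $h$-mollifier is manifest), whereas you expand $\partial_t^2 g_\epsilon^\pm$ directly and regroup the $\epsilon^{-1}h'$, $\epsilon^{-2}h''$ terms as $\partial_s$-derivatives; these are two presentations of the same computation.
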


\begin{proof}
Due to local boundedness of $T_f$, we have
 \begin{equation}
\int dt\, g''(t) T_f(t,t) = \lim_{\epsilon \searrow 0} \int du\, dt\, g''(t)\frac{1}{\epsilon}h\big(\pm\frac{u}{\epsilon}\big) T_f(t+u,t)
\end{equation}
by dominated convergence together with Lemma~\ref{lemma:scont}. After a change of coordinates ($s= t+u$), this equality reads
\begin{equation}
\dev{ \frac{d^2S_t(f)}{dt^2}}{ g }  
= \lim_{\epsilon \searrow 0} \dev{ \frac{\partial^2 T_f}{\partial s^2} + \frac{\partial^2 T_f}{\partial t^2}
 + 2 \frac{\partial T_f}{\partial s\partial t}} {g_\epsilon^\pm}.
\end{equation}
Now by Lemma~\ref{lemma:tfprop}, the partial derivatives by $s$ vanish in the interior of $\ccal_+$, and $\partial T_f/ \partial s\partial t \geq 0$ in the interior of $\ccal_-$, yielding the proposed result.
\optqed{}\end{proof}

Note the extra boundary term $\partial^2 T_f / \partial s^2$ on the left-hand side of \eqref{eq:secondderivnonsmooth}, which may have any sign. This term indeed occurs in Examples~\ref{ex:fdstraight} and \ref{ex:abelianinclusion} and saturates the inequality there, hence cannot be omitted.

Thus convexity ($d^2S/dt^2\geq 0$) can fail or a number of reasons. This is already apparent from our definitions: our notion of a ``family of differential modular inclusions'' in Definition~\ref{def:difffamily} is invariant under monotonous reparametrizations of the parameter $t$, while convexity of $S_t$ is clearly not preserved under such reparametrizations in general. In fact, under mild conditions (e.g., if $S_t$ is strictly monotonous and at least continuous, but without restrictions on the second derivative), there exists a monotonous reparametrization of the family such that the resulting entropy function is convex (in fact, linear).

\section{Half-sided modular inclusions}\label{sec:halfsided}

In this section, we show that the usual notion of half-sided modular inclusions of algebras \cite{Bor:ttt}, via its analogue on the level of symplectic Hilbert spaces \cite{LonNotes}, fits into the framework of this paper; more specifically, every half-sided modular inclusion yields a family of differential modular inclusions in the sense of Def.~\ref{def:difffamily}.

To that end, we first analyze an explicit example of half-sided modular inclusions (in a sense, the smallest nontrivial one), namely, the symplectic spaces of the conformal $U(1)$-current in the vacuum state (Sec.~\ref{sec:current}). For this model, convexity of the entropy was shown to hold in \cite{Longo:entropyDist}; we show that it fits within our framework of \emph{family of differential modular inclusions}. Then, we decompose a general half-sided modular inclusion of symplectic Hilbert spaces into direct summands equivalent to the $U(1)$-current or to a trivial inclusion, thus lifting our results to the general case.

However, let us first note that our structures are indeed preserved under taking direct sums.

\begin{lemma}\label{directsum}
Let $\mathcal{I}\subseteq \mathbb{Z}_+$. For every $n\in\mathcal{I}$, let $(\kcal_n,\tau_n,\sigma_n)$ be a symplectic Hilbert space (Def.~\ref{def:space}). 
%triple  where $\kcal_n$ is a real vector space over $\rbb$, and $\tau_n$, $\sigma_n$ are real-valued bilinear forms on $\kcal_n$ such that $(\kcal_n,\tau_n)$ is a Hilbert space, $(\kcal_n,\sigma_n)$ is a symplectic space, and such that \ref{eq:sigmabd} holds. Then
\begin{enumerate}[(a)]
\item \label{it:spacesum} $(\kcal,\tau,\sigma):=(\oplus_{n\in\mathcal{I}}\kcal_n, \oplus_{n\in\mathcal{I}}\tau_n, \oplus_{n\in\mathcal{I}}\sigma_n)$
is a symplectic Hilbert space.

\item \label{it:entrosum}
If $\lcal_n \subset \kcal_n$ are closed subspaces, and $\lcal := \oplus_{n\in\mathcal{I}}\lcal_n$, then we have
for $f=\sum_{n\in\mathcal{I}} f_n\in \kcal$,
\begin{equation}\label{add1entr}
S_{\lcal}(f,f)=\sum_{n\in\mathcal{I}} S_{\lcal_n}(f_n,f_n).
\end{equation}
\item \label{it:inclsum} 
Suppose that, for each $n\in\mathcal{I}$, $\{\lcal_t^n\}_{t\in\mathbb{R}}$, with $\lcal_t^n\subset \kcal_n$, is a family of differential modular inclusions for  $(\kcal_n,\tau_n,\sigma_n)$. Then
$$\{\lcal_t\}_{t\in\mathbb{R}}:=\{\oplus_{n\in\mathcal{I}}\lcal_t^n\}_{t\in\mathbb{R}}$$
 is a family of differential modular inclusions for $(\kcal,\tau,\sigma)$;
 and for $f=\sum_{n\in\mathcal{I}} f_n\in \cap_t\dom S_{\lcal_t}$, we have 
\begin{equation}\label{add1t}
T_f(s,t)=\sum_{n\in\mathcal{I}} T_{f_n}^{n}(s,t),
\end{equation}
where $T_{f_n}^{n}$ is the function \eqref{deft} associated with the family $\{\lcal_t^n\}$. 
\end{enumerate}
\end{lemma}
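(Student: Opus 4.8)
The plan is to prove the three parts in order, reducing everything to the single-space constructions of Sec.~\ref{sec:nonpure} and checking that all of them commute with orthogonal direct sums. For part~(\ref{it:spacesum}), I would observe that an orthogonal direct sum of separable Hilbert spaces over a countable index set is again separable, and that $\sigma=\oplus\sigma_n$ is a (possibly degenerate) bilinear form which is antisymmetric because each $\sigma_n$ is. The only nontrivial point is the bound \eqref{eq:sigmabd}: for $f=\sum_n f_n$, $g=\sum_n g_n$, one has $\sigma(f,g)=\sum_n\sigma_n(f_n,g_n)$, and by Cauchy--Schwarz in $\ell^2$ together with the per-summand bound $\sigma_n(f_n,g_n)^2\le\tau_n(f_n,f_n)\tau_n(g_n,g_n)$,
\begin{equation*}
   |\sigma(f,g)| \le \sum_n \tau_n(f_n,f_n)^{1/2}\tau_n(g_n,g_n)^{1/2} \le \Big(\sum_n\tau_n(f_n,f_n)\Big)^{1/2}\Big(\sum_n\tau_n(g_n,g_n)\Big)^{1/2},
\end{equation*}
which is exactly $\tau(f,f)^{1/2}\tau(g,g)^{1/2}$. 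So $(\kcal,\tau,\sigma)$ is a symplectic Hilbert space (if the resulting $\ker\sigma$ has odd finite dimension, enlarge by the one-dimensional trick of Sec.~\ref{sec:ops}, which does not affect anything).

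For part~(\ref{it:entrosum}) I would argue that all the data entering $S_\lcal$ decompose along the sum. The operator $D$ with $\sigma=\tau(\cdot,D\cdot)$ is $\oplus_n D_n$; hence $|D|$, $C$ and the complex structure $\ipf$ in \eqref{complexstr} are the corresponding direct sums, so that $\kcal\pf=\oplus_n(\kcal_n)\pf$ as complex Hilbert spaces. Since $\lcal=\oplus_n\lcal_n$, each of the subspaces $\lcal_0\pf,\lcal_\abel\pf,\lcal_\fact\pf,\lcal_\infty\pf$ defined in \eqref{eq:l0def}--\eqref{eq:lfdef} is the orthogonal sum of the corresponding subspaces of $(\kcal_n)\pf$ (one checks this directly from the definitions: e.g.\ $(\lcal+\ipf\lcal)^\perp=\oplus_n(\lcal_n+\ipf\lcal_n)^\perp$, etc.), and consequently $J_\lcal=\oplus_n J_{\lcal_n}$, $\Delta_\lcal=\oplus_n\Delta_{\lcal_n}$, $K_\lcal=\oplus_n K_{\lcal_n}$. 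Then $R_\lcal=c(K_\lcal)(1-J_\lcal)c(K_\lcal)=\oplus_n R_{\lcal_n}$ by functional calculus and direct sum of bounded operators, and taking the associated closed quadratic form (Lemma~\ref{lem:entrform}(\ref{it:rpos})) gives $S_\lcal(f,f)=\sum_n S_{\lcal_n}(f_n,f_n)$ with $\dom S_\lcal=\{f:\sum_n S_{\lcal_n}(f_n,f_n)<\infty\}$; this is \eqref{add1entr}. (Alternatively, one may bypass the single-particle bookkeeping and invoke Theorem~\ref{theorem:entroGen} together with additivity of relative entropy over the tensor-product splitting $\A_\kcal\pf\cong\bigotimes_n\A_{\kcal_n}\pf$, as in Proposition~\ref{prop:entrosum}; but the operator-level argument is cleaner here since the index set may be infinite.)

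For part~(\ref{it:inclsum}) I would first note that increasingness of $\{\lcal_t\}$ is immediate from increasingness of each $\{\lcal_t^n\}$. Next, the projectors decompose: $Q_{\lcal_s}=\oplus_n Q_{\lcal_s^n}$ on the appropriate domain (again from \eqref{def:cutproj} and the summand-wise decomposition of $P_\abel$, $P_\fact$, $\Delta_\lcal$), and the Hilbert spaces $\es_{\lcal_t}$ are the orthogonal direct sums $\widehat{\bigoplus}_n\es_{\lcal_t^n}$ with $\varphi_{\lcal_t}=\oplus_n\varphi_{\lcal_t^n}$ (the completion/quotient in the $\|\cdot\|_{\lcal_t}$-seminorm commutes with the orthogonal sum because $\|f\|_{\lcal_t}^2=\sum_n\|f_n\|_{\lcal_t^n}^2$ by \eqref{add1entr}). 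Now I verify Definition~\ref{def:dmp} for each pair $(\lcal_s,\lcal_t)$: the fundamental domains satisfy $\dcal_{st}^\pm\supseteq\{\,\oplus_n f^\pm_n : f^\pm_n\in\dcal_{st,(n)}^\pm,\ \text{finitely many nonzero}\,\}$, whose $\varphi_{\lcal_t}$-image is dense in $\es_{\lcal_t}$ because each $\varphi_{\lcal_t^n}\dcal_{st,(n)}$ is dense in $\es_{\lcal_t^n}$ and finite sums are dense in the Hilbert-space direct sum; this gives (\ref{it:phidense}). For (\ref{it:sortho}), given $f^+=\oplus_n f^+_n\in\dcal_{st}^+$ and $f^-=\oplus_n f^-_n\in\dcal_{st}^-$, one has $f^+_n\in\img Q_{\lcal_s^n}$, $f^-_n\in\ker Q_{\lcal_s^n}$, $f^\pm_n\in\dom S_{\lcal_t^n}$, so $S_{\lcal_t}(f^+,f^-)=\sum_n S_{\lcal_t^n}(f^+_n,f^-_n)=0$ term by term. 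Hence $(\lcal_s,\lcal_t)$ is in differential modular position, so $\{\lcal_t\}$ is a family of differential modular inclusions, and moreover $\bar Q_{\lcal_s}=\widehat{\bigoplus}_n\bar Q_{\lcal_s^n}$ on $\es_{\lcal_t}$ (by density, since it agrees with this sum on the dense finitely-supported vectors). Finally, for $f=\sum_n f_n\in\cap_t\dom S_{\lcal_t}$,
\begin{equation*}
   T_f(s,t)=\|\bar Q_{\lcal_s}\varphi_{\lcal_t}f\|_{\lcal_t}^2=\sum_n\|\bar Q_{\lcal_s^n}\varphi_{\lcal_t^n}f_n\|_{\lcal_t^n}^2=\sum_n T^n_{f_n}(s,t),
\end{equation*}
which is \eqref{add1t}. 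I expect the main obstacle to be the careful bookkeeping around the possibly infinite index set: making sure the seminorm-completion defining $\es_\lcal$, the unbounded projector $Q_\lcal$, and the domain $\dom S_\lcal$ are genuinely compatible with infinite orthogonal sums (so that, e.g., a vector of finite total entropy need not lie in the algebraic direct sum), and checking that ``finitely supported'' vectors form a common core/dense set for all the objects involved; the orthogonality and density conditions of Definition~\ref{def:dmp} then follow summand-wise without trouble.
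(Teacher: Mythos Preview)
Your proposal is correct and follows essentially the same route as the paper: part~(\ref{it:spacesum}) is handled by direct verification, part~(\ref{it:inclsum}) by noting that $Q_s$ and $S_t$ decompose along the sum and that finitely-supported vectors with components in $\dcal_{st}^{(n)}$ form a dense subset of $\es_{\lcal_t}$, just as the paper does (the paper writes out the approximating sequence with the $\tfrac{1}{k\,2^n}$ bound, but this is exactly your ``finite sums are dense in the Hilbert-space direct sum'' observation made explicit). The one difference is in part~(\ref{it:entrosum}): you argue by decomposing the single-particle modular data $D,C,\ipf,J_\lcal,\Delta_\lcal,K_\lcal,R_\lcal$ directly, whereas the paper simply invokes Theorem~\ref{theorem:entroGen} (i.e.\ the identification of $S_\lcal$ with a relative entropy, together with additivity of the latter under tensor products). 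Your operator-level argument is more self-contained and avoids the detour through the CCR algebras; you correctly flag the alternative route, so this is a matter of taste rather than substance.
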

\begin{proof}
(\ref{it:spacesum}) is immediate, and (\ref{it:entrosum}) follows from the expression for the relative entropy $S_{\lcal_t}(f,f)$ in Theorem \ref{theorem:entroGen}. For (\ref{it:inclsum}), note that condition (ii) of Definition \ref{def:dmp} is implied by Eq.~\eqref{add1entr}, noting that also the projectors $Q_t$ decompose along the direct sum. 
To show that, for every $s,t\in\mathbb{R}$, condition (i) of Definition \ref{def:dmp} holds for the pair of subspaces $(\lcal_t,\lcal_s)$, let $\dcal^n_{st}$ be the subspace defined in \eqref{fundamdom} corresponding to the pair of subspaces $(\lcal_t^n,\lcal_s^n)$.
Consider
$$\bar\dcal_{st}:=\{f\in\mathcal{H}:f_n\neq0 \text{ for finitely many } n\in\mathcal{I}, f_n\in \dcal_{st}^n\}\subseteq \dcal_{st},$$
where $\dcal_{st}$ is the subspace defined in \eqref{fundamdom} relative to the pair of subspaces $(\lcal_t , \lcal_s)$ in $\kcal$.
Since the family of subspaces $\{\lcal^n_t\}_{t\in\mathbb{R}}$ is by hypothesis a differential modular inclusion, we can find for every $f\in\dom(S_{\lcal_t})$ a sequence $\{g_k\}_{k\in\mathbb{Z}_+}\subset \bar\dcal_{st}$, defined as $(g_k)_n:=0$ if $n>k$, and such that 
$$S_{\lcal_t^n}((g_k)_n-f_n,(g_k)_n-f_n)\leq \frac{1}{k}\frac{1}{2^n}$$
 for $n\leq k$.
We thus have by \eqref{add1entr}
\begin{equation}
\begin{aligned}
S_{\lcal_t} & (g_k-f,g_k-f)=\sum_{n\in\mathcal{I}}S_{\lcal_t^n}((g_k)_n-f_n,(g_k)_n-f_n)
\\
&= \sum_{0\leq n\leq k}S_{\lcal_t^n}((g_k)_n-f_n,(g_k)_n-f_n) + \sum_{n>k}S_{\lcal_t^n}(f_n,f_n)\leq \frac{1}{k}\sum_{n\geq 0}\frac{1}{2^n} +\sum_{n>k}S_{\lcal_t^n}(f_n,f_n),
\end{aligned}
\end{equation}
which converges to $0$ as $k\rightarrow\infty$ since $f$ has finite entropy.---Finally, one verifies that also the projectors $\bar Q_t$ and $\dom S_{\lcal_t}$ decompose along the direct sum, hence Eq.~\eqref{add1t} follows from  \eqref{add1entr}.
\optqed{}\end{proof}

\subsection[The U(1)-current in the vacuum]{The $U(1)$-current in the vacuum}\label{sec:current}

We consider the symplectic space for the $U(1)$-current, namely (in ``configuration space'' representation) $C_c^\infty(\rbb)$ equipped with the symplectic form
\begin{equation}\label{symplU1}
   \sigma(f,g) = \int f(x) g'(x) dx.
\end{equation}
for $f,g\in C_c^\infty(\rbb)$.

The vacuum state is the pure quasifree state induced by the bilinear form
\begin{equation}
  \tau(f,g) = \re \int_0^\infty dp\,p\,\tilde f(-p) \tilde g(p),
\end{equation}
where $\tilde f$ denotes the Fourier transform
\begin{equation}
\tilde f(p)=\frac{1}{2\pi}\int_{\mathbb{R}}e^{-ipx}f(x)dx.
\end{equation}
The closure of $C_c^\infty(\rbb)$ in the topology induced by $\tau$ is $\kcal=L^2_{\mathbb{C}}(\rbb_+,p\,dp)$. This is a complex Hilbert space with complex scalar product $\hscalar{\cdot}{\cdot}$ and indeed,
\begin{equation}\label{eq:cvacscalar}
   \hscalar{f}{g} = \int_0^\infty dp\,p\,\overline{\tilde f(p)} \tilde g(p) 
   = \tau(f,g) + i \sigma(f,g), \quad f,g\in C_c^\infty(\rbb).
\end{equation}
Thus, $(\kcal,\tau,\sigma)$ is a pure symplectic Hilbert space (see Remark \ref{remark:pure}).

Let $\mathcal{I}$ be the set of open, proper (bounded or unbounded) intervals of $\mathbb{R}$. For an interval $I\in\mathcal{I}$, let $\lcal^{U(1)}(I)$ be the closure of $C_c^\infty(I)$ in $\mathcal{K}$. The net $\{\lcal^{U(1)}(I)\}_{I\in\mathcal{I}}$ is a local net of standard and factorial\footnote{That is, $\lcal^{U(1)}(I) = \lcal^{U(1)}(I)_\fact$; see, e.g., \cite[Section 4.2]{LonNotes}.} subspaces of $\mathcal{K}$, the well known $U(1)$-current net at single-particle level (restricted to the real line $\mathbb{R}$). Its extension to the circle $S^1$ is covariant with respect to the action of the lowest weight $1$ positive energy irreducible representation of the M\"obius group, $V$.
The latter, restricted to the subgroup $\mathbf{P}$ generated by translations and dilations (denoted by $t\mapsto\vartheta(t)$ and $s\mapsto\delta(s)$ respectively), is given explicitly on $\mathcal{K}=L^2(\rbb_+,p\,dp)$ by
\begin{align}
&(V(\vartheta(t))f)(p)=e^{itp}f(p), \label{trasl}\\
&(V(\delta(s))f)(p)=e^{-2\pi s} f(e^{-2\pi s}p).\label{dil}
\end{align}
This yields the unique irreducible, strictly positive energy representation of the group $\mathbf{P}$; see, e.g., \cite[Section 6.7]{FollHarm}.

For brevity, for $t\in\mathbb{R}$, we denote by $\lcal^{U(1)}_{t}:=\lcal^{U(1)}((-\infty,t))$, $\Delta_t:=\Delta_{\lcal^{U(1)}_t}$, $K_t:=-\log(\Delta_t)$ and with mild abuse of notation we omit the identification between the configuration space representation and $L^2_{\mathbb{C}}(\rbb_+,p\, dp)$.
By the Bisognano-Wichmann theorem for M\"obius covariant local nets of standard subspaces \cite[Theorem 3.3.1]{LonNotes} we have that 
\begin{equation}\label{eqbw}
\Delta^{-is}_0=V(\delta(-2\pi s)),
\end{equation}
and $\Delta_t$ for other $t$ is then determined by translation covariance; in particular we get  (``in configuration space'')
\begin{equation}
K_t = -\log \Delta_t =  2 \pi i (x-t) \partial_x.
\end{equation}
Let $Q_t$ denote the projection \eqref{def:cutproj} relative to the subspace $\lcal_t^{U(1)}$; since the space is factorial, it acts by
\begin{equation}\label{eq:qtu1}
\begin{aligned}
Q_t: \lcal_t^{U(1)} + \lcal_t^{U(1)\prime}&\rightarrow \lcal_t^{U(1)},\\
h+h^\prime &\mapsto h.
\end{aligned}
\end{equation}

\begin{proposition}
For $f\in C^\infty_c(\mathbb{R})\subset\mathcal{K}$ we have
$$Q_t K_t f(x)=\Theta(t-x) K_t f(x)$$
for almost all $x\in\mathbb{R}$, where $\Theta$ denotes the Heaviside function.
\end{proposition}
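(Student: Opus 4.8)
The plan is to compute $K_t f$ explicitly as a function in configuration space, and then show directly that the decomposition $K_t f = h + h'$ with $h \in \lcal_t^{U(1)}$ and $h' \in \lcal_t^{U(1)\prime}$ is precisely the splitting by the Heaviside function. For this, first recall that in configuration space $K_t = 2\pi i (x-t)\partial_x$, so that for $f \in C_c^\infty(\rbb)$ the function $K_t f(x) = 2\pi i (x-t) f'(x)$ is again smooth and compactly supported; hence $K_t f \in C_c^\infty(\rbb) \subset \dom K_t$, and moreover $K_t f \in \lcal_t^{U(1)} + \lcal_t^{U(1)\prime} = \dom Q_t$. Now the key structural observation is that the symplectic complement $\lcal_t^{U(1)\prime}$ of $\lcal^{U(1)}((-\infty,t))$ is the closure of $C_c^\infty((t,\infty))$ — i.e., the net satisfies Haag duality at single-particle level (this is standard for the $U(1)$-current net and follows from the locality/duality structure of the net $\{\lcal^{U(1)}(I)\}$, cf.\ the references to \cite{LonNotes}).

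Given duality, the strategy is: write $\Theta(t-x) K_t f(x) =: h(x)$ and $\Theta(x-t) K_t f(x) =: h'(x)$, so that $K_t f = h + h'$ pointwise. I would argue that $h \in \lcal_t^{U(1)}$ and $h' \in \lcal_t^{U(1)\prime}$. This is where one must be slightly careful: $h$ and $h'$ are generally only $L^2$ functions (with a jump at $x=t$), not smooth, so membership in the respective subspaces must be argued by approximation rather than by $h \in C_c^\infty((-\infty,t))$. The clean way is to note that $(x-t)f'(x)$ vanishes to first order at $x=t$ (since $f'$ is smooth and the prefactor vanishes linearly), so in fact $K_t f(x) = 2\pi i (x-t) f'(x)$ is a continuous function vanishing at $x=t$; multiplying by $\Theta(t-x)$ therefore gives a continuous, compactly supported function supported in $(-\infty,t]$ that vanishes at the endpoint $t$. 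Such a function lies in the closure of $C_c^\infty((-\infty,t))$ in the $\kcal$-norm (approximate by smooth functions supported in $(-\infty,t-\epsilon)$, using that the $\kcal = L^2(\rbb_+, p\,dp)$ norm is controlled — one may, e.g., first approximate in a Sobolev norm strong enough to dominate $\|\cdot\|_\kcal$, or invoke that $\lcal^{U(1)}(I)$ is the closure and standard regularization arguments apply). The same argument with $\Theta(x-t)$ and duality $\lcal_t^{U(1)\prime} = \overline{C_c^\infty((t,\infty))}$ gives $h' \in \lcal_t^{U(1)\prime}$.

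With the decomposition $K_t f = h + h'$ established as a splitting respecting $\dom Q_t = \lcal_t^{U(1)} \oplus_{\mathrm{s}} \lcal_t^{U(1)\prime}$, the definition \eqref{eq:qtu1} of $Q_t$ immediately gives $Q_t K_t f = h = \Theta(t-x) K_t f(x)$ for almost every $x$, which is the claim. (The ``almost every'' is harmless: the two sides agree as $L^2$ functions, differing at most on the single point $x=t$.)

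**Main obstacle.** The only genuinely nontrivial input is the identification of the symplectic complement $\lcal_t^{U(1)\prime}$ with $\overline{C_c^\infty((t,\infty))}$, i.e.\ Haag duality for halfline subspaces of the $U(1)$-current at single-particle level; everything else is the explicit form of $K_t$ plus an elementary approximation argument. I would either cite this duality from the literature on the $U(1)$-current net (it follows from Möbius covariance together with the Bisognano--Wichmann property \eqref{eqbw}, via the standard Takesaki/duality argument: $\Delta_t^{it}$ implements dilations fixing $t$, so $J_t$ maps $\lcal_t^{U(1)}$ onto $\lcal_t^{U(1)\prime}$, and $J_t$ acts geometrically by the reflection about $t$), or spell out this one-line derivation. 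The secondary technical point — that the Heaviside-truncated function, being continuous and vanishing at the cut point, lies in the $\kcal$-closure of test functions supported strictly on one side — is routine but should be stated, since $h$ itself is not compactly supported away from $t$.
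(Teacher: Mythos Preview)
Your approach is correct and essentially matches the paper's: split $K_t f$ via the Heaviside function into $g_\pm(x) := \Theta(\pm(t-x))K_t f(x)$, note these are continuous (since $K_t f$ vanishes at $x=t$) and hence lie in $\kcal$, show $g_+ \in \lcal_t^{U(1)}$ and $g_- \in \lcal_t^{U(1)\prime}$, and apply the definition of $Q_t$. The one difference worth noting is that where you invoke Haag duality $\lcal_t^{U(1)\prime} = \overline{C_c^\infty((t,\infty))}$ and then approximate, the paper verifies $g_- \in \lcal_t^{U(1)\prime}$ more directly by checking $\sigma(g_-,\psi)=0$ for all $\psi$ in the dense subset $C_c^\infty((-\infty,t))\subset \lcal_t^{U(1)}$ (the integrand has disjoint support), so duality need not be cited as an external input; the paper argues ``similarly'' for $g_+$ via $\lcal_t^{U(1)\prime\prime}=\lcal_t^{U(1)}$ (standardness), though your approximation argument is a perfectly good alternative and in fact more explicit about why $g_+$ lies in the $\kcal$-closure of $C_c^\infty((-\infty,t))$.
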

\begin{proof}
Let $g_\pm:=\Theta(\pm(t-\cdotarg))K_t f$. Since $g_\pm$ are continuous and piecewise differentiable functions, they are elements of $\kcal$.
Also, $g_+ \in \lcal_t^{U(1)}$, since clearly $\sigma(g_+,\varphi)=0$ for all $\varphi\in \lcal_t^{U(1)\prime}$, and similarly $g_- \in  \lcal_t^{U(1)\prime}$. By \eqref{eq:qtu1}, we now have $Q_t K_t f = Q_t(g_+ + g_-) = g_+$ as claimed.
\optqed{}\end{proof}

One then immediately finds for the relative entropy by applying formula \eqref{entrform}:
\begin{proposition}
Let $f\in C^\infty_c(\mathbb{R})\subset\mathcal{K}$. We have 
\begin{equation}\label{eq:cvacentro}
 S_{\lcal^{U(1)}_t}(f,f)  = 2\pi \int_{-\infty}^t (t-x) f'(x)^2 dx.
\end{equation}
\end{proposition}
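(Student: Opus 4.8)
The plan is to read the statement off the general formula \eqref{entrform} of Proposition~\ref{prop:entroF} (equivalently Theorem~\ref{theorem:entroGen}), combined with the explicit description of the cut vector obtained just above. First I would record the simplifications coming from the model: since $(\mathcal{K},\tau,\sigma)$ is pure and each $\lcal^{U(1)}_{t}$ is a standard factorial subspace of $\mathcal{K}$, we are in the purely factorial case $\lcal^{U(1)}_{t}=(\lcal^{U(1)}_{t})_\fact$, with $(\lcal^{U(1)}_{t})_\fact\pf=\mathcal{K}$; accordingly the complex structure $\ipf$ is simply the complex structure of $\mathcal{K}$, and the real-linear projector $P_\fact$ coincides with $Q_t$ as given in \eqref{eq:qtu1}. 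Moreover, for $f\in C^\infty_c(\mathbb{R})$ one has $K_tf=2\pi i(x-t)f'(x)$, which is again in $C^\infty_c(\mathbb{R})\subset\mathcal{K}$, so $f\in\dom K_t$ and Proposition~\ref{prop:entroF} applies, giving
\begin{equation}
S_{\lcal^{U(1)}_{t}}(f,f)=\relent{\omega_f}{\omega}{\A_{\lcal^{U(1)}_{t}}}=\sigma\pf\bigl(f,\,Q_t\,\ipf K_tf\bigr).
\end{equation}

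Next I would compute $Q_t\,\ipf K_tf$ explicitly. Applying the complex structure turns the (purely imaginary) configuration-space expression $K_tf(x)=2\pi i(x-t)f'(x)$ into the real function $v(x):=2\pi(t-x)f'(x)\in C^\infty_c(\mathbb{R})\subset\mathcal{K}$. The reasoning of the preceding Proposition applies to $v$ exactly as it did to $K_tf$: the decomposition $v=v_++v_-$ with $v_\pm:=\Theta\bigl(\pm(t-\cdot)\bigr)\,v$ has $v_+\in\lcal^{U(1)}_{t}$ and $v_-\in\lcal^{U(1)\prime}_{t}$, both summands being continuous, compactly supported, piecewise differentiable functions vanishing at $x=t$ and supported in $(-\infty,t]$ resp.\ $[t,\infty)$. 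Hence $Q_t\,\ipf K_tf=Q_tv=v_+$, i.e.\ the function $x\mapsto 2\pi\,\Theta(t-x)(t-x)f'(x)$.

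Finally, substituting this into $\sigma\pf$ and unwinding the definition \eqref{symplU1} of the symplectic form yields
\begin{equation}
S_{\lcal^{U(1)}_{t}}(f,f)=\int_{\mathbb{R}} f(x)\,\frac{d}{dx}\Bigl[2\pi\,\Theta(t-x)(t-x)f'(x)\Bigr]\,dx=2\pi\int_{-\infty}^{t}(t-x)\,f'(x)^2\,dx,
\end{equation}
which is \eqref{eq:cvacentro}. The last equality is two integrations by parts on $(-\infty,t)$: the boundary contribution at $-\infty$ vanishes since $f$ has compact support, the contribution at $x=t$ vanishes because of the factor $(t-x)$, and no singular (delta) term appears at $x=t$ since $\Theta(t-\cdot)(t-\cdot)f'$ is continuous there.

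I do not expect a conceptual obstacle: the entire content is already packaged in Proposition~\ref{prop:entroF} and in the computation of $Q_tK_tf$ carried out just before. The only points that require (routine) care are transferring the Heaviside cut from $K_tf$ to $\ipf K_tf$ — immediate, since $\ipf$ is complex-linear and leaves supports and regularity untouched — and keeping the sign and the overall factor $2\pi$ straight through the integrations by parts, which are fixed once and for all by the normalizations in \eqref{symplU1}, \eqref{eq:cvacscalar} and the Bisognano--Wichmann relation \eqref{eqbw}.
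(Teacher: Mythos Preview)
Your argument is correct and is exactly the paper's one-line proof spelled out: apply \eqref{entrform} together with the Heaviside-cut description of $Q_t(\ipf K_tf)$ from the preceding proposition, then evaluate the symplectic pairing. One bookkeeping remark: with $\sigma(f,g)=\int f g'\,dx$ and $v_+=2\pi\,\Theta(t-\cdot)(t-\cdot)f'$ as you wrote them, a \emph{single} integration by parts already gives $\sigma(f,v_+)=-\int f'v_+\,dx=-2\pi\int_{-\infty}^{t}(t-x)(f')^2\,dx$, so a sign is being absorbed in the normalizations of Section~\ref{sec:current} (between \eqref{symplU1}, \eqref{eq:cvacscalar} and the stated form of $K_t$); this is a convention issue, not a gap in your method.
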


Further, the spaces fit into our framework of differential modular inclusions (Definition \ref{def:difffamily}):
\begin{proposition}\label{vacuumdiff}
$\{\lcal_t^{U(1)}\}_{t\in\mathbb{R}}$ is a family of  differential modular inclusions.
\end{proposition}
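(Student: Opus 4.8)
The plan is to verify Definition~\ref{def:difffamily} for the family $\{\lcal^{U(1)}_t\}_{t\in\mathbb{R}}$ by checking that each pair $(\lcal^{U(1)}_s,\lcal^{U(1)}_t)$ is in differential modular position. Since the family is manifestly increasing ($\lcal^{U(1)}((-\infty,s))\subset\lcal^{U(1)}((-\infty,t))$ for $s\le t$, as $C_c^\infty((-\infty,s))\subset C_c^\infty((-\infty,t))$ and we take closures), and since all the spaces involved are purely factorial, the natural strategy is to invoke the sufficient criterion Lemma~\ref{lemma:subsup}. For $s\ge t$ we are in case~(\ref{it:supset}) ($\lcal_0\supset\lcal_1$ in the notation there), which holds unconditionally; so only the case $s<t$ requires work, and there I would use part~(\ref{it:subset}) of Lemma~\ref{lemma:subsup}: it suffices to show that the modular group $U_t$ of the \emph{larger} space maps the \emph{smaller} space into itself in one direction, i.e. $U_t(x)\lcal^{U(1)}_s\subset\lcal^{U(1)}_s$ for all $x\le 0$ (or all $x\ge 0$), plus $S_1$-graph density of $\varphi_t\dcal_{st}$.

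The geometric input is the explicit form of the modular group. From \eqref{eqbw} and translation covariance, $U_t(x)=\Delta_t^{-ix/(2\pi)}$ is, in configuration space, the one-parameter group generated by $2\pi i(x-t)\partial_x$, i.e. the flow of the vector field $2\pi(t-x)\partial_x$ (after fixing signs via \eqref{eqbw}); this is precisely the dilation around the point $t$, which fixes $t$ and moves every point of $(-\infty,t)$ monotonically toward or away from $t$ along $(-\infty,t)$. Concretely $U_t(x)$ acts on a test function by $f\mapsto f\circ\phi^{-1}_x$ with $\phi_x$ the dilation-about-$t$ flow, which for the appropriate sign of $x$ maps $(-\infty,s)$ into itself whenever $s<t$ (it either contracts $(-\infty,t)$ toward $t$ or expands it, and one of the two directions sends $(-\infty,s)$ into $(-\infty,s']$ with $s'\le s$, hence into $(-\infty,s)$; one chooses that direction). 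Pushing the subspace $\lcal^{U(1)}_s$ forward: since $\lcal^{U(1)}_s$ is the closure of $C_c^\infty((-\infty,s))$, and $U_t(x)$ is bounded (unitary on $\kcal$) and maps $C_c^\infty((-\infty,s))$ into $C_c^\infty((-\infty,s))$ for the right sign of $x$, it maps the closure into the closure; this gives the semigroup invariance required by Lemma~\ref{lemma:subsup}(\ref{it:subset}). For the density hypothesis of that lemma, I would take $\dcal_{st}^+=\img Q_s\cap\dom K_t=\lcal^{U(1)}_s\cap\dom K_t$ and note that it contains $C_c^\infty((-\infty,s))$ (these are smooth, compactly supported, hence in the domain of $K_t=2\pi i(x-t)\partial_x$); combined with $\dcal_{st}^-\supset$ the corresponding dense subset of $\lcal^{U(1)\prime}_s$, one gets that $\varphi_t\dcal_{st}$ is $\|\cdot\|_t$-dense in $\es_t$, using Lemma~\ref{density64} and Lemma~\ref{core} (smooth compactly supported functions form a core for the generator $i\pf K_t$).

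The main obstacle I anticipate is purely a matter of bookkeeping rather than deep content: getting the \emph{sign} right so that the one-sided semigroup invariance $U_t(x)\lcal^{U(1)}_s\subset\lcal^{U(1)}_s$ holds for all $x$ in the correct half-line, and confirming that the relevant smooth functions genuinely land in $\dom K_t$ and form a core, so that the $S_1$-graph-density clause of Lemma~\ref{lemma:subsup} is met (via Lemmas~\ref{density64} and~\ref{core}). Once the dilation-about-$t$ picture is in place the invariance is geometrically obvious, so the proof should reduce to: (1) handle $s\ge t$ by Lemma~\ref{lemma:subsup}(\ref{it:supset}); (2) for $s<t$, read off from the Bisognano--Wichmann formula that $U_t$ is the dilation flow centred at $t$, observe it preserves $(-\infty,s)$ for $x$ of one sign, deduce $U_t(x)\lcal^{U(1)}_s\subset\lcal^{U(1)}_s$; (3) verify the density hypothesis using that $C_c^\infty((-\infty,s))$ is a core; (4) apply Lemma~\ref{lemma:subsup}(\ref{it:subset}) to conclude $(\lcal^{U(1)}_s,\lcal^{U(1)}_t)$ is in differential modular position, hence the whole family is a family of differential modular inclusions.
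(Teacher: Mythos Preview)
Your overall strategy coincides with the paper's: reduce condition~(\ref{it:sortho}) of Definition~\ref{def:dmp} to Lemma~\ref{lemma:subsup}, using part~(\ref{it:supset}) for $s\ge t$ and the geometric modular action (Bisognano--Wichmann, i.e.\ part~(\ref{it:subset})) for $s<t$; and handle condition~(\ref{it:phidense}) separately via Lemmas~\ref{density64} and~\ref{core}. That part is fine.

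The gap is in the density argument for $s<t$, which you dismiss as bookkeeping. Lemmas~\ref{density64} and~\ref{core} together give only that $\varphi_t\big(C_c^\infty((-\infty,t))\big)$ is dense in $\es_t$. But a generic $f\in C_c^\infty((-\infty,t))$ with $s\in\supp f$ is \emph{not} obviously in $\dcal_{st}=\dcal_{st}^+ + \dcal_{st}^-$: the naive splitting $f=\chi_{(-\infty,s)}f+\chi_{(s,\infty)}f$ introduces a jump discontinuity, and a function with a jump fails to lie in $\kcal=L^2(\rbb_+,p\,dp)$ (its Fourier transform decays like $p^{-1}$), so $f$ need not be in $\dom Q_s$ at all. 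Thus you still owe an argument that the subset $\hat\dcal_{st}:=\{f\in C_c^\infty:\supp f\subset(-\infty,t)\setminus\{s\}\}\subset\dcal_{st}$ is $\|\cdot\|_t$-dense. The paper fills this in by identifying $\es_t$ with $L^2((-\infty,t),(t-x)\,dx)$ via $f\mapsto f'$ (using \eqref{eq:cvacentro}) and checking that the orthogonal complement of $\{f':f\in\hat\dcal_{st}\}$ is trivial: any $g$ orthogonal to all such $f'$ must satisfy $((t-x)g)'=0$ separately on $(-\infty,s)$ and $(s,t)$, forcing $g(x)=c/(t-x)$, which is not square-integrable against $(t-x)\,dx$ unless $c=0$.

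Two minor points: (a) by definition $\dcal_{st}^+=\img Q_s\cap\dom S_t$, not $\cap\dom K_t$ (harmless here since you only use the smaller set $C_c^\infty((-\infty,s))$, which lies in both); (b) Lemma~\ref{lemma:subsup}(\ref{it:supset}) does \emph{not} hold ``unconditionally'' --- it too presupposes the density hypothesis~(\ref{it:phidense}), which for $s\ge t$ is immediate from $\lcal_t\cap\dom S_t\subset\dcal_{st}^+$ and Lemma~\ref{density64}, but should be stated.
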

\begin{proof}
We must prove that, for every $s,t\in\mathbb{R}$, conditions (\ref{it:phidense}) and (\ref{it:sortho}) in Definition \ref{def:dmp} hold for the pair of subspaces $(\lcal_s^{U(1)},\lcal_t^{U(1)})$. In fact, once (\ref{it:phidense}) is shown, (\ref{it:sortho}) is obtained immediately from points (\ref{it:subset}) and (\ref{it:supset}) of Lemma \ref{lemma:subsup}, which apply since the modular group acts geometrically by Eq.~\eqref{eqbw}. 

Now for condition (\ref{it:phidense}) in Definition \ref{def:dmp}, note that $C^\infty_c((-\infty,t))\subset \lcal^{U(1)}_t$ is a core for the generator $\ipf K{_t\restriction{\lcal_t^{U(1)}}}$ of the $\kcal$-strongly continuous one-parameter group $s\rightarrow\Delta^{is}_{t} \restriction {\lcal^{U(1)}_t}$, as it is a dense invariant subset of $\dom \ipf K{_t\restriction {\lcal_t^{U(1)}}}$ \cite[Proposition 1.7, Chapter 2]{EngelNagel:semigroups}. Using Lemmas \ref{density64} and \ref{core}, we find that  $\varphi_{\lcal_t^{U(1)}}(C^\infty_c((-\infty,t)))$ is dense in $\es_{\lcal_t^{U(1)}}$. Thus to conclude the proof we only have to show that vectors in $C^\infty_c((-\infty,t))\subset \lcal_t^{U(1)}$ can be approximated by vectors in $\lcal_t^{U(1)}\cap\dcal_{st}$ in $\|\cdot\|_t$, where $\dcal_{st}$ is defined in \eqref{fundamdom}. Suppose $s\leq t$; the proof for $s>t$ is very similar.

Note that
\begin{equation}
\hat\dcal_{st}:=\{f\in C^\infty_c(\mathbb{R}): \supp(f)\subset (-\infty,t)\setminus\{s\}\}\subset\dcal_{st}\cap \lcal_t^{U(1)}
\end{equation}
and $\hat\dcal_{st}\subset C^\infty_c((-\infty,t))$.
Consider the map
\begin{equation}\label{eq:phiu1}
\begin{aligned}
\varphi: C^\infty_c((-\infty,t))&\rightarrow L^2((-\infty,t),(t-x)dx),\\
f&\mapsto f^\prime,
\end{aligned}
\end{equation}
which, by \eqref{eq:cvacentro}, is an isometry if $C^\infty_c((-\infty,t))$ is equipped with the norm $\|\cdot\|_t.$ To get the claim we show that the closure of $\varphi(\hat\dcal_{st})$ is the whole space $L^2((-\infty,t),(t-x)dx)$.

 To that end, we check that the orthogonal complement of $\varphi(\hat\dcal_{st})$ in $L^2((-\infty,t),(t-x)dx)$ is trivial:
\begin{equation}
0=\int_{-\infty}^t (t-x) f'(x)g(x)dx
\end{equation}
for all $f\in \hat\dcal_{st}$ implies that $g(x)=\frac{c}{t-x}$ for Lebesgue-almost all $-\infty < x < s$ with some constant $c$. But such $g$ is not in $L^2((-\infty,s),(t-x) dx)$ unless $c=0$. Thus $g$ must vanish almost everywhere in $(-\infty,s)$, and by a similar argument, also in $(s,t)$. 
\optqed{}\end{proof}

As a byproduct of the proof above, we see that the space $\es_t$ can be identifed with $L^2((-\infty,t),(t-x)dx)$ via the map \eqref{eq:phiu1}, with the projectors $\bar Q_{s}$ being multiplication with the characteristic function of $(-\infty,s)$.

Note that the inclusion
\begin{equation}
\lcal^{U(1)}((1,\infty))\subset \lcal^{U(1)}((0,\infty))
\end{equation}
is a +half-sided modular inclusion (see Definition \ref{hsm} below). It is indeed the unique nontrivial irreducible +half-sided modular inclusion up to unitary equivalence \cite[Corollary 4.3.2]{LonNotes}.
Similarly
\begin{equation}\label{U1hsm}
\lcal_{-1}^{U(1)}\subset \lcal_0^{U(1)}
\end{equation}
is the unique nontrivial irreducible $-$half-sided modular inclusion up to unitary equivalence.

\subsection{Decomposition}\label{sec:decomp}
In this section we show that the family of standard subspaces induced by a half-sided modular inclusion yields a family of differential modular inclusions (Definition \ref{def:difffamily}).
We start by recalling the notion of (single-particle) half-sided modular inclusion and some of its relevant consequences following \cite{LonNotes}.
\begin{definition}\label{hsm}
Let $K\subset H$ be real standard subspaces of a complex Hilbert space $\mathcal{H}$. If 
$$\Delta^{-it}_H K\subset K \text{ for every } \pm t\geq0,$$
the inclusion $K\subset H$ is called a $\pm$half-sided modular inclusion of standard subspaces.
\end{definition}
In our context, we will work with $-$half-sided modular inclusions only.

\noindent
As above, let $\mathbf{P}$ denote the group generated by translations and dilations on the real line $\mathbb{R}$, which we denote respectively with $\vartheta$ and $\delta$, i.e.
$\vartheta(t)(x)=x+t$, $\delta(s)(x)=e^sx$ for $t,s,x\in\mathbb{R}$. We denote by $\delta_1$ the one-parameter subgroup of $\mathbf{P}$ of dilations of the interval $(1,\infty)$, i.e. $\delta_1 (s)= \vartheta(-1)\delta(s)\vartheta(1)$.

A unitary representation $V$ of the group $\mathbf{P}$ is said to have \emph{positive energy} if the generator of the subgroup of translations, $t\rightarrow V(\vartheta(t))$, is a positive operator. It is said to be \emph{nonsingular} if the kernel of the generator of translations is trivial.

The following is the single-particle version of Wiesbrock's theorem for half-sided modular inclusions \cite{wiesbrock1993}, see \cite[Theorem 2.4.1]{LonNotes}.
\begin{theorem}\label{wiesbrock}
Let $K\subset H$ be a $-$half-sided modular inclusion of standard subspaces in a Hilbert space $\mathcal{H}$.
There exists a positive energy unitary representation $V$ of $\mathbf{P}$ on $\mathcal{H}$ determined by
$$V(\delta(2\pi s))=\Delta_H^{+is},\hspace{3mm} V(\delta_1(2\pi s))=J_H \Delta_K^{+is} J_H.$$
The translation unitaries $V(\vartheta(t))$ are defined by 
$$V(\vartheta(e^{2\pi t}-1))=\Delta_H^{+it}J_H\Delta_K^{-it}J_H$$
and satisfy $V(\vartheta(s))H\subset H$, $s\leq 0$, and $K = V(\vartheta(-1))H$.
\end{theorem}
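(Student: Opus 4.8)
The plan is to prove this as the single‑particle case of the Borchers--Wiesbrock theorem, following the standard‑subspace construction of \cite[Ch.~2]{LonNotes}. (Alternatively one could second‑quantize the half‑sided modular inclusion $K\subset H$ to a half‑sided modular inclusion of von Neumann algebras on the Fock space over $\mathcal{H}$, invoke Wiesbrock's theorem \cite{wiesbrock1993} there, and restrict the resulting positive‑energy representation of $\mathbf{P}$ to the one‑particle subspace; but the direct argument is cleaner.)

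\emph{Step 1: commutation relations and the group law.} Starting from the half‑sided hypothesis $\Delta_H^{-it}K\subseteq K$ for $t\le 0$ together with the Tomita--Takesaki relations for $K$ and $H$, I would first derive the Borchers commutation relations between the two modular flows. One then introduces the candidate translations through the reparametrization $a=e^{2\pi t}-1$,
\[
  V(\vartheta(e^{2\pi t}-1)):=\Delta_H^{it}J_H\Delta_K^{-it}J_H,
\]
and checks that $a\mapsto V(\vartheta(a))$ is a strongly continuous one‑parameter unitary group obeying the affine relations $\Delta_H^{is}\,V(\vartheta(a))\,\Delta_H^{-is}=V(\vartheta(e^{2\pi s}a))$ and $J_H\,V(\vartheta(a))\,J_H=V(\vartheta(-a))$. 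Together with $V(\delta(2\pi s)):=\Delta_H^{is}$ these are exactly the defining relations of a unitary representation of $\mathbf{P}$, and the formula $V(\delta_1(2\pi s))=J_H\Delta_K^{is}J_H$ follows by unwinding $\delta_1(s)=\vartheta(-1)\delta(s)\vartheta(1)$.

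\emph{Step 2: positivity of the energy.} The hard part will be to show that the generator $P$ of the translation subgroup $V(\vartheta(\cdot))$ is positive; this does not follow from the group relations alone but requires the analyticity built into modular theory. The mechanism is that for $\xi\in H$ the vector‑valued map $t\mapsto\Delta_H^{it}\xi$ extends to a bounded analytic function on the strip $\{-\tfrac12<\operatorname{Im}z<0\}$, with the two edges linked through $J_H$ and $\Delta_H^{1/2}\xi=J_H\xi$, and analogously for $\Delta_K$ on vectors of $K$; feeding these continuations into the defining formula for the translations, one shows that suitable matrix elements $a\mapsto\langle\xi,V(\vartheta(a))\eta\rangle$ extend to bounded analytic functions on a half‑plane, so that a Phragm\'en--Lindel\"of / three‑lines argument forces the spectral measure of $P$ to be concentrated on $[0,\infty)$. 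This is the technical core of \cite[Theorem~2.4.1]{LonNotes}, the standard‑subspace analogue of Borchers' commutation‑relations theorem \cite{Bor:ttt}.

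\emph{Step 3: the geometric relations.} Finally, $V(\vartheta(s))H\subseteq H$ for $s\le 0$ and $K=V(\vartheta(-1))H$ are read off: the second holds because conjugation by $V(\vartheta(-1))$ carries the pair $(\Delta_H,J_H)$ to $(\Delta_K,J_K)$ by the commutation relations, and a standard subspace is uniquely determined by its modular objects; the first then follows by applying $\Delta_H^{is}$‑conjugation ($s\ge 0$) to $K=V(\vartheta(-1))H$, using $\Delta_H^{is}H=H$ together with the half‑sided hypothesis, so that $V(\vartheta(1-e^{2\pi s}))H\subseteq H$ for all $s\ge 0$. Strong continuity in all parameters and the remaining group identities are routine once Steps 1 and 2 are in place.
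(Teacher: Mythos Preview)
The paper does not actually prove this theorem: it is quoted as a known result, with the attribution ``The following is the single-particle version of Wiesbrock's theorem for half-sided modular inclusions \cite{wiesbrock1993}, see \cite[Theorem 2.4.1]{LonNotes}.'' Your sketch is a faithful outline of precisely that proof in \cite{LonNotes}---deriving the Borchers commutation relations from the half-sided hypothesis, constructing the translations, proving positivity of the generator via the strip analyticity of the modular groups, and reading off the geometric relations from the resulting representation---so there is nothing to compare; you have supplied what the paper deliberately outsources.
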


\begin{definition}
Let  $K\subset H$ be a $-$half-sided modular inclusion and $V$ be its induced representation of $\mathbf{P}$ from Theorem \ref{wiesbrock}. $K\subset H$ is said to be 
\begin{enumerate}[(i)]
\item \emph{irreducible} if $V$ is irreducible;
\item \emph{nondegenerate} if $V$ is nonsingular;
\item \emph{trivial} if $K=H$.
\end{enumerate}
\end{definition}

The following statement is the content of \cite[Corollary 4.3.2]{LonNotes} which is obtained by decomposing the representation $V$ into irreducibles.
\begin{proposition}\label{decomposition}
Let $K\subset H$ be a $-$half-sided modular inclusion. Then it is canonically a direct sum of a nondegenerate $-$half-sided modular inclusion and a trivial $-$half-sided modular inclusion.
If  $K\subset H$ is a nondegenerate $-$half-sided modular inclusion then it is a countable direct sum of irreducible $-$half-sided modular inclusions unitary equivalent to $\lcal_{-1}^{U(1)}\subset \lcal_0^{U(1)}$ in \eqref{U1hsm}. 
\end{proposition}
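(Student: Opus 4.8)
The plan is to reduce the statement entirely to the representation theory of $\mathbf{P}$, which is precisely why the unique strictly positive-energy irreducible of $\mathbf{P}$ surfaces in \eqref{trasl}--\eqref{dil}. By Theorem~\ref{wiesbrock}, the $-$half-sided modular inclusion $K\subset H$ in $\mathcal{H}$ yields a positive-energy unitary representation $V$ of $\mathbf{P}$ with $V(\delta(2\pi s))=\Delta_H^{is}$, $V(\vartheta(s))H\subset H$ for $s\le 0$, and $K=V(\vartheta(-1))H$; moreover $J_H$ normalizes $V$, implementing the reflection $r\colon x\mapsto -x$ of $\mathbb{R}$ (i.e.\ $J_HV(g)J_H=V(r(g))$ for $g\in\mathbf{P}$), which is part of the Borchers-type commutation relations built into Theorem~\ref{wiesbrock}; in particular $J_HPJ_H=-P$ for the generator $P\ge 0$ of translations. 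Since $H$ is the closed real subspace of vectors fixed by (the closure of) $J_H\Delta_H^{1/2}$, and $K=V(\vartheta(-1))H$, the whole inclusion is recovered from the triple $(\mathcal{H},V,J_H)$, and it suffices to decompose this triple.

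First I would split off the trivial summand. Let $E$ be the spectral projection of $P$ onto $\{0\}$. Since dilations rescale $P$ and $J_HPJ_H=-P$, the subspace $\ker P=E\mathcal{H}$ is invariant under $V$ and under $J_H$, hence also under $\Delta_H^{is}=V(\delta(2\pi s))$; thus $E$ commutes with $V$, with $\Delta_H$ and with $J_H$, so it commutes with $J_H\Delta_H^{1/2}$ and therefore reduces the standard subspace $H$; as $E$ also commutes with $V(\vartheta(-1))$ it reduces $K=V(\vartheta(-1))H$ as well. On $E\mathcal{H}$ one has $P=0$, so $V(\vartheta(-1))=\idop$ and the restricted inclusion is $EH\subset EH$, i.e.\ trivial; on $(1-E)\mathcal{H}$ the translation generator has trivial kernel, so the restricted inclusion is nondegenerate. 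Since $E$ is determined by $P$, which is intrinsic to $K\subset H$, this decomposition is canonical, proving the first assertion.

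For the second assertion, assume $\ker P=\{0\}$ and decompose $V$ into irreducibles. On $\mathbf{P}$ the only irreducibles are the characters of the dilation subgroup (on which translations act trivially, hence $P=0$) and a single irreducible with nonzero, necessarily positive, translation generator; by the classification of the unitary dual of the $ax+b$ group (cf.\ \cite[Section~6.7]{FollHarm}) the latter is unitarily equivalent to the representation $V_{U(1)}$ of \eqref{trasl}--\eqref{dil} on $\kcal=L^2_{\mathbb{C}}(\rbb_+,p\,dp)$. Since $\ker P=\{0\}$, no characters occur, so there is a separable multiplicity space $\mathcal{M}$ and a unitary $\mathcal{H}\cong\kcal\otimes\mathcal{M}$ intertwining $V$ with $V_{U(1)}\otimes\idop$. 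Because $J_{\lcal^{U(1)}_0}$ likewise normalizes $V_{U(1)}$ implementing the same reflection $r$, the unitary $(J_{\lcal^{U(1)}_0}\otimes\idop)J_H$ commutes with $V_{U(1)}\otimes\idop$ and hence equals $\idop\otimes w$ for some unitary $w$ on $\mathcal{M}$; from $J_H^2=\idop$ one gets $w^2=\idop$, so $w$ is a symmetry. Choosing an orthonormal basis $(e_i)$ of $\mathcal{M}$ diagonalizing $w$, the triple $(\mathcal{H},V,J_H)$ decomposes as a direct sum over $i$ of copies of $(\kcal,V_{U(1)},\pm J_{\lcal^{U(1)}_0})$ (the sign being the eigenvalue of $w$ on $e_i$), where $-J_{\lcal^{U(1)}_0}$ is the modular conjugation of $i\lcal^{U(1)}_0$ and the scalar unitary $\xi\mapsto-i\xi$ on $\kcal$ (which commutes with $V_{U(1)}$) carries $i\lcal^{U(1)}_{-1}\subset i\lcal^{U(1)}_0$ back to $\lcal^{U(1)}_{-1}\subset\lcal^{U(1)}_0$. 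Hence in every summand the inclusion $K\subset H$ is unitarily equivalent to $\lcal^{U(1)}_{-1}=V_{U(1)}(\vartheta(-1))\lcal^{U(1)}_0\subset\lcal^{U(1)}_0$ of \eqref{U1hsm}, and the index set is countable since $\mathcal{H}$ is separable.

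The representation-theoretic input (existence and uniqueness of the relevant irreducible of $\mathbf{P}$, and the reduction of $V$) is classical; the main obstacle is the bookkeeping in the last paragraph, i.e.\ verifying that the modular conjugation $J_H$ --- and therefore the standard subspaces $H$ and $K$ --- respect the irreducible decomposition of $V$. This rests on $J_H$ normalizing $V$ through the reflection $r$ (a consequence of the commutation relations underlying Theorem~\ref{wiesbrock}) together with Schur's lemma, which confines the discrepancy between $J_H$ and $J_{\lcal^{U(1)}_0}\otimes\idop$ to a symmetry of the multiplicity space $\mathcal{M}$. One also has to keep in mind that ``canonical'' in the first assertion means the split is intrinsic, which holds because $E$ is determined by $P$ alone.
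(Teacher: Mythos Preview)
Your strategy---reducing everything to the representation $V$ of $\mathbf{P}$ furnished by Theorem~\ref{wiesbrock} and decomposing it into irreducibles---is precisely what the paper does: it simply cites \cite[Corollary~4.3.2]{LonNotes} and remarks that the result is obtained by decomposing $V$, giving no further details. Your sketch supplies those details, and the first paragraph (splitting off $\ker P$ via its spectral projection, which commutes with $V$ and with $J_H$) is clean.

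There is, however, a slip in the second step: the expression $J_{\lcal^{U(1)}_0}\otimes\idop$ is not well defined on the \emph{complex} tensor product $\kcal\otimes\mathcal{M}$, since $J_{\lcal^{U(1)}_0}$ is conjugate-linear (on simple tensors, $(J\otimes\idop)(\lambda\xi\otimes m)$ would have to equal both $\bar\lambda\,J\xi\otimes m$ and $J\xi\otimes\lambda m$). The standard cure is to fix an auxiliary conjugation $C$ on $\mathcal{M}$ and compare $J_H$ with $J_0:=J_{\lcal^{U(1)}_0}\otimes C$, which \emph{is} a bona fide anti-unitary satisfying $J_0V(g)J_0=V(r(g))$. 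Then $J_0J_H$ is unitary and commutes with $V$, hence equals $\idop\otimes w$ by Schur; thus $J_H=J_{\lcal^{U(1)}_0}\otimes C'$ with $C':=Cw$, and $J_H^2=\idop$ forces $(C')^2=\idop$ (not $w^2=\idop$). So $C'$ is itself a conjugation on $\mathcal{M}$, and an orthonormal basis of $C'$-fixed vectors decomposes $(\mathcal{H},V,J_H)$ directly into copies of $(\kcal,V_{U(1)},J_{\lcal^{U(1)}_0})$; no $\pm$ signs arise, and the detour through $i\lcal^{U(1)}_0$ becomes unnecessary. With this adjustment your argument goes through and matches the cited proof.
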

From the latter decomposition, we easily derive our desired result.
\begin{proposition}
Let $K\subset H$ be a $-$half-sided modular inclusion of standard subspaces in a Hilbert space $\mathcal{H}$. Then $\{V(\vartheta(t))H\}_{t\in\mathbb{R}}$ is a family of differential modular inclusions. 
\end{proposition}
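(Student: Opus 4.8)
The plan is to reduce the general statement to the two building blocks established earlier in the section, namely Lemma~\ref{directsum} and Proposition~\ref{vacuumdiff}, by way of the decomposition in Proposition~\ref{decomposition}. First I would unwind the definitions: given a $-$half-sided modular inclusion $K\subset H$ with induced positive-energy representation $V$ of $\mathbf{P}$ (Theorem~\ref{wiesbrock}), set $\lcal_t:=V(\vartheta(t))H$ for $t\in\mathbb{R}$. One checks that this is an increasing family of standard subspaces: since $V(\vartheta(s))H\subset H$ for $s\leq0$ and $V$ is a representation, for $s\leq t$ we get $\lcal_s=V(\vartheta(t))V(\vartheta(s-t))H\subset V(\vartheta(t))H=\lcal_t$. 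Standardness of each $\lcal_t$ follows because $V(\vartheta(t))$ is unitary and $H$ is standard. (Here one also notes that $\mathcal{H}$, regarded as a symplectic Hilbert space via $\tau=\re\langle\cdot,\cdot\rangle$, $\sigma=\im\langle\cdot,\cdot\rangle$, is pure, so the purification construction is trivial in the sense of Remark~\ref{remark:pure}, and each $\lcal_t$ is purely factorial.)

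The core of the argument is then the reduction via direct sums. By Proposition~\ref{decomposition}, $K\subset H$ decomposes canonically as a direct sum $\bigoplus_{n\in\mathcal{I}}(K_n\subset H_n)$ of inclusions, where each summand is either trivial ($K_n=H_n$) or unitarily equivalent to the $U(1)$-current inclusion $\lcal_{-1}^{U(1)}\subset\lcal_0^{U(1)}$ of \eqref{U1hsm}. Because the representation $V$ restricts to the corresponding representation $V_n$ on each $\mathcal{H}_n$, the translated family also decomposes: $\lcal_t=\bigoplus_{n\in\mathcal{I}}\lcal_t^n$ with $\lcal_t^n=V_n(\vartheta(t))H_n$. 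It therefore suffices, by Lemma~\ref{directsum}(\ref{it:inclsum}), to know that $\{\lcal_t^n\}_{t\in\mathbb{R}}$ is a family of differential modular inclusions for each $n$. For the $U(1)$-summands this is exactly Proposition~\ref{vacuumdiff}, transported along the unitary equivalence (one should remark that the notion of ``family of differential modular inclusions'' is manifestly invariant under unitary equivalence of symplectic Hilbert spaces, since the modular data $\Delta_t$, $J_t$, the entropy forms $S_t$, and the projectors $Q_t$ are all intertwined by the unitary; the translated family $\lcal_t^{U(1)}=\lcal^{U(1)}((-\infty,t))$ corresponds, up to the fixed affine reparametrization identifying $\vartheta(\cdot)$-translates of $\lcal_0^{U(1)}=\lcal^{U(1)}((-\infty,0))$ with the $\lcal_t^{U(1)}$, to the family treated in Proposition~\ref{vacuumdiff}, and Def.~\ref{def:difffamily} is invariant under monotone reparametrization). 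For the trivial summands, $\lcal_t^n=H_n$ is constant in $t$, so every pair $(\lcal_s^n,\lcal_t^n)$ coincides; condition~(\ref{it:sortho}) of Def.~\ref{def:dmp} holds since $Q_s^n$ is the identity on $(\lcal_s^n)_\fact=\es_t^n$ (the inclusion is into itself, cf.\ Lemma~\ref{lemma:subsup}(\ref{it:supset}) in the case $\lcal_0\supset\lcal_1$, which here is an equality), and condition~(\ref{it:phidense}) is then trivial.

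The routine but slightly delicate bookkeeping is to confirm that the reparametrization relating $\{V(\vartheta(t))H\}_{t}$ to the $U(1)$ family of Proposition~\ref{vacuumdiff} is indeed monotone: from Theorem~\ref{wiesbrock} one has $\Delta_H^{+it}J_H\Delta_K^{-it}J_H=V(\vartheta(e^{2\pi t}-1))$, so the ``natural'' one-parameter flow is $t\mapsto e^{2\pi t}-1$, which is strictly increasing; composing with an overall affine shift/scaling of the translation parameter (which does not affect membership in Def.~\ref{def:difffamily}, by the reparametrization invariance noted there and again in Lemma~\ref{directsum}(\ref{it:inclsum})) brings it to the standard $t\mapsto t$ form used for $\lcal_t^{U(1)}=\lcal^{U(1)}((-\infty,t))$. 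I expect the main obstacle, if any, to be purely expository: making precise the statement that the unitary equivalence in Proposition~\ref{decomposition} intertwines the \emph{whole} families $\{V_n(\vartheta(t))H_n\}_t$ (and not just the single inclusion $K_n\subset H_n$) with the $U(1)$ families, which requires tracking that the unitary implementing $(K_n\subset H_n)\cong(\lcal_{-1}^{U(1)}\subset\lcal_0^{U(1)})$ also intertwines the respective translation representations of $\mathbf{P}$ — this is built into the proof of Proposition~\ref{decomposition} (the decomposition is a decomposition of $V$ into irreducibles), so it is available, but it should be invoked explicitly. Once that is in place, the conclusion is immediate from Lemma~\ref{directsum}(\ref{it:inclsum}).
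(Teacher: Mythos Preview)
Your proposal is correct and follows essentially the same approach as the paper: decompose via Proposition~\ref{decomposition}, invoke Proposition~\ref{vacuumdiff} on the $U(1)$ summands, and reassemble with Lemma~\ref{directsum}. One minor simplification: the reparametrization discussion is unnecessary, since the unitary equivalence in Proposition~\ref{decomposition} is obtained by decomposing the representation $V$ itself into irreducibles, so it already intertwines the translation one-parameter groups and hence carries $\{V_n(\vartheta(t))H_n\}_t$ directly onto $\{\lcal_t^{U(1)}\}_t$ without any change of parameter.
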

\begin{proof}
Combining Propositions \ref{decomposition} and \ref{vacuumdiff} we have that $\{V(\vartheta(t))H\}_{t\in\mathbb{R}}$ decomposes into a direct sum of families of differential modular inclusions, thus the conclusion follows by Lemma \ref{directsum}.
\optqed{}\end{proof}

Of course a similar statement can be obtained starting from a $+$half-sided modular inclusion.

\section{Examples}\label{sec:examples}

\subsection{Quantum mechanics}

As the simplest, but instructive example, let us consider a finite-dimensional symplectic Hilbert space; for concreteness, $\kcal=\cbb^n$ with $\sigma(f,g)=\im ( f, g )$, where $(f, g)$ denotes the standard scalar product on $\cbb^n$, and $\tau(f,g)=\re(f, M g )$ with some matrix $M \geq 1$. (If $M$ has no eigenvalue of 1, this may be interpreted as a thermal state on $n$ independent harmonic oscillators, cf.~\cite{NarnhoferThirring:KMS}, and $M=1$ corresponds to the ground state of the oscillators.)

\begin{example}\label{ex:fdstraight}
   Let $E$ be a $(\cdotarg,\cdotarg)$-orthogonal projector that commutes with $M$, and set $\lcal=E\kcal$. Then 
   \begin{equation}\label{eq:sfinite}
       S_\lcal(f) = 2 ( f, E \arcoth (M) f )    
   \end{equation}
where we read $\arcoth (M)$ as $\infty$ on eigenspaces of $M$ for eigenvalue 1.
   If $E_t:=E(-\infty,t)$ is the spectral family of $M$, then $\lcal_t = E_t \kcal$ is a family of differential modular inclusions with
  \begin{equation}\label{eq:tfinite}
     T_f(s,t) = 2 (f, E_s E_t \arcoth (M) f).
\end{equation}

\end{example}

\begin{proof}
   For the first statement, by choosing a basis in which both $M$ and $E$ are diagonal and applying Lemma~\ref{directsum}, it is sufficient to prove the statement for $n=1$. In this case, either $E=0$ (in which case the statement is trivial) or $E=1$ (assumed in the following). Hence $\lcal=\kcal$ and $M = m\idop$ with some $m \geq 1$. Assume first $m=1$. In that case, $(\kcal,\tau,\sigma)$ is pure, and as in Remark~\ref{remark:infinity}, one has $\lcal_\infty = \kcal \oplus 0$, $\lcal_0= 0 \oplus \kcal$. From Proposition~\ref{prop:entroInf}, one sees that both sides of \eqref{eq:sfinite} are infinite (unless $f=0$, in which case both are 0).
   
   Now let $m>1$. In that case, one has $D=-im^{-1}$ and 
   \begin{equation}
   \ipf = \begin{pmatrix}
             i m^{-1}  & i\sqrt{1-m^{-2}}
             \\
             i \sqrt{1-m^{-2}} & -i m^{-1}
         \end{pmatrix}.
\end{equation}
One then computes the modular operator $\Delta_\lcal$ of the spaces $\lcal=\kcal \oplus 0 $ and  $\ipf \lcal$ to be
\begin{equation}
   \Delta =\begin{pmatrix}
      1 & -2 (m^2-1)^{-1/2} \\
      -2 ( m^2-1)^{-1/2} &   ({m^2+3})/({m^2-1})
    \end{pmatrix},
\end{equation}
so that $\log \Delta_\lcal$ has the eigenvalues $\pm 2 \arcoth (m)$, 
and the projector $Q_\lcal$ is obtained (for example by Lemma~\ref{lemma:pfmodular}) as
\begin{equation}
   Q_\lcal = \begin{pmatrix}
           1 & \sqrt{ m^2-1} \\ 0 & 0
         \end{pmatrix}.
         \quad
\end{equation}
From Proposition~\ref{prop:entroF}, one then obtains \eqref{eq:sfinite}.

For the differential modular inclusion, again by Lemma~\ref{directsum} it suffices to consider only the case $n=1$. If $E_t=0$, then $S_t=0$ and  $(\lcal_s,\lcal_t)$ is trivially in differential modular position; both sides of \eqref{eq:tfinite} vanish. If instead $E_t=E_s=1$, i.e., $\lcal_s=\lcal_t$,  differential modular position is clear and $T_f(s,t)=S_t(f)$ in agreement wih \eqref{eq:tfinite}. Hence let $E_t=1$ and $E_s=0$. If $m=1$, then $f \in \dom S_t= \{0\}$ must vanish, and \eqref{eq:tfinite} holds trivially. If $m>1$, then $Q_s=0$, $\dcal_{st}^+ = \{0\}$, $\dcal_{st}^-=\kcal$, one has $\bar Q_s = 0$, and both sides of \eqref{eq:tfinite} vanish.
\optqed{}\end{proof}

Of course, the same results for the entropy are obtained in the usual formalism representing thermal states of the harmonic oscillator as density matrices.
It is interesting to note how our inequalities for $d^2S_t/dt^2$ work out in this example. Writing $M = \sum_j m_j E_j$ in spectral decomposition,
the terms in Theorem~\ref{theorem:secondderivnonsmooth} are (in suggestive notation)
\begin{equation}
  \frac{\partial^2 T_f}{\partial s^2} (t-0,t) =  \frac{d^2 S_t(f)}{dt^2} =   2\sum_j \delta'(t-m_j) \arcoth(m_j) \, (f, E_j f),
 \quad  \frac{\partial^2 T_f}{\partial t^2} (t-0,t) = 0, 
\end{equation}
so that the inequality \eqref{eq:secondderivnonsmooth} turns into an equality; note that the distribution $\delta'$ does not have a definite sign, hence finding nontrivial lower estimates would not be possible.

In the above example, our condition of \emph{differential modular position} was satisfied because $M$ leaves each subspace $\lcal_t$ invariant. Clearly, this will not be true for more general subspaces. We provide an explicit counterexample:

\begin{example}\label{ex:fdskew}
 In the above setting, let $n=2$, $M=\operatorname{diag}(2,3)$, let $\lcal_0\subset\kcal$ be the subspace spanned over $\rbb$ by the vectors $(1,1)$ and $(i,0)$, and let $\lcal_1=\kcal$. Then $(\lcal_0,\lcal_1)$ is \emph{not} in differential modular position.
\end{example}
\begin{proof}
  By explicit matrix computation, one can find the modular operators related to $\lcal_0$ and $\lcal_1$, and hence an explicit formula for $Q_0$ and $S_1$. For $f \in \kcal$, consider
 \begin{equation}
      \delta := S_1(f,f) - S_1(Q_0 f, Q_0 f); 
\end{equation}
  one finds that if $f = (a,b)$ with $a,b \in \rbb$, then
  \begin{equation}
     \delta = (b^2-a^2) \log 2.
  \end{equation}
  This is in general not positive, hence $Q_0$ cannot be orthogonal with respect to $S_1$.
\optqed{}\end{proof}
In the same situation, one finds that $S_1(f,f) - S_1(f, Q_0 f) = b(b-a)\log 2$; hence $T_f(s,t)$ defined as in \eqref{deft} may be decreasing in $s$, and the conclusion of Lemma~\ref{lemma:tfprop} fails.

\subsection[Conformal U(1)-current, vacuum state]{Conformal $U(1)$-current, vacuum state}

As an example from quantum field theory, we can take the conformal $U(1)$-current in the vacuum state, as already treated in Sec.~\ref{sec:current}. We briefly summarize the results here, and comment how the estimates on derivatives of the entropy work out in this case.

\begin{example} \label{ex:u1vac}
 Let $(\kcal,\tau,\sigma)$ be the pure symplectic Hilbert space defined in Sec.~\ref{sec:current} and let $\lcal_t = \lcal_t^{U(1)} \subset \kcal$. 
 Then $(\lcal_t)_{t \in \rbb}$ is a family of differential modular inclusions, the space $\es_t$ is isomorphic to $L^2((-\infty,t), (t-x) dx)$ via $\varphi_t: f \mapsto f'$, and we have for $f \in C_c^\infty(\rbb)\subset \kcal$,
 \begin{align}\label{vacresult}
 T_f(s,t)  &= 2\pi \int_{-\infty}^{\min\{s,t\}} (t-x) f'(x)^2 \,dx,
&
S_t(f,f)  &= 2\pi \int_{-\infty}^t (t-x) f'(x)^2 dx.
\end{align}
\end{example}

We note here that $T_f$ is $C^1$, as well as the restriction of smooth functions to the cones $\ccal_\pm$; one has
\begin{equation}\label{vacder}
   \frac{d^2}{dt^2} S_t(f) =
   \frac{\partial^2}{\partial s \partial t} T_f\Big\vert_{s=t-0} = 2\pi f'(t)^2 \geq 0,
  \quad
  \frac{\partial^2}{\partial t^2} T_f\Big\vert_{s=t-0} = 0, 
\end{equation}
so that the second derivative of $S_t$ is positive and given by a boundary term only.

By the results of Sec.~\ref{sec:halfsided}, a similar behaviour is exhibited by general half-sided modular inclusions. Namely, in the notation of Section \ref{sec:decomp}, let $K\subset H$ be a $-$half-sided modular inclusion in the complex Hilbert space $\mathcal{H}$, $K_t:=V(\vartheta(t))H$ and $f\in\mathcal{H}$. By using Lemma \ref{directsum}, we have that $S_t(f,f)$ and $T_f(s,t)$ decompose along the direct sum provided by Proposition \ref{decomposition}. Precisely, if $f=\sum_{n>0} f_n +f_{0}$ is the corresponding decomposition of $f$, with $f_0$ being the component relative to the trivial modular inclusion,
\begin{equation*}
S_t(f,f)=\sum_{n\geq0}S_t(f_n,f_n),\hspace{3mm}T_f(s,t)=\sum_{n\geq0}T_{f_n}(s,t),
\end{equation*}
where $S_t(f_n,f_n)$ and $T_{f_n}(s,t)$ with $n>0$ are given by \eqref{vacresult}. The behaviour of the derivatives of $S_t(f,f)$, at least when taking the $f_n$ to have suitably fast decay, is analogous to \eqref{vacder}, noticing also that $S_t(f_0,f_0)$, i.e., the contribution to the relative entropy given by the trivial half-sided inclusion component, is constant in $t$. As a particular example, this applies to the subspaces associated with lightlike shifted wedges in the real scalar free field, as in \cite{CLR:waveinfo}.

The above situation is compatible with Proposition~\ref{prop:secondderivsmooth}; however, the vanishing of $\frac{\partial^2}{\partial t^2} T_f$ is clearly a special feature of this particular family of subspaces. Even a reparametrization will remove it:

\begin{example}\label{ex:u1rescale}
 In Example~\ref{ex:u1vac}, consider instead $\lcal_t = \lcal_{h(t)}^{U(1)}$ where $h$ is a smooth, strictly increasing function. 
 Then $(\lcal_t)_{t \in \rbb}$ is still a family of differential modular inclusions, but 
\begin{align}
   \frac{d^2}{dt^2} S_t(f) &= \underbrace{2\pi h'(t)^2 f'(t)^2 }_{\frac{\partial^2 T_f}{\partial s \partial t} (t-0,t)} 
    + \underbrace{2\pi h''(t) \int_{-\infty}^t f'(x)^2 \,dx }_{ \frac{\partial^2 T_f}{\partial t^2} (t-0,0)}. 
\end{align}
\end{example}

Of course, this is still compatible with Proposition~\ref{prop:secondderivsmooth}, but while the first term (the ``boundary term'') is still positive, the second derivative of $S_t$ will not be positive in general.

\subsection[Conformal U(1)-current, thermal states]{Conformal $U(1)$-current, thermal states}

Let us now consider thermal (KMS) states on the conformal $U(1)$-current, as described in \cite{BorYng:thermal}. This examples illustrates, in particular, that different scalar products $\tau$ can be chosen with respect to the same symplectic form $\sigma$, and that this leads to different relative entropies. 

Specifically, fixing $\beta>0$, we choose on the non-completed space $C_c^\infty(\rbb)$ the bilinear form
\begin{align}\label{taub}
\tau_\beta (f,g) :=\operatorname{Re} \int_0^\infty \frac{\tilde f(-p)\tilde g(p)}{1-e^{-\beta p}}p \, dp.
\end{align} 
The associated quasifree state fulfills the KMS condition with respect to translations \cite{BorYng:thermal}.

The completion of the symplectic space $C^\infty_c(\mathbb{R})$ with respect to $\tau_\beta$ is $\kcal_{\beta} := L^2_{\mathbb{C}}(\rbb_+, \frac{p}{1-\exp(-\beta p)}\,dp)$ as a real vector space; as before, we do not always denote the Forier transform explicitly. We apply the purification procedure described in Section \ref{sec:ops}:
it is easy to see that
\begin{equation}\label{tds}
\sigma(f,g) = \tau_\beta (f, Dg), \quad f,g \in \kcal_{\beta},
\end{equation}
where $D$ is the multiplication operator by $i\,(1-e^{-\beta  p })$ on $\kcal_{\beta}$.  In the polar decomposition $D= C \lvert D \rvert$, the operators $\lvert D \rvert$ and $C$ act by multiplication with $(1-e^{-\beta  p })$ and $i$, respectively. This induces a complex structure $\ipf$ on $\kcal^{\oplus}_\beta:=\mathcal{K}_{\beta}\oplus\mathcal{K}_{\beta}$ by \eqref{complexstr}; let us denote the complex scalar product as $\hscalar{\cdot}{\cdot}_\beta$.

For $t\in\mathbb{R}$, we define $\lcal^{\beta}_t:=\overline{C_c^\infty((-\infty,t)))} \subset\mathcal{K}_\beta$ where the bar indicates norm closure in $\kcal_{\beta}$.
The bilinear form $\tau_\beta$ is translation invariant, thus we have the following immediate proposition.

\begin{proposition}\label{translkms}
The symplectic action of translations $(\vartheta(t)^\star f)(x)=f(x-t)$, $f\in C^\infty_c(\mathbb{R})$, extends to a unitary representation of the group of translations $\vartheta$ on $\mathcal{K}_\beta^{\oplus}$ by
\begin{equation}
V(\vartheta(t))\big( (f\oplus 0)+\ipf(g\oplus 0) \big)=\vartheta(t)^\star f\oplus 0 + \ipf (\vartheta(t)^\star g\oplus 0), 
\end{equation}
$f,g\in C^\infty_c(\mathbb{R}).$ Furthermore it acts covariantly on the subspaces $\lcal^{\beta}_t$, namely
\begin{equation}
V(\vartheta(t))\lcal^{\beta}_s=\lcal^{\beta}_{s+t} 
\end{equation}
for every $s,t\in\mathbb{R}$.
\end{proposition}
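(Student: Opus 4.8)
The plan is to do all the work at the level of the one-particle space $\kcal_\beta$ and then to lift to $\kcal_\beta^\oplus$, exploiting that every operator entering the purification \eqref{tds}--\eqref{complexstr} is a momentum-space multiplication operator and hence commutes with translations.

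First I would check that $\vartheta(t)^\star$ preserves $\tau_\beta$ on $C_c^\infty(\mathbb{R})$: using $\widetilde{\vartheta(t)^\star f}(p)=e^{-ipt}\tilde f(p)$, the phases $e^{\pm ipt}$ cancel in the integrand $\tilde f(-p)\tilde g(p)$ of \eqref{taub}, so $\tau_\beta(\vartheta(t)^\star f,\vartheta(t)^\star g)=\tau_\beta(f,g)$ (and the same cancellation gives $\sigma(\vartheta(t)^\star f,\vartheta(t)^\star g)=\sigma(f,g)$). Hence $\vartheta(t)^\star$ extends by continuity to a $\tau_\beta$-orthogonal operator $V_0(t)$ on $\kcal_\beta$, and $V_0(t)V_0(s)=V_0(t+s)$ already holds on the dense subspace $C_c^\infty(\mathbb{R})$. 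Strong continuity of $t\mapsto V_0(t)$ is routine: on $C_c^\infty(\mathbb{R})$ it follows by dominated convergence (the weight $p/(1-e^{-\beta p})$ is bounded near $0$ and of linear growth, while $\tilde f$ is Schwartz), and then on all of $\kcal_\beta$ by uniform boundedness.

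Next I would pass to the purification. Identifying $\kcal_\beta$ with $L^2_{\mathbb{C}}(\rbb_+,\frac{p}{1-e^{-\beta p}}\,dp)$, the operator $V_0(t)$ is multiplication by $e^{-ipt}$, so it commutes with $D$, $C$, $|D|$ and $\sqrt{1+D^2}$, all of which are multiplication operators. Consequently $V(\vartheta(t)):=V_0(t)\oplus V_0(t)$ commutes with the complex structure $\ipf$ of \eqref{complexstr}, i.e.\ it is $\mathbb{C}$-linear on $\kcal_\beta^\oplus$; being $(\tau_\beta\oplus\tau_\beta)$-orthogonal it therefore preserves $\hscalar{\cdot}{\cdot}_\beta$, hence is unitary, and $t\mapsto V(\vartheta(t))$ is a strongly continuous one-parameter unitary group. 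The displayed formula then comes out by evaluating $V(\vartheta(t))$ on a vector $(f\oplus 0)+\ipf(g\oplus 0)$ with $f,g\in C_c^\infty(\mathbb{R})$: $\mathbb{C}$-linearity together with $V_0(t)f=\vartheta(t)^\star f$ gives exactly $\vartheta(t)^\star f\oplus 0+\ipf(\vartheta(t)^\star g\oplus 0)$. Since $\ipf(g\oplus 0)=(-Dg)\oplus(C\sqrt{1+D^2}g)$ and the multiplier of $C\sqrt{1+D^2}$ is a.e.\ nonzero on $\rbb_+$, the vectors $(f\oplus 0)+\ipf(g\oplus 0)$, $f,g\in C_c^\infty(\mathbb{R})$, are dense in $\kcal_\beta^\oplus$, so this formula determines $V(\vartheta(t))$.

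For covariance, I would use that $V(\vartheta(t))$ preserves $\kcal_\beta\oplus 0$ (identified with $\kcal_\beta$) and acts there as $V_0(t)$. On $C_c^\infty(\mathbb{R})$, $V_0(t)=\vartheta(t)^\star$ merely shifts supports and hence maps $C_c^\infty((-\infty,s))$ bijectively onto $C_c^\infty((-\infty,s+t))$; since $V_0(t)$ is a linear homeomorphism of $\kcal_\beta$, it carries the $\kcal_\beta$-closure of the former onto that of the latter, i.e.\ $V(\vartheta(t))\lcal^{\beta}_s=\lcal^{\beta}_{s+t}$. The whole statement is essentially bookkeeping, as the word ``immediate'' indicates; the only two points deserving a line of justification are that the purification operators $D,C,\sqrt{1+D^2}$ are momentum multipliers (which is what makes $V_0(t)\oplus V_0(t)$ complex-linear for $\ipf$) and the density of the set of vectors $(f\oplus 0)+\ipf(g\oplus 0)$ needed to pin down the explicit action; strong continuity is standard.
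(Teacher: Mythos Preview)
Your argument is correct and follows exactly the approach the paper has in mind: the paper gives no detailed proof at all, merely remarking that ``the bilinear form $\tau_\beta$ is translation invariant, thus we have the following immediate proposition''. You have simply made explicit what ``immediate'' means here---the phase cancellation in~\eqref{taub}, the fact that $D$, $C$, $\sqrt{1+D^2}$ are momentum-space multipliers (so $V_0(t)\oplus V_0(t)$ commutes with $\ipf$), and the covariance on the half-line subspaces via support shifting---which is more than the paper itself provides.
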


Regarding the decomposion of $\kcal_\beta\pf$ with respect to $\lcal_t^\beta$, we find:

\begin{proposition}
For each $t$, the subspace $\lcal_t^{\beta} \oplus 0\subset\mathcal{K}^{\oplus}_\beta$  is standard and factorial, i.e., in the notation of Section~\ref{sec:ops} we have $\lcal_t^{\beta}=(\lcal^{\beta}_{t })_\fact$ and $\mathcal{K}^{\oplus}_\beta=(\lcal^{\beta }_{t})^\oplus_\fact$.
\end{proposition}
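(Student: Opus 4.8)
The plan is to show that $\lcal_t^\beta \oplus 0$ is standard and factorial in $\kcal_\beta\pf$ by verifying directly the defining conditions: standardness means $(\lcal_t^\beta\oplus 0)\cap\ipf(\lcal_t^\beta\oplus 0)=\{0\}$ and $(\lcal_t^\beta\oplus 0)+\ipf(\lcal_t^\beta\oplus 0)$ is dense; factoriality (in the notation of Sec.~\ref{sec:ops}) means that in the decomposition of Lemma~\ref{lemma:kdecomp} one has $(\lcal_t^\beta)_\abel=(\lcal_t^\beta)_\infty=\{0\}$ and the component $\lcal_0\pf=(\lcal_t^\beta\oplus 0+\ipf(\lcal_t^\beta\oplus 0))^\perp$ is trivial. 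By translation covariance (Proposition~\ref{translkms}), it suffices to treat $t=0$. First I would write the complex structure $\ipf$ from \eqref{complexstr} explicitly using $|D|=(1-e^{-\beta p})$ and $C=i$ (multiplication operators): $\ipf(f\oplus g)=(-Df + C\sqrt{1+D^2}\,g)\oplus(C\sqrt{1+D^2}\,f + Dg)$, where on $\kcal_\beta$ the operator $D$ is multiplication by $i(1-e^{-\beta p})$, so $1+D^2$ is multiplication by $1-(1-e^{-\beta p})^2 = e^{-\beta p}(2-e^{-\beta p})>0$, hence $\sqrt{1+D^2}$ is a bounded, boundedly invertible positive multiplication operator.

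Next I would compute $\ipf(\lcal_0^\beta\oplus 0)$: for $f\in\lcal_0^\beta$, $\ipf(f\oplus 0)=(-Df)\oplus(C\sqrt{1+D^2}\,f)$. The key observation is that the second component $C\sqrt{1+D^2}\,f = i\sqrt{1+D^2}\,f$ is obtained from $f$ by a bounded, boundedly invertible multiplication operator (commuting with translations), hence $\{C\sqrt{1+D^2}f : f\in\lcal_0^\beta\}$ has the same closure properties as $\lcal_0^\beta$ itself. For \textbf{standardness}, suppose $f\oplus 0 = \ipf(g\oplus 0)$ with $f,g\in\lcal_0^\beta$; comparing second components forces $C\sqrt{1+D^2}\,g=0$, hence $g=0$ and then $f=0$, giving the trivial-intersection property. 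For density of $\lcal_0^\beta\oplus 0 + \ipf(\lcal_0^\beta\oplus 0)$: a vector $u\oplus v$ orthogonal to this sum (in the real inner product $\tau\pf$) must be $\tau$-orthogonal to $\lcal_0^\beta$ in the first slot and, via the $\ipf$-part, the pair $(u,v)$ must satisfy $\tau(u,-Df)+\tau(v,C\sqrt{1+D^2}f)=0$ for all $f\in\lcal_0^\beta$; unraveling this and using that $\lcal_0^\beta = \overline{C_c^\infty((-\infty,0))}$ together with the fact that multiplication by the bounded invertible symbols does not enlarge or shrink this closure, I would conclude $u\in(\lcal_0^\beta)^\perp$ and $v\in(\lcal_0^\beta)^\perp$ separately, then show $(\lcal_0^\beta)^\perp$ relates to $\lcal_0^{\beta\prime}$ and handle the cross term. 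Actually a cleaner route: $(\lcal_0^\beta\oplus 0)' $ (symplectic complement in $\kcal_\beta\pf$) is computed from $\sigma\pf(f\oplus 0,\cdot)=\tau\pf(f\oplus 0,-\ipf\cdot)$, and one shows $(\lcal_0^\beta\oplus 0)\cap(\lcal_0^\beta\oplus 0)'=\{0\}$ (no abelian part) by a Fourier-support / analyticity argument: an element of the symplectic complement of $\lcal_0^\beta$ that also lies in $\lcal_0^\beta$ must vanish, which is precisely the statement that the $U(1)$-current subspace for a half-line, now in the thermal representation, is factorial.

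I would then assemble: trivial intersection with its symplectic complement kills the abelian part $(\lcal_0^\beta)_\abel$; density of $\lcal_0^\beta\oplus 0 + \ipf(\lcal_0^\beta\oplus 0)$ kills both $\lcal_0\pf$ and (a fortiori) the nonseparating part $(\lcal_0^\beta)_\infty=\lcal_0^\beta\cap\ipf\lcal_0^\beta\subseteq$ that intersection $=\{0\}$ — so by Lemma~\ref{lemma:kdecomp} only the factorial part survives, i.e. $(\lcal_0^\beta)_\fact=\lcal_0^\beta$ and $(\lcal_0^\beta)_\fact\pf=\kcal_\beta\pf$. The \textbf{main obstacle} I expect is the density statement, equivalently triviality of $\lcal_0\pf$: one must show that $\lcal_0^\beta + i\sqrt{1+D^2}\,\lcal_0^\beta$ (transported into $\kcal_\beta\pf$ via the formula for $\ipf$) is dense, which amounts to showing that the ``complexification twisted by the bounded multiplier $\sqrt{1+D^2}$'' of the half-line subspace still fills the doubled space — the thermal analogue of the vacuum fact that $\lcal^{U(1)}(I)$ is standard. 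I would prove this by reducing, via the boundedly invertible multipliers and translation covariance, to a statement about Fourier transforms of $C_c^\infty((-\infty,0))$ functions: the only locally integrable function orthogonal (in the relevant weighted $L^2$) to all derivatives of such test functions on $(-\infty,0)$ is a constant multiple of $(0-x)^{-1}$-type singular function which fails to be square integrable, exactly as in the proof of Proposition~\ref{vacuumdiff}; hence the orthogonal complement is trivial and density follows.
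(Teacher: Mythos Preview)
Your treatment of $(\lcal_0^\beta)_\infty=\{0\}$ matches the paper's: comparing second components in $f\oplus 0=\ipf(g\oplus 0)$ forces $C\sqrt{1+D^2}\,g=0$, hence $g=0$. One correction, though: $\sqrt{1+D^2}$ is \emph{not} boundedly invertible. Since $1+D^2$ acts as multiplication by $e^{-\beta p}(2-e^{-\beta p})$, which tends to $0$ as $p\to\infty$, the inverse is unbounded. Only injectivity (trivial kernel) is needed for this step, so the conclusion survives, but your stated reason is wrong and this error propagates into your density argument.

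The genuine gap is the density of $\lcal_0^\beta\oplus 0+\ipf(\lcal_0^\beta\oplus 0)$, i.e.\ triviality of $(\lcal_0^\beta)_0\pf$. Your proposed reductions do not work: the multipliers $D$ and $\sqrt{1+D^2}$ act in momentum space and do \emph{not} preserve the configuration-space localized subspace $\lcal_0^\beta$, so the claim that ``multiplication by the bounded invertible symbols does not enlarge or shrink this closure'' is false on two counts (not boundedly invertible, and not preserving the subspace). From $u\in(\lcal_0^\beta)^{\perp_{\tau}}$ and $Du-C\sqrt{1+D^2}v\in(\lcal_0^\beta)^{\perp_{\tau}}$ you cannot conclude $u=v=0$, because $(\lcal_0^\beta)^{\perp_{\tau}}$ is far from trivial in $\kcal_\beta$. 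Your final paragraph then imports the argument from Proposition~\ref{vacuumdiff}, but that argument establishes density in the \emph{entropy} Hilbert space $\es_t$ (a weighted $L^2$ on the half-line, with $f\mapsto f'$), which is a completely different space from $\kcal_\beta\pf$; the $(t-x)^{-1}$ obstruction has no analogue here.

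What the paper actually uses is a Reeh--Schlieder type analyticity argument: for $f\perp(\lcal_0^\beta+\ipf\lcal_0^\beta)$ and arbitrary $\varphi$, the function $s\mapsto\hscalar{f}{V(\vartheta(s))\varphi}_\beta$ extends analytically to the upper half-plane (because the translation generator is positive in the thermal one-particle representation), and it vanishes for sufficiently negative $s$ (where the translated $\varphi$ has support inside $(-\infty,0)$). Analytic continuation then gives vanishing for all $s$, hence $f$ is orthogonal to a dense set, so $f=0$. This positivity/analyticity input is the missing idea.

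For the abelian part $(\lcal_0^\beta)_\abel=\lcal_0^\beta\cap(\lcal_0^\beta)'$, the paper's argument is elementary and concrete rather than ``Fourier-support/analyticity'': since the $\tau_\beta$-norm dominates the $L^2(\rbb)$-norm, any $f\in\lcal_0^\beta$ is an $L^2$ function supported in $(-\infty,0]$; if also $\int f\varphi'=0$ for all $\varphi\in C_c^\infty((-\infty,0))$, then $f$ is constant on $(-\infty,0)$, hence zero.
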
 

\begin{proof}
We first show $(\lcal_t^{\beta})_\infty\equiv  \lcal_t^{\beta} \cap \ipf \lcal_t^{\beta} = \{ 0\}$. 
A generic element in this intersection is of the form $f \oplus 0 = \ipf(g \oplus 0)$ for some $f,g \in \lcal_t^{\beta}$. Since $\ipf(g\oplus 0) = -Dg \oplus C\sqrt{1 + D^2}g$, we have $C\sqrt{1+D^2}g=0$, which implies that $g=0$, since $C\sqrt{1 + D^2}$ has trivial kernel.

For establishing $(\lcal_t^\beta)_0 \equiv \lcal_t^{\beta\perp} =\{0\}$,  where $\perp$ indicates the orthogonal complement in $\kcal_\beta^{\oplus}$, we use a modified Reeh-Schlieder argument as follows. 
Let $f \in  \lcal_t^{\beta\perp}$, and let $\varphi = \varphi_a \oplus 0 + \ipf (\varphi_b \oplus 0)$ with arbitrary $\varphi_{a,b}\in C^\infty_c(\mathbb{R})$.
Denoting $\varphi_s=V(\vartheta(s))(\varphi)$ (see Proposition \ref{translkms}), we notice that $\langle f, \varphi_s \rangle_{\beta}$ is analytic in $s \in \mathbb{R}+i\mathbb{R}_+$ (indeed, we can differentiate under the integral sign by use of dominated convergence). But for large negative $s$, we have $\varphi_s \in \lcal_t^{\beta} +\ipf \lcal_t^{\beta}$, and hence $\langle f, \varphi_s \rangle_{\beta} = 0$. Due to analyticity, the same then holds for all $s$, in particular $\langle f, \varphi \rangle_{\beta}=0$. Therefore, writing $f=f_1 \oplus f_2$,
\begin{equation}
0 = \re \langle f, \varphi \rangle_\beta = \tau_{\beta}(f_1, \varphi_a)  -\sigma(f_1,\varphi_b) - \tau_{\beta} (C\sqrt{1+D^2} f_2, \varphi_b).
\end{equation}
In particular, choosing $\varphi_b=0$ and varying $\varphi_a\in C_c^\infty(\rbb)$ yields $f_1=0$; then also $f_2=0$ since $C\sqrt{1+D^2} $ has trivial kernel. Hence $f=0$.

Finally, we show $(\lcal_t^{\beta})_\abel \equiv \lcal_t^{\beta} \cap \lcal_t^{\beta\prime}=\{0\}$. 
Any $f \in \lcal_t^{\beta}$ is of the form $f = \operatorname{\tau_\beta-lim}_{n \to \infty} \varphi_n$ for some $\varphi_n \in C_c^\infty((-\infty,t))$. Since the $\tau_\beta$-norm is stronger than the $L^2$-norm, this implies $f\in L^2(\mathbb{R})$ and $\operatorname{supp} f \subset (-\infty,t]$. If now also $f \in \lcal^{\beta \prime}_t$, one has $\sigma(f,\varphi) =0$ for all $\varphi \in C_c^\infty((-\infty,t))$. This means $\int f(x)\varphi'(x)\, dx =0$ for all these $\varphi$, which by standard argument implies that $f$ is constant on $(-\infty,t)$. Hence $f=0$.
\optqed{}\end{proof}

Following \cite{BorYng:thermal}, one can determine the modular group associated with $\lcal^{\beta}_0$:
%Namely, we will show that the group action constructed by Borchers and Yngvason \cite{BorYng:thermal} can be extended to the subspace $\hcal_{2,+}$. Then, we will use a result by 
%\cite{RieffelvanDaele:boundedttt} that this must be the modular group action.

\begin{proposition}
The modular group $U_0(u):=\Delta_{\lcal^{\beta}_0}^{iu}$ of $\lcal^{\beta}_0$ as a standard subspace w.r.t.\ $\kcal^{\oplus}_{\beta}$ is given by
\begin{equation}\label{deltapartial}
U_0(u)\big(\psi \oplus 0 + \ipf(\varphi \oplus 0) \big) = \delta_u \psi \oplus 0 + \ipf(\delta_u \varphi \oplus 0)
\end{equation}
for $u\in\mathbb{R}, \varphi,\psi\in C^{\infty}_c((-\infty,0))$, and 
    \begin{equation}
\delta_u f(x) := f\Big(-\frac{\beta}{2\pi}\log\big( 1+ e^{2\pi u}( e^{-2\pi x/\beta} -1)\big)  \Big)
\end{equation}
for $f\in C^{\infty}_c((-\infty,0))$.
\end{proposition}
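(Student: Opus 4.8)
The plan is to transport everything to the vacuum situation by a unitary map and then invoke the known vacuum modular group \eqref{eqbw}, i.e.\ the Bisognano--Wichmann result for the $U(1)$-current. First I would exhibit a real-linear unitary $W: \kcal^{\oplus}_\beta \to L^2_{\cbb}(\rbb_+,p\,dp) =: \kcal$ that sends $\lcal_0^\beta \oplus 0$ onto the half-line subspace $\lcal_0^{U(1)}=\lcal^{U(1)}((-\infty,0))$ of the vacuum model, and is complex-linear with respect to the respective complex structures $\ipf$ and $i$. Concretely, on the configuration-space side one uses the change of variables induced by the diffeomorphism $y = -\tfrac{\beta}{2\pi}\log(1 - e^{-2\pi x/\beta})$ (equivalently $e^{-2\pi y/\beta}$ corresponds to $1-e^{-2\pi x/\beta}$), which maps $(-\infty,0)$ onto $(-\infty,0)$; pulling back by this diffeomorphism and pairing it with the appropriate multiplier to make it an isometry between $\tau_\beta$ and the vacuum $\tau$ gives $W$. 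The KMS property of $\tau_\beta$ with respect to translations, established in \cite{BorYng:thermal}, is exactly what guarantees that $W$ intertwines the standard subspaces correctly and is complex-linear; this is the single-particle shadow of the fact that the thermal state is the vacuum pulled back through the conformal exponential map.

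Once $W$ is in place, the modular operator transforms covariantly: $\Delta_{\lcal_0^\beta} = W^{-1}\,\Delta_{\lcal_0^{U(1)}}\,W$, hence $U_0(u) = \Delta_{\lcal_0^\beta}^{iu} = W^{-1}\,\Delta_{\lcal_0^{U(1)}}^{iu}\,W = W^{-1}\,V(\delta(-2\pi u))\,W$ using \eqref{eqbw}. By \eqref{dil} the vacuum dilation acts in momentum space by $(V(\delta(-2\pi u))f)(p) = e^{2\pi u} f(e^{2\pi u}p)$, which in configuration space is the pull-back by $x \mapsto e^{-2\pi u}x$, i.e.\ scaling of the interval $(-\infty,0)$ by $e^{-2\pi u}$. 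Conjugating this one-parameter flow back through the diffeomorphism $y \leftrightarrow x$ produces precisely the flow $\delta_u$ written in the statement: a scaling by $e^{2\pi u}$ in the variable $1-e^{-2\pi x/\beta}$ is exactly $1 + e^{2\pi u}(e^{-2\pi x/\beta}-1)$ inside the logarithm, up to the sign conventions. So the last step is a bookkeeping computation: write out $W^{-1}\,V(\delta(-2\pi u))\,W$ on a test function $f \in C_c^\infty((-\infty,0))$ and check it equals $\delta_u f$, and separately track the multiplier factors to confirm no extra weight survives (the Jacobian and the isometry multiplier cancel because $\delta_u$ as written is already the correct configuration-space flow, the modular group being a point transformation here). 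The complex-linearity of $W$ ensures the action on the $\ipf(\varphi\oplus 0)$ part is the same as on the $\psi\oplus 0$ part, giving \eqref{deltapartial}.

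The main obstacle I anticipate is making the intertwiner $W$ rigorous rather than merely formal: one must check that the change of variables $x \mapsto y$ together with the chosen density factor really does extend from $C_c^\infty((-\infty,0))$ to a unitary from $\kcal^\oplus_\beta$ onto $\kcal$, i.e.\ that it is $\tau_\beta$-to-$\tau$ isometric \emph{and} $\sigma$-to-$\sigma^{U(1)}$ preserving, on a core and then by density. Equivalently one can bypass an explicit $W$ and argue more abstractly: identify $\lcal_0^\beta \subset \kcal^\oplus_\beta$ with the $U(1)$-current half-line subspace via the fact (from \cite{BorYng:thermal}) that the KMS state is conformally equivalent to the vacuum, note that the one-parameter group $u \mapsto \delta_u$ acts by automorphisms of the net leaving $\lcal_0^\beta$ invariant and acting with the right KMS/analyticity properties, and then invoke the uniqueness of the modular group (the Tomita--Takesaki group is the unique one-parameter group of unitaries with the modular KMS property for the pair $(\lcal_0^\beta, \kcal^\oplus_\beta)$, cf.\ \cite{RieffelvanDaele:boundedttt}). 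Under this alternative, the bulk of the proof is verifying that $\delta_u$ preserves $\lcal_0^\beta$ (clear, since it preserves $(-\infty,0)$ as a set and maps $C_c^\infty((-\infty,0))$ to itself) and that $U_0$ defined by \eqref{deltapartial} is a strongly continuous unitary group with the correct analyticity; the unitarity in the $\tau_\beta$-norm is where the explicit form of $\delta_u$ and the weight $\frac{p}{1-e^{-\beta p}}$ must be matched, and this is the one genuinely computational point. I would present the abstract-uniqueness route as the main argument and relegate the verification of unitarity and analyticity of \eqref{deltapartial} to a short lemma, citing \cite{BorYng:thermal} for the structural input.
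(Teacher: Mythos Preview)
Your ``alternative'' route---verify that the candidate group $u\mapsto\delta_u$ preserves $\lcal_0^\beta$, is unitary for $\tau_\beta$ and $\sigma$, and satisfies the modular KMS condition, then invoke uniqueness of the modular group---is exactly what the paper does, and it is the whole proof. The paper simply imports from \cite[p.~620]{BorYng:thermal} the four properties of $\delta_u$ (it preserves $C_c^\infty((-\infty,0))$; it is a one-parameter group; it preserves both $\tau_\beta$ and $\sigma$; and $u\mapsto\omega_2(f,\delta_u g)$ with $\omega_2=\tau_\beta+i\sigma$ extends analytically to the strip with the boundary relation $\omega_2(f,\delta_{u-i}g)=\omega_2(\delta_u g,f)$), observes that the first three make \eqref{deltapartial} a well-defined unitary group on $\kcal_\beta^\oplus$, and reads the fourth as the KMS condition $\hscalar{f}{U_0(u-i)g}_\beta=\hscalar{U_0(u)g}{f}_\beta$ for $f,g\in\lcal_0^\beta$, which characterizes the modular group.

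Your primary plan---building an explicit unitary $W:\kcal_\beta^\oplus\to\kcal$ onto the vacuum one-particle space and transporting $\Delta_{\lcal_0^{U(1)}}^{iu}$ back---is a different and heavier route. It encodes the correct heuristic (the thermal state is conformally the vacuum via the exponential map), but, as you yourself note, making $W$ a genuine complex-linear unitary between the \emph{purified} thermal space and the vacuum space requires care that the paper avoids entirely. In particular you would have to match not just $\lcal_0^\beta\oplus 0$ with $\lcal_0^{U(1)}$ but the full complex structure $\ipf$ of the doubled space with the vacuum $i$; this is not a single configuration-space diffeomorphism on $(-\infty,0)$. So the intertwiner picture is good intuition for \emph{why} $\delta_u$ has the form it does, but for the proof itself you should lead with the KMS-uniqueness argument and drop the construction of $W$ altogether.
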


\begin{proof}
%\end{equation}
%
$\delta_u$ fulfils the following properties \cite[p.~620]{BorYng:thermal}:
\begin{enumerate}[(i)]
\item \label{i} It preserves the subspace $C_c^\infty((-\infty,0))$;
 \item\label{ii} it satisfies the group property, i.e., $\delta_u \circ \delta_{u'} = \delta_{u+u'}$;
 \item \label{iii} there holds $\tau_\beta(\delta_u f, \delta_u g) = \tau_\beta (f,g)$ and  $\sigma(\delta_u f, \delta_u g) = \sigma(f,g)$;
%\item\label{iii} it is unitary with respect to the scalar product defined by the two point function \eqref{2point};
\item \label{iv} setting $\omega_2(f,g):=\tau_{\beta}(f, g)+i\sigma(f, g)$, for every $f,g \in C_c^\infty((-\infty,0))$ the function $u \mapsto \omega_2(f, \delta_u g)$ has an analytic continuation into the strip $S(-1,0)$ and
\begin{equation}
\omega_2(f,\delta_{u-i} g) = \omega_2(\delta_u g, f).
\end{equation}
\end{enumerate}
Thus item (i) and item (iii) show that \eqref{deltapartial} is a well-defined and unitary mapping, and by item (ii) $U_0$ is a unitary group. Using complex linearity of the scalar product $\hscalar{\cdot}{ \cdot}_\beta$, item (iv) implies the KMS property in the form
\begin{equation}
\hscalar{f}{U_0(u-i) g}_\beta =\hscalar{U_0(u)g}{ f}_\beta \quad \text{for all}\;\; f,g  \in \lcal_0^{\beta}, 
\end{equation}
which characterizes $U_0$ as the modular group of $\lcal_0^{\beta}$.
\optqed{}\end{proof}

We can thus compute $-\log \Delta_0$, the generator of the modular group action of $\lcal_0^{\beta}$:
for $f\in C^\infty_c((0,\infty)) \subset \lcal_0^{\beta}$ we have
\begin{equation}
 \begin{aligned}
(-\log \Delta_0)f\oplus 0 &:= \ipf\frac{d}{du} U_0(u)(f \oplus 0)\big\vert_{u=0} %\\
=\ipf\frac{d}{du} \delta_u f \oplus 0\big\vert_{u=0} \\
&=\ipf\frac{d}{du} f\Big(- \frac{\beta}{2\pi}\log \big( 1+ e^{2\pi u}(e^{-2\pi \cdot/\beta} -1) \big) \Big)\Big\vert_{u=0} \oplus 0 \\
%&=-\ipf f'(\cdot) \frac{\beta}{2\pi} \frac{2\pi (e^{-2\pi \cdot/\beta} -1)}{e^{-2\pi \cdot /\beta}}\oplus 0 \\
&= -\ipf \big( f'(\cdot) \beta (1- e^{2\pi \cdot/\beta}) \oplus 0 \big).  
 \end{aligned}
\end{equation}
As an immediate consequence of Proposition~\ref{prop:entroF}, we now obtain:

\begin{proposition}\label{kmsentr1}
For $f\in (C^\infty_c((-\infty,t))\oplus 0)\subset \lcal^{\beta}_t$ we have 
\begin{equation}
S_{\lcal_t^{\beta}}(f,f) =  \int_{-\infty}^t (f'(x))^2 \beta (1- e^{2\pi (x-t)/\beta})\, dx.
\end{equation}
\end{proposition}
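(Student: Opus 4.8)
The plan is to apply Proposition~\ref{prop:entroF} to the factorial subspace $\lcal_t^\beta$, for which the entropy is $S_{\lcal_t^\beta}(f,f) = \sigma\pf(f, P_\fact\,\ipf K_{\lcal_t^\beta} f)$. Since we have just shown $\lcal_t^\beta = (\lcal_t^\beta)_\fact$, the projector $P_\fact$ is the identity on the relevant vectors, so the formula reduces to $S_{\lcal_t^\beta}(f,f) = \sigma\pf(f, \ipf K_{\lcal_t^\beta} f) = \tau\pf(f,f\text{-like pairing with }K)$; more precisely $\sigma\pf(f,\ipf K f) = \tau\pf(f, K f)$ by definition of $\sigma\pf$. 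So the whole computation comes down to identifying $K_{\lcal_t^\beta} f$ explicitly for $f \in C_c^\infty((-\infty,t))\oplus 0$ and taking the real inner product with $f$.

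First I would reduce to $t=0$ using translation covariance: by Proposition~\ref{translkms}, $V(\vartheta(t))\lcal_0^\beta = \lcal_t^\beta$ and $V(\vartheta(t))$ is a unitary intertwining the modular data, so $K_{\lcal_t^\beta} = V(\vartheta(t)) K_{\lcal_0^\beta} V(\vartheta(t))^{-1}$ and the entropy of $f$ with respect to $\lcal_t^\beta$ equals the entropy of the shifted function $\vartheta(-t)^\star f$ (supported in $(-\infty,0)$) with respect to $\lcal_0^\beta$. Then I would invoke the explicit computation of $-\log\Delta_0$ already carried out just before the proposition: for $g\in C_c^\infty((-\infty,0))\oplus 0 \subset \lcal_0^\beta$, $(-\log\Delta_0)(g\oplus 0) = -\ipf\big(g'(\cdot)\,\beta(1-e^{2\pi\cdot/\beta})\oplus 0\big)$. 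Hence $\ipf K_{\lcal_0^\beta}(g\oplus 0) = g'(\cdot)\,\beta(1-e^{2\pi\cdot/\beta})\oplus 0$, and
\begin{equation}
S_{\lcal_0^\beta}(g,g) = \sigma\pf(g\oplus 0, \ipf K_0 (g\oplus 0)) = \tau\pf\big(g\oplus 0,\, K_0(g\oplus 0)\big) = \int_{-\infty}^0 g'(x)^2\,\beta(1-e^{2\pi x/\beta})\,dx,
\end{equation}
using that on $\kcal_\beta$ embedded as the first summand, $\tau\pf$ restricted there is $\tau_\beta$, and noting that for the configuration-space pairing $\tau_\beta(g, h) = \int g'(x) h$-type expression — actually it is cleaner to compute $\sigma(g, g'\cdot\beta(1-e^{2\pi\cdot/\beta}))$ directly via \eqref{symplU1}: $\sigma(g, \phi) = \int g\,\phi'$, but it is even more direct to observe $\sigma\pf(g\oplus0, \ipf K_0(g\oplus 0)) = \tau\pf(g\oplus 0, K_0(g\oplus 0))$ and then evaluate $-\log\Delta_0$ against $g$ in the $\tau_\beta$ inner product, which by the Fourier-space form of $\tau_\beta$ unwinds to the stated real integral. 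Undoing the shift $x\mapsto x-t$ then gives the claimed formula with $(1-e^{2\pi(x-t)/\beta})$.

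The one point requiring a little care — and the main (minor) obstacle — is the domain question: Proposition~\ref{prop:entroF} requires $f\in\lcal_\fact\pf\cap\dom K_\lcal$, so I should check that $f\oplus 0$ with $f\in C_c^\infty((-\infty,t))$ genuinely lies in $\dom K_{\lcal_t^\beta}$, i.e. that the vector $f'(\cdot)\beta(1-e^{2\pi(\cdot-t)/\beta})$ is again in $\kcal_\beta$. This follows because $f$ has compact support in $(-\infty,t)$, so the function $\beta(1-e^{2\pi(x-t)/\beta})$ is bounded on that support, and $f'\in C_c^\infty\subset\kcal_\beta$; one also needs that this is the genuine strong derivative of $u\mapsto U_0(u)(f\oplus 0)$ at $u=0$, which is exactly what the preceding displayed computation of $-\log\Delta_0$ established (differentiation under the integral, justified by dominated convergence since everything is compactly supported). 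With that in hand the proposition follows immediately; I would phrase the proof as: ``By Proposition~\ref{translkms} and unitarity of $V(\vartheta(t))$ we may take $t=0$; then apply Proposition~\ref{prop:entroF} together with the expression for $-\log\Delta_0$ computed above.''
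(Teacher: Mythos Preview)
Your approach is correct and coincides with the paper's: the statement is obtained directly from Proposition~\ref{prop:entroF} together with the explicit expression for $-\log\Delta_0$ computed immediately before (plus translation covariance for general $t$). Your attention to the domain issue---that $f\oplus 0$ genuinely lies in $\dom K_{\lcal_t^\beta}$ because the configuration-space derivative lands back in $C_c^\infty((-\infty,t))\subset\kcal_\beta$---is the one point that actually needs checking, and you handle it; the paper simply declares the result an ``immediate consequence''. One small sharpening: rather than saying ``$P_\fact$ is the identity on the relevant vectors'', note explicitly that $\ipf K_0(f\oplus 0)=h\oplus 0$ with $h=f'(\cdot)\beta(1-e^{2\pi\cdot/\beta})\in C_c^\infty((-\infty,0))\subset\lcal_0^\beta$, so $P_\fact$ fixes it; then the entropy is $\sigma(f,h)$, which after one integration by parts gives the stated integral.
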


Further, analogous to the vacuum case in Sec.~\ref{sec:current}, we can show:

\begin{proposition}
$\{\lcal_t^{\beta}\}_{t\in\mathbb{R}}$ is a family of modular differential inclusions (Definition \ref{def:difffamily}).
\end{proposition}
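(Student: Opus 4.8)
The plan is to transcribe the argument of Proposition~\ref{vacuumdiff}, replacing the vacuum weight $t-x$ by the thermal weight $w_t(x):=\beta\,(1-e^{2\pi(x-t)/\beta})$ appearing in Proposition~\ref{kmsentr1}. The family $\{\lcal_t^{\beta}\}$ is manifestly increasing, so by translation covariance (Proposition~\ref{translkms}) it suffices to show that each pair $(\lcal_s^{\beta},\lcal_t^{\beta})$ is in differential modular position (Definition~\ref{def:dmp}), and we may treat $s\le t$ and $s>t$ separately. As in the vacuum case, once condition~(\ref{it:phidense}) of Definition~\ref{def:dmp} (density of $\varphi_{\lcal_t^{\beta}}\dcal_{st}$ in $\es_{\lcal_t^{\beta}}$) is established, condition~(\ref{it:sortho}) will be supplied by Lemma~\ref{lemma:subsup}: part~(\ref{it:supset}) when $s>t$, and part~(\ref{it:subset}) when $s\le t$ (using that the modular flow of $\lcal_t^{\beta}$ acts geometrically, hence maps $\lcal_s^{\beta}$ into itself for $u$ of one fixed sign). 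So the substantive work is the density statement.

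For density, I would first note that $C_c^\infty((-\infty,t))$ is a dense subspace of $\lcal_t^{\beta}$ that is invariant under the modular group $U_t$ — indeed $\delta_u$ preserves $C_c^\infty((-\infty,0))$ for every $u$ (property~(i) in the proof of the modular-group formula), and $U_t$ is the translate of $U_0$ — hence a core for $i\pf K_t\restriction\lcal_t^{\beta}$ by \cite[Proposition~1.7, Chapter~2]{EngelNagel:semigroups}. Lemmas~\ref{density64} and~\ref{core} then give that $\varphi_{\lcal_t^{\beta}}\big(C_c^\infty((-\infty,t))\big)$ is dense in $\es_{\lcal_t^{\beta}}$ with respect to $\|\cdot\|_{\lcal_t^{\beta}}$; combined with Proposition~\ref{kmsentr1}, this identifies $\es_{\lcal_t^{\beta}}$ with $L^2\big((-\infty,t),w_t(x)\,dx\big)$ via $f\mapsto f'$.

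When $s>t$ the density is immediate, since $\img Q_{\lcal_s^{\beta}}=\lcal_s^{\beta}\supset C_c^\infty((-\infty,t))\subset\dom S_{\lcal_t^{\beta}}$, so $C_c^\infty((-\infty,t))\subset\dcal_{st}^+\subset\dcal_{st}$. When $s\le t$, exactly as in Proposition~\ref{vacuumdiff}, any $f\in C_c^\infty\big((-\infty,t)\setminus\{s\}\big)$ splits as $f_-+f_+$ with $f_-\in C_c^\infty((-\infty,s))\subset\img Q_s$ and $f_+\in C_c^\infty((s,t))\subset(\lcal_s^{\beta})'=\ker Q_s$, hence lies in $\dcal_{st}$; it then remains to check that $\{f':f\in C_c^\infty((-\infty,t)\setminus\{s\})\}$ is dense in $L^2((-\infty,t),w_t\,dx)$. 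If $g$ is orthogonal to this set, then $w_t g$ is constant separately on $(-\infty,s)$ and on $(s,t)$, i.e.\ $g=c/w_t$ on each piece; but $1/w_t\notin L^2\big((-\infty,s),w_t\,dx\big)$ because $w_t\to\beta>0$ at $-\infty$, and $1/w_t\notin L^2\big((s,t),w_t\,dx\big)$ because $w_t(x)\sim 2\pi(t-x)$ as $x\to t^-$ exactly as in the vacuum case; hence $c=0$ and $g=0$.

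Finally, to invoke Lemma~\ref{lemma:subsup}(\ref{it:subset}) when $s\le t$, one needs $U_t(u)\lcal_s^{\beta}\subset\lcal_s^{\beta}$ for all $u$ of one fixed sign. After translating to $\lcal_0^{\beta}$ this reduces to monotonicity in $u$ of the flow map $x\mapsto\gamma_u(x):=-\tfrac{\beta}{2\pi}\log\!\big(1+e^{2\pi u}(e^{-2\pi x/\beta}-1)\big)$ on $(-\infty,0)$: its $u$-derivative has a fixed sign there, so $\delta_u$ carries $C_c^\infty((-\infty,s'))$ into itself for $u$ in the appropriate half-line, and Lemma~\ref{lemma:subsup}(\ref{it:subset}) applies. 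The two points I expect to require genuine (though modest) care are: verifying that the \emph{bounded} thermal weight $w_t$ still gives the non-integrability of $1/w_t$ used in the orthogonal-complement argument — which it does, both at the endpoint $t$ and at $-\infty$ — and pinning down the correct sign in $U_t(u)\lcal_s^{\beta}\subset\lcal_s^{\beta}$; everything else is a direct transcription of Proposition~\ref{vacuumdiff}.
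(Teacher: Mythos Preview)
Your proposal is correct and follows essentially the same route as the paper's own proof, which is a terse two-liner stating that condition~(i) is obtained ``very similar to Proposition~\ref{vacuumdiff}'' and condition~(ii) from Lemma~\ref{lemma:subsup}(\ref{it:subset}). You have simply spelled out the details the paper leaves implicit, including the non-integrability checks for $1/w_t$ at both ends and the geometric monotonicity of the thermal modular flow.
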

\begin{proof}
The proof of condition (i) is very similar to Proposition \ref{vacuumdiff}. Condition (ii) follows immediately by Lemma \ref{lemma:subsup}(\ref{it:subset}).
\optqed{}\end{proof}

The entropy for general $f\in C^\infty_c(\mathbb{R})$ is less directly accessible, since the action of the modular group on $f\in C^\infty_c((t,\infty))$ is not explicitly known. Nevertheless, we can show:

\begin{proposition}
For $f\in C^\infty_c(\mathbb{R})\oplus 0$,
\begin{equation}
S_{\lcal^{\beta}_t}(f,f)=\int_{-\infty}^t dx (f^\prime(x))^2  \beta (1- e^{2\pi (x-t)/\beta})\, dx.
\end{equation}
\end{proposition}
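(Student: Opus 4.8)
The plan is to reduce the statement to Proposition~\ref{kmsentr1} by computing the entropy inside the Hilbert space $\es_{\lcal_t^\beta}$. Write $w_t(x):=\beta(1-e^{2\pi(x-t)/\beta})$ for the weight occurring there. As in the proof that $\{\lcal_t^\beta\}$ is a family of differential modular inclusions (which mirrors Proposition~\ref{vacuumdiff}), $C^\infty_c((-\infty,t))$ is a core for $\ipf K_t\restriction\lcal_t^\beta$, so by Lemmas~\ref{density64} and~\ref{core} the subspace $C^\infty_c((-\infty,t))\oplus 0$ maps onto a $\|\cdot\|_t$-dense subspace of $\es_{\lcal_t^\beta}$; and by Proposition~\ref{kmsentr1} the assignment $\varphi\oplus 0\mapsto\varphi'$ extends to a unitary $\es_{\lcal_t^\beta}\cong L^2((-\infty,t),w_t\,dx)$, surjective because a $g$ orthogonal to all $\varphi'$, $\varphi\in C^\infty_c((-\infty,t))$, must satisfy $g\,w_t\equiv\mathrm{const}$, which is square-integrable against $w_t\,dx$ only for the zero constant. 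In particular $\{\varphi':\varphi\in C^\infty_c((-\infty,t))\}$ is dense in $L^2((-\infty,t),w_t\,dx)$.

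First I would check that $f\oplus 0\in\dom S_{\lcal_t^\beta}$: choosing $T>t$ with $\supp f\subset(-\infty,T)$ one has $f\in C^\infty_c((-\infty,T))$, so Proposition~\ref{kmsentr1} and Theorem~\ref{theorem:entroGen} give $S_{\lcal_T^\beta}(f\oplus 0)=\int_{-\infty}^T f'(x)^2 w_T(x)\,dx<\infty$, and monotonicity of the entropy (Lemma~\ref{lemma:entropygrow}) yields $S_{\lcal_t^\beta}(f\oplus 0)\leq S_{\lcal_T^\beta}(f\oplus 0)<\infty$. Hence the class of $f\oplus 0$ in $\es_{\lcal_t^\beta}$ is defined and corresponds under the unitary above to some $\Psi\in L^2((-\infty,t),w_t\,dx)$ with $S_{\lcal_t^\beta}(f\oplus 0)=\|\Psi\|^2$; the whole point is to show $\Psi=f'\restriction(-\infty,t)$.

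The crucial step is the pairing identity
\[
 S_{\lcal_t^\beta}(f\oplus 0,\varphi\oplus 0)=\int_{-\infty}^t f'(x)\,\varphi'(x)\,w_t(x)\,dx\qquad\bigl(\varphi\in C^\infty_c((-\infty,t))\bigr),
\]
which, together with density of $\{\varphi'\}$ in $L^2(w_t)$, forces $\Psi=f'\restriction(-\infty,t)$ and hence $S_{\lcal_t^\beta}(f\oplus 0)=\|\Psi\|^2=\int_{-\infty}^t f'(x)^2 w_t(x)\,dx$ as claimed. This identity \emph{cannot} be obtained by approximating $f$ in the first slot, since $f\oplus 0$ generally lies outside the $\kcal_\beta$-closure $\lcal_t^\beta$, so such approximations converge only in the $\es$-seminorm and merely restate the identity in terms of the unknown $\Psi$. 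Instead I would argue that $\varphi\oplus 0$ lies in the operator domain of the positive self-adjoint operator $R_t$ associated with the closed form $S_{\lcal_t^\beta}$ (Lemma~\ref{lem:entrform}), with $R_t(\varphi\oplus 0)=-\ipf\bigl((w_t\varphi')\oplus 0\bigr)$: the translate of the explicit modular generator computed just before Proposition~\ref{kmsentr1} gives $\ipf K_t(\varphi\oplus 0)=(w_t\varphi')\oplus 0\in\lcal_t^\beta\oplus 0$, on which $P_{t,\fact}$ acts as the identity, while the relation $R_t=-\ipf P_{t,\fact}\ipf K_t$ from the proof of Theorem~\ref{theorem:entroGen} holds on the core $\dcut\cap(\lcal_t^\beta\oplus 0)$ of Lemma~\ref{lemma:spectralcut}; the formula for $R_t(\varphi\oplus 0)$ then follows by a $K_t$-graph-norm approximation and closedness of $R_t$. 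Since $f\oplus 0$ is in the form domain, this yields
\[
 S_{\lcal_t^\beta}(f\oplus 0,\varphi\oplus 0)=\tau\pf\bigl(f\oplus 0,R_t(\varphi\oplus 0)\bigr)=\sigma\pf\bigl(f\oplus 0,(w_t\varphi')\oplus 0\bigr)=\sigma(f,w_t\varphi')=\int_{-\infty}^t f\,(w_t\varphi')'\,dx,
\]
and integration by parts (boundary terms vanishing as $w_t\varphi'$ is compactly supported in $(-\infty,t)$, and $\sigma(f,h)=\int f h'$ extending from $C^\infty_c$ to $f\in\kcal_\beta$ by continuity of $\sigma$) gives the asserted pairing identity.

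Equivalently the conclusion can be packaged by the variational identity $\|\xi\|^2=\sup_\eta(2\langle\xi,\eta\rangle-\|\eta\|^2)$, with the supremum over any dense set, applied in $\es_{\lcal_t^\beta}$ with $\eta$ running over the classes of $\psi\oplus 0$, $\psi\in C^\infty_c((-\infty,t))$: using the pairing identity and Proposition~\ref{kmsentr1},
\[
 S_{\lcal_t^\beta}(f\oplus 0)=\sup_{\psi\in C^\infty_c((-\infty,t))}\Bigl(2\int_{-\infty}^t f'\psi'\,w_t\,dx-\int_{-\infty}^t(\psi')^2\,w_t\,dx\Bigr)=\int_{-\infty}^t f'(x)^2\,w_t(x)\,dx,
\]
the last equality because $\{\psi':\psi\in C^\infty_c((-\infty,t))\}$ is dense in $L^2((-\infty,t),w_t\,dx)$ and $f'\restriction(-\infty,t)$ lies there. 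I expect the pairing identity of the previous paragraph, specifically pinning down $R_t(\varphi\oplus 0)$, to be the main obstacle: the modular action of $\lcal_t^\beta$ is accessible only on $C^\infty_c((-\infty,t))$, yet the identity has to be tested against the ``exterior'' function $f$, which is exactly what forces the detour through the form's self-adjoint operator rather than a naive density argument.
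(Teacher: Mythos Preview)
Your approach is correct and genuinely different from the paper's. The paper works in the larger entropy space $\es_s$ for some $s>t$ with $\supp f\subset(-\infty,s)$: there $f\in C^\infty_c((-\infty,s))$, so Proposition~\ref{kmsentr1} applies directly, and one approximates $f$ in $\|\cdot\|_s$ by functions $f_n$ whose support avoids $\{t\}$. Then $Q_t f_n\in C^\infty_c((-\infty,t))$ is explicit, monotonicity gives $\|f-Q_tf_n\|_t\to 0$, and one finishes by continuity of multiplication with $\Theta(t-\cdot)$ on $L^2(w_s\,dx)$. Your route instead stays in $\es_t$ and exploits the form--operator relation: once $R_t(\varphi\oplus 0)$ is known for $\varphi\in C^\infty_c((-\infty,t))$, the bilinear pairing $S_t(f\oplus 0,\varphi\oplus 0)$ is determined for every form-domain vector, and density in $L^2(w_t)$ identifies the image of $f\oplus 0$. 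This is arguably cleaner operator-theoretically and avoids both the auxiliary space $\es_s$ and the projector $\bar Q_t$; the paper's argument, by contrast, directly reuses the differential-modular-position machinery it has set up.

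One step deserves more care. Your claim $R_t(\varphi\oplus 0)=-\ipf(w_t\varphi'\oplus 0)$ does not follow from ``closedness of $R_t$'' alone. Approximating $\varphi\oplus 0$ by $\psi_n\in\dcut$ in $K_t$-graph norm gives $\ipf K_t\psi_n\to w_t\varphi'\oplus 0$, but since $P_{t,\fact}$ is unbounded this does not by itself imply convergence of $R_t\psi_n=-\ipf P_{t,\fact}\ipf K_t\psi_n$. The missing ingredient is Lemma~\ref{lem:entrform}(\ref{it:rbd}): $R_t-c(K_t)^2$ is bounded and $c(\lambda)^2\leq|\lambda|+1$, so $K_t$-graph convergence forces $c(K_t)^2\psi_n$ and hence $R_t\psi_n$ to converge; then closedness of $P_{t,\fact}$ (not merely of $R_t$) identifies the limit, since $w_t\varphi'\oplus 0\in\lcal_t^\beta$ is fixed by $P_{t,\fact}$. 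With this addition your argument is complete.
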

\begin{proof}
Similarly to the proof of Proposition  \ref{vacuumdiff}, given $f\in C^\infty_c(\mathbb{R})$ and chosen $t\leq s\in\mathbb{R}$ such that $\supp(f)\subset (-\infty,s)$, we have a sequence $\{f_n\}_{n\geq0}$ in 
\begin{equation}
\hat\dcal_{ts}:=\{g\in C^\infty_c(\mathbb{R}):\supp(g)\subset(-\infty,s)\setminus \{t\}\}\subset \dcal_{ts}
\end{equation}
such that $\|f-f_n\|_s\rightarrow 0$ and thus $\|f-f_n\|_t\rightarrow 0$ since $s\geq t$. If $Q_t$ denotes the projection \eqref{def:cutproj} of the subspace $\lcal_t^\beta$, we further have that $\|f-Q_t f_n\|_t\leq \|f-f_n\|_t+ \|f_n-Q_t f_n\|_t=\|f-f_n\|_t\rightarrow 0$, thus
$S_{\lcal^{\beta}_t}(f,f)=\|f\|_t^2=\lim_{n}\|Q_t f_n\|_t^2$. If $\varphi_s$ is the extension to $\es_s$ of the isometric mapping
\begin{equation}
\begin{aligned}
\varphi_s:C^\infty_c((-\infty,s))&\rightarrow L^2((-\infty,s), \beta (1- e^{2\pi (x-s)/\beta})\, dx)\\
f&\mapsto f^\prime,
\end{aligned} 
\end{equation}
(where $C^\infty_c((-\infty,s))$ is equipped with norm $\|\cdot\|_s$), then $\varphi_s(Q_t f_n)=\Theta(t-\cdotarg)\varphi_s(f_n)$.
Multiplication by $\Theta(t-\cdotarg)$ is continuous in $L^2((-\infty,s), \beta (1- e^{2\pi (x-s)/\beta})\, dx)$, thus $\Theta(t-\cdotarg)\varphi_s(f_n)\rightarrow \Theta(t-\cdotarg)\varphi_s(f)$. This implies that
\begin{equation}
\Theta(t-\cdotarg)f_n^{\prime}=\Theta(t-\cdotarg) \varphi_s(f_n)\rightarrow \Theta(t-\cdotarg)\varphi_s(f)=\Theta(t-\cdotarg)f^{\prime}
\end{equation} 
 in $L^2((-\infty,t), \beta (1- e^{2\pi (x-t)/\beta})\, dx)$, which yields the statement using Proposition \ref{kmsentr1}.
\optqed{}\end{proof}

Let us summarize the results:

\begin{example} \label{ex:u1kms}
In the symplectic Hilbert space $(\kcal_\beta,\tau_\beta,\sigma)$ of the conformal $U(1)$-current with KMS state induced by $\tau_\beta$, consider $\lcal_t^\beta = \overline{C_c^\infty(-\infty,t)}$. Then $\{\lcal_t^\beta\}_{t\in\rbb}$ is a family of differential modular inclusions, and for  $f\in C^\infty_c(\mathbb{R})$, 
\begin{equation}
 T_f(s,t)  = \| \bar Q_s \varphi_t f\|_t^2= \int_{-\infty}^{\min\{s,t\}} dx (f^\prime(x))^2  \beta (1- e^{2\pi (x-t)/\beta})\, dx.
\end{equation}
\end{example}

Computing the second derivative of the relative entropy, we have
\begin{equation}
\label{kmsexplentr}
\frac{d^2 S_{\lcal_t^{\beta}}(f)}{d t^2} =
\underbrace{2\pi (f'(t))^2}_{  \frac{\partial^2 T_f}{\partial s \partial t} (t-0,0) \geq 0}
\underbrace{-\frac{(2\pi)^2}{\beta} \int_{-\infty}^t dx\, (f'(x))^2 e^{-2\pi (x-t)/\beta} }_{ \frac{\partial^2 T_f}{ \partial t^2} (t-0,0) \leq 0}. 
\end{equation}
Thus in the present case, the estimate in Proposition \ref{prop:secondderivsmooth} bounds the second derivative of the entropy from below by its negative bulk term, i.e., the second term in \eqref{kmsexplentr}.

\subsection{Commutative algebras}

It is instructive to consider also the case of abelian CCR algebras, i.e., subspaces $\lcal$ with $\lcal\subset\lcal'$.
\begin{example}\label{ex:abelian}
Let $(X,\mcal,\mu)$ be a measure space such that $\kcal:=L^2_\rbb(X,d\mu)$ is separable. For measurable subsets $Y\subset X$, set $\lcal_Y := L^2_\rbb (Y) \subset \kcal$. For $(\kcal,\tau,0)$ with $\tau$ the $L^2$ scalar product, identify $\kcal\pf$ with $L^2_\cbb(X,d\mu)$ as in Remark~\ref{remark:abelian}. Then $\es_Y$ can be identified with $L^2_\rbb(Y,d\mu)$ via $\varphi_Y: f \mapsto \im f \restrict Y$, and 
\begin{equation}\label{eq:l2entro}
  S_Y(f,f) = 2 \int_Y (\im f)^2 d\mu. 
\end{equation}
For any two such subsets $Y,Z$, the pair $(\lcal_Z,\lcal_Y)$ is in differential modular position, and $\bar Q_{Z}$ acts on $\es_Y$ by multiplication with the characteristic function of $Y \cap Z$.
\end{example}

\begin{proof}
  By Remark~\ref{remark:abelian}, we have $\lcal_0\pf = L^2_\cbb(Y^c,d\mu)$,  $\lcal_\abel= \lcal_Y$, $\lcal_\fact=\lcal_\infty = 0$.
  The formula \eqref{eq:l2entro}, and with it the proposed form of $\varphi_Y$, then follows directly from Proposition~\ref{prop:entroA}; note that $S_Y$ is bounded and defined on all of $\kcal\pf$.
  Also, noting that the projector $Q_Z$ (which acts by $Q_Z f = \chi_Z \im f$) is already orthogonal, we have $\dcal_{ZY}^+ = i L^2_\rbb(Z \cap Y)$,
  $\dcal_{ZY}^- = \{f \in \kcal: \im f \restrict Z = 0\}$, and one sees that $\bar Q_Z$ multiplies with $\chi_{Z \cap Y}$ in $\es_Y$.
\optqed{}\end{proof}

As a special case, let us consider:
\begin{example}\label{ex:abelianinclusion}
Let $\kcal=L^2_\rbb(\rbb)$, $\kcal\pf=L^2_\cbb(\rbb)$ with subspaces $\lcal_t= L^2_\rbb(-\infty,t)$. Then $\lcal_t$ is a family of differential modular inclusions, and we have
  $S_t(f,f) = 2\int_{-\infty}^t (\im f(x))^2 dx$ and $T_f(s,t) = 2\int_{-\infty}^{\min(s,t)} (\im f(x))^2  dx$.
\end{example}
We note that in this example, the function $T_f$ is \emph{not} $C^1$. Clearly
\begin{equation}
   \frac{d^2S_t}{dt^2} = 2\frac{d}{dt}(\im f(t))^2 ,
\end{equation}
thus $S_t$ will not be convex in general. Note that $\partial^2 T_f/ \partial^2 s = 2\frac{d}{ds}(\im f(s))^2$ and $\partial^2 T_f/ \partial t^2=0$ for $s<t$, so that the estimate in Theorem~\ref{theorem:secondderivnonsmooth} is saturated.

\section{Conclusions}\label{sec:conclusion}

In this paper, we have analyzed the relative entropy between coherent excitations of a general quasifree state on a CCR algebra, with respect to the algebra generated by a generic closed subspace. We gave an explicit description of the relative entropy in terms of single-particle modular data.

Also, we analyzed the change of the relative entropy along an increasing one-parameter family of subspaces, establishing an abstract notion of bulk and boundary changes. Convexity of the entropy (or the QNEC) is in general replaced by certain lower estimates of the second derivative, where both bulk and boundary terms can contribute.

An instrumental part of this analysis was the notion of \emph{differential modular position} of two subspaces, meaning that the projector onto one subspace is orthogonal with respect to the scalar product induced by the entropy form of the other. While this is a nontrivial condition, we showed that it is fulfilled in a number of relevant examples; in particular it includes, but generalizes, the well-known notion of half-sided modular inclusions.

As the condition of differential modular position seems a fruitful tool, it would certainly be of interest to investigate whether it holds, possibly in a generalization, in a wider context than discussed here, both in other models of (linear) quantum fields and with respect to more general positions of subalgebras than treated in examples here. In particular, one would expect that it can be formulated employing notions of category theory, akin to the ``locally covariant'' setting of quantum field theory \cite{BFV03}. We hope to report on this issue elsewhere.

Also, it would be of interest to generalize our framework beyond CCR algebras to general inclusions of von Neumann algebras; in the context of quantum field theory, this would correspond to models beyond linear fields. Clearly, a challenge is the limited availability of concrete examples beyond CCR algebras, in particular with sufficiently explicit descriptions of the relative modular operator. Possibly integrable models in low space-time dimensions, which are (fully or partially) known to fulfill quantum inequalities \cite{BostelmannCadamuroFewster:ising,BostelmannCadamuro:oneparticle}, can provide some test cases in this respect.

\appendix

\section[Relative entropy on C* and von Neumann algebras]{Relative entropy on $C^\ast$ and von Neumann algebras}\label{app:entro}

The notion of relative entropy for states on general von Neumann algebras was first introduced by Araki \cite{Araki:relent1,Araki:relent2}. We recall its definition and relevant properties, following \cite{OP04}.

Let $\mcal$ be a von Neumann algebra on a Hilbert space $\hcal$, let $\omega = \hscalar{\xi}{\cdotarg \xi}$ a vector state (with some $\xi \in \hcal$), and $\varphi$ another state on $\mcal$. 
The relative entropy between $\omega$ and $\varphi$ (with respect to $\mcal$) is defined as
\begin{equation}\label{eq:entropydef}
   \relent{\omega}{\varphi}{\mcal} = \begin{cases}
                                            -\hscalar{\xi}{ \log \Delta(\varphi/\omega_{\xi}') \xi} \quad &\text{if }\xi \in \supp \varphi,
                                            \\
                                            \infty \quad & \text{otherwise}.
                                        \end{cases}
\end{equation}
Here $\omega_{\xi}'$ is the state $\hscalar{\xi}{\cdotarg \xi}$ restricted to $\mcal'$, and $\Delta(\varphi/\omega_{\xi}')$ denotes the spatial derivative.

In the case where both $\omega$ and $\varphi$ are given by cyclic and separating vectors $\xi,\psi$, the relative modular $\Delta_{\psi,\xi}$ is defined and we have (see \cite[Theorem 5.7]{OP04}, \cite[Proposition 4.1]{CLR:waveinfo})
\begin{equation}\label{eq:entropymodgrp}
   \relent{\omega}{\varphi}{\mcal} = i \frac{d}{dt} \hscalar{\xi}{ \Delta_{\psi,\xi}^{it} \xi} \Big\vert_{t=0}.
\end{equation}

If $\A$ is a $C^\ast$-algebra and $\omega,\varphi$ are positive linear functionals on $\A$, then  $\relent{\omega}{\varphi}{\A}$ is defined as
$$\relent{\omega}{\varphi}{\A}:=\relent{\bar\omega}{\bar\varphi}{\A^{**}},$$
where the right-hand-side denotes the relative entropy with respect to the universal enveloping von Neumann algebra $\A^{**}$ of $\A$ and $\bar\omega$,  $\bar\varphi$ are the normal extensions of $\omega$, $\varphi$ to $\A^{**}$.

Suppose there is a representation $\pi$ of $\A$, $\pi:\A\rightarrow B(\mathcal{H})$, where $\omega$ is a vector state, i.e., there is $\xi\in\mathcal{H}$ with
$$\tilde\omega(\pi(a)):=\hscalar{\xi}{\pi(a)\xi}=\omega(a),\hspace{3mm}a\in\A,$$
  and for which there is a normal state $\tilde{\varphi}$ on $\pi(\A)^{\prime\prime}$ such that
$$\varphi(a)=\tilde{\varphi}(\pi(a)),\hspace{3mm}a\in\A.$$
Then by applying Kosaki's formula for the relative entropy \cite[Theorem 5.11]{OP04}, we have
\begin{equation}
\relent{\omega}{\varphi}{\A}=\relent{\tilde\omega}{\tilde\varphi}{\pi(\A)^{\prime\prime}}.
\end{equation}

We recall the following properties of the relative entropy:

\begin{lemma}\label{lemma:entrounion}\cite[Corollary 5.12, iv]{OP04}
 Let $\mathcal{M}_i$ be an increasing net of von Neumann subalgebras of $\mathcal{M}$ with the property $(\cup_i \mathcal{M}_i)^{\prime\prime}=\mathcal{M}$. Then $S_{\mathcal{M}_i}(\omega_1\restriction {\mathcal{M}_i},\omega_2\restriction {\mathcal{M}_i})$ converges to $S_{\mathcal{M}}(\omega_1,\omega_2)$, where $\omega_1,\omega_2$ are two positive normal linear functionals on $\mathcal{M}$.
\end{lemma}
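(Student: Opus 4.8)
The plan is to deduce the statement from two standard facts about Araki's relative entropy: monotonicity under restriction to a von Neumann subalgebra, and Kosaki's variational formula. First I would use monotonicity (cf.\ Lemma~\ref{lem:restrentropy}): since $\mathcal{M}_i \subseteq \mathcal{M}_j$ whenever $i \leq j$ in the directed index set, the net $i \mapsto S_{\mathcal{M}_i}(\omega_1\restriction{\mathcal{M}_i},\omega_2\restriction{\mathcal{M}_i})$ is increasing, and every term is bounded above by $S_{\mathcal{M}}(\omega_1,\omega_2)$. Hence the net converges to its supremum $S_\infty$ with $S_\infty \leq S_{\mathcal{M}}(\omega_1,\omega_2)$, and the whole problem reduces to establishing the reverse inequality $S_\infty \geq S_{\mathcal{M}}(\omega_1,\omega_2)$.

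For that reverse inequality I would invoke Kosaki's formula \cite[Theorem~5.11]{OP04}, which writes $S_{\mathcal{M}}(\omega_1,\omega_2)$ as the supremum, over $n\in\mathbb{N}$ and over step functions $y$ on $(1/n,\infty)$ taking finitely many values in $\mathcal{M}$ and equal to $\mathbf{1}$ near $\infty$, of an integral expression whose integrand is affine in the numbers $\omega_1\big(y(t)^\ast y(t)\big)$ and $\omega_2\big((\mathbf{1}-y(t))(\mathbf{1}-y(t))^\ast\big)$, up to the additive constant $\omega_1(\mathbf{1})\log n$. Given $\varepsilon>0$, I would fix $n$ and a step function $y$, with values $y_1,\dots,y_k\in\mathcal{M}$, that comes within $\varepsilon$ of the supremum. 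Because $\bigcup_i\mathcal{M}_i$ is $\sigma$-weakly dense in $\mathcal{M}$, the Kaplansky density theorem supplies, for each $j$, a net $(y_j^{(i)})\subseteq\mathcal{M}_i$ with $\|y_j^{(i)}\|\leq\|y_j\|$ converging to $y_j$ in the $\sigma$-strong-$\ast$ topology; since $\omega_1,\omega_2$ are normal and multiplication is jointly $\sigma$-strong-$\ast$ continuous on norm-bounded sets, the finitely many scalars entering the Kosaki expression converge to their values for $y$. Choosing $i$ large enough, the step function with values $y_j^{(i)}$ lies in $\mathcal{M}_i$ and its Kosaki value is within $2\varepsilon$ of the chosen one, so $S_{\mathcal{M}_i}(\omega_1\restriction{\mathcal{M}_i},\omega_2\restriction{\mathcal{M}_i})\geq S_{\mathcal{M}}(\omega_1,\omega_2)-3\varepsilon$ for such $i$. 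Letting $\varepsilon\searrow 0$ yields $S_\infty\geq S_{\mathcal{M}}(\omega_1,\omega_2)$ and completes the argument.

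The step I expect to be the main obstacle is making the approximation above uniform over the $t$-domain of $y$, since the Kosaki expression is an integral rather than a finite sum. I would deal with this by fixing the truncation parameter $n$ \emph{before} choosing the approximants, so that only the finitely many operators $y_1,\dots,y_k$ occur and the integral effectively runs over a compact set, and by exploiting the norm control $\|y_j^{(i)}\|\leq\|y_j\|$ from Kaplansky's theorem to bound the integrand uniformly while upgrading $\sigma$-strong-$\ast$ convergence of operators to convergence of the evaluated normal functionals. The order of the three limits — supremum over $n$, passage $\varepsilon\searrow 0$, and the limit along the index $i$ — must be respected throughout. Since, however, the statement is essentially quoted from \cite[Corollary~5.12(iv)]{OP04}, one may also simply refer to that source.
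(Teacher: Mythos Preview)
The paper does not give its own proof of this lemma; it is simply quoted from \cite[Corollary~5.12(iv)]{OP04} and stated without argument in the appendix. Your proposed proof via monotonicity plus Kosaki's variational formula is correct and is in fact the standard argument behind that corollary in \cite{OP04}, so there is nothing to compare against and nothing to fix.
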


\begin{lemma}\label{lemma:additivity}\cite[follows from Corollary 5.20]{OP04}
Let $\mathcal{M}_1$ and $\mathcal{M}_2$ be von Neumann algebras, let $\omega_{1},\varphi_1$ be  normal states on $\mathcal{M}_1$ and let $\omega_{2},\varphi_2$ be  normal states on $\mathcal{M}_2$. Then
\begin{equation}
\relent{\omega_{1}\otimes\omega_2}{\varphi_1\otimes\varphi_2}{\mathcal{M}_1\otimes\mathcal{M}_2}=\relent{\omega_{1}}{\varphi_1}{\mathcal{M}_1}+\relent{\omega_{2}}{\varphi_2}{\mathcal{M}_2}.
%S_{\mathcal{M}_1\otimes\mathcal{M}_2}(\omega_{1}\otimes\omega_2,\phi_1\otimes\phi_2)=S_{\mathcal{M}_1}(\omega_{1},\phi_1)+S_{\mathcal{M}_2}(\omega_{2},\phi_2). 
\end{equation}
\end{lemma}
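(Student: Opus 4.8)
The plan is to reduce the identity to the multiplicativity of the relevant modular data under tensor products and then exploit the product rule for the derivative in \eqref{eq:entropymodgrp}. Since the relative entropy of normal states may be computed in any representation in which the first state is a vector state (by the Kosaki formula recalled above), I would first realize $\mathcal{M}_1$, $\mathcal{M}_2$ on Hilbert spaces $\mathcal{H}_1$, $\mathcal{H}_2$ so that $\omega_i = \hscalar{\xi_i}{\cdotarg\, \xi_i}$ for unit vectors $\xi_i$. Then $\mathcal{M}_1\otimes\mathcal{M}_2$ acts on $\mathcal{H}_1\otimes\mathcal{H}_2$, the state $\omega_1\otimes\omega_2$ is the vector state of $\xi_1\otimes\xi_2$, and the commutant factorizes, $(\mathcal{M}_1\otimes\mathcal{M}_2)' = \mathcal{M}_1'\otimes\mathcal{M}_2'$, so that the induced commutant state satisfies $(\omega_1\otimes\omega_2)'_{\xi_1\otimes\xi_2} = (\omega_1)'_{\xi_1}\otimes(\omega_2)'_{\xi_2}$.

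Next I would invoke the multiplicativity of Connes' spatial derivative under tensor products,
\begin{equation*}
\Delta\big(\varphi_1\otimes\varphi_2 \,/\, (\omega_1)'_{\xi_1}\otimes(\omega_2)'_{\xi_2}\big) = \Delta\big(\varphi_1/(\omega_1)'_{\xi_1}\big)\otimes\Delta\big(\varphi_2/(\omega_2)'_{\xi_2}\big).
\end{equation*}
On the natural common core this yields $\log\Delta = \log\Delta_1\otimes\idop + \idop\otimes\log\Delta_2$, and evaluating the associated quadratic form in $\xi_1\otimes\xi_2$ gives, using $\|\xi_i\|=1$,
\begin{equation*}
-\hscalar{\xi_1\otimes\xi_2}{\log\Delta\,(\xi_1\otimes\xi_2)} = -\hscalar{\xi_1}{\log\Delta_1\,\xi_1} - \hscalar{\xi_2}{\log\Delta_2\,\xi_2},
\end{equation*}
which by \eqref{eq:entropydef} is precisely the asserted additivity. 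Equivalently, in the faithful case one may argue through \eqref{eq:entropymodgrp}: the relative modular operators factorize as $\Delta_{\psi_1\otimes\psi_2,\,\xi_1\otimes\xi_2}^{it} = \Delta_{\psi_1,\xi_1}^{it}\otimes\Delta_{\psi_2,\xi_2}^{it}$, so that $t\mapsto\hscalar{\xi_1\otimes\xi_2}{\Delta^{it}(\xi_1\otimes\xi_2)}$ factors into a product of two scalar functions, each equal to $1$ at $t=0$; the product rule then converts the $t$-derivative at $0$ into the sum of the two individual derivatives.

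It remains to handle finiteness. The support projection of a tensor-product state factorizes, $s(\varphi_1\otimes\varphi_2) = s(\varphi_1)\otimes s(\varphi_2)$, and likewise for $\omega$; hence $\xi_1\otimes\xi_2\in\supp(\varphi_1\otimes\varphi_2)$ holds exactly when $\xi_i\in\supp\varphi_i$ for both $i$. Thus the left-hand side is finite if and only if both summands on the right are finite, and the two sides are simultaneously $+\infty$ otherwise, consistent with the claim; the computation above then applies on the finite part. I expect the main obstacle to be the factorization of the spatial derivative together with the domain bookkeeping needed to split the logarithm of the tensor product and to justify differentiating the unbounded one-parameter group termwise; once this is granted (it is the content of \cite[Cor.~5.20]{OP04}), additivity follows at once.
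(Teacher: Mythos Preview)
The paper gives no proof of this lemma at all; it merely records the statement and points to \cite[Cor.~5.20]{OP04}. Your sketch is the standard direct argument underlying that reference and is essentially correct: realize the $\omega_i$ as vector states (e.g.\ in standard form), use the factorization of the spatial derivative under tensor products, split the logarithm on a common core, and read off additivity from \eqref{eq:entropydef} or, equivalently, via the product rule in \eqref{eq:entropymodgrp}.

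One small point of precision in your finiteness paragraph: the support condition $\xi_i\in\operatorname{supp}\varphi_i$ only decides whether definition \eqref{eq:entropydef} applies; even when it does, each $\relent{\omega_i}{\varphi_i}{\mathcal{M}_i}$ may still be $+\infty$ because the spectral integral of $-\log\Delta_i$ against the vector measure of $\xi_i$ diverges. The correct way to cover the infinite cases is to note that the splitting $\log\Delta=\log\Delta_1\otimes\idop+\idop\otimes\log\Delta_2$ holds at the level of spectral measures, so the spectral integral against $\xi_1\otimes\xi_2$ is the sum of the two individual integrals as elements of $[0,+\infty]$; since both summands are nonnegative, the left-hand side is $+\infty$ exactly when at least one summand is. With that adjustment your argument is complete.
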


\begin{lemma}\cite[follows from Theorem 5.3]{OP04}\label{lem:restrentropy}
Let $\omega$ and $\varphi$ be two normal states on a von Neumann algebra $\mathcal{M}$, and denote by $\omega_1$ and $\varphi_1$ the restrictions of $\omega$ and $\varphi$ to a von Neumann subalgebra $\mathcal{M}_1 \subset\mathcal{M}$ respectively. Then $\relent{\omega_{1}}{\varphi_1}{\mathcal{M}_1} \leq \relent{\omega}{\varphi}{\mathcal{M}}$.
\end{lemma}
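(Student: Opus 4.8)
The plan is to obtain this inequality as a direct instance of the monotonicity (data-processing) property of Araki's relative entropy under unital normal completely positive maps, which is the content of \cite[Theorem~5.3]{OP04}. The first step is to observe that the inclusion $\iota\colon\mathcal{M}_1\hookrightarrow\mathcal{M}$ is a unital normal $\ast$-homomorphism, hence in particular a unital normal completely positive map. The second step is the trivial observation that $\omega_1=\omega\circ\iota$ and $\varphi_1=\varphi\circ\iota$ by the very definition of the restricted states. Applying the cited monotonicity theorem to the map $\iota$ and the normal states $\omega,\varphi$ on $\mathcal{M}$ then yields $\relent{\omega_1}{\varphi_1}{\mathcal{M}_1}=\relent{\omega\circ\iota}{\varphi\circ\iota}{\mathcal{M}_1}\leq\relent{\omega}{\varphi}{\mathcal{M}}$, which is the assertion.

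Alternatively, if one prefers a more self-contained route, one would invoke Kosaki's variational formula for the relative entropy \cite[Theorem~5.11]{OP04}: it expresses $\relent{\omega}{\varphi}{\mathcal{M}}$ as a supremum, over finite families of elements of $\mathcal{M}$ together with some real parameters, of quantities that depend on $\omega$ and $\varphi$ only through their values on those elements. Since $\mathcal{M}_1\subseteq\mathcal{M}$, the families admissible for $\relent{\omega_1}{\varphi_1}{\mathcal{M}_1}$ form a subset of those admissible for $\relent{\omega}{\varphi}{\mathcal{M}}$, and on such families the relevant quantity for $\omega_1,\varphi_1$ coincides with the one for $\omega,\varphi$ because $\omega,\varphi$ restrict to $\omega_1,\varphi_1$ on $\mathcal{M}_1$; monotonicity of the supremum then gives the inequality.

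I expect there to be no genuine obstacle at the level of this lemma: the entire mathematical substance lies in the cited monotonicity theorem (equivalently, Lindblad--Uhlmann monotonicity, the same ingredient that underlies strong subadditivity of the entropy), which is a deep result we take as input rather than reprove. Granting it, the argument is purely formal --- one only has to check that the inclusion is a legitimate ``channel'' and that the restricted states are its pullbacks, which is immediate.
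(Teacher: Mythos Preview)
Your proposal is correct and matches the paper's approach exactly: the paper gives no proof of its own, merely recording that the inequality follows from \cite[Theorem~5.3]{OP04}, which is precisely the monotonicity under unital normal completely positive maps that you invoke via the inclusion $\iota$. Your alternative via Kosaki's variational formula is also valid and in the same spirit, but the paper does not spell out either route.
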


\section*{Acknowledgments}

We would like to thank Christopher J.\ Fewster and Ko Sanders for helpful comments on the draft.
D.C. and S.D. are supported by the Deutsche Forschungsgemeinschaft (DFG) within the Emmy Noether grant CA1850/1-1. H.B.~would like to thank the Institute for Theoretical Physics at the University of Leipzig for hospitality.

% 
% \bibliographystyle{alpha}
% \bibliography{qei}
% 
% \end{document}

\bibliographystyle{alpha}
\bibliography{qei}

\newcommand{\etalchar}[1]{$^{#1}$}
\begin{thebibliography}{BFK{\etalchar{+}}16}

\bibitem[Ara76]{Araki:relent1}
Huzihiro Araki.
\newblock Relative entropy of states of von {Neumann} algebras.
\newblock {\em Publ. RIMS Kyoto}, 11(3):809--833, 1976.

\bibitem[Ara77]{Araki:relent2}
Huzihiro Araki.
\newblock Relative entropy for states of von {Neumann} algebras {II}.
\newblock {\em Publ. RIMS Kyoto}, 13(1):173--192, 1977.

\bibitem[BC16]{BostelmannCadamuro:oneparticle}
Henning Bostelmann and Daniela Cadamuro.
\newblock Negative energy densities in integrable quantum field theories at
  one-particle level.
\newblock {\em Phys. Rev. D}, 93:065001, 2016.

\bibitem[BCF13]{BostelmannCadamuroFewster:ising}
Henning Bostelmann, Daniela Cadamuro, and Christopher~J. Fewster.
\newblock Quantum energy inequality for the massive {Ising} model.
\newblock {\em Phys. Rev. D}, 88:025019, Jul 2013.

\bibitem[BDF87]{BDF:universal_structure}
Detlev Buchholz, Claudio D'Antoni, and Klaus Fredenhagen.
\newblock The universal structure of local algebras.
\newblock {\em Commun. Math. Phys.}, 111:123--135, 1987.

\bibitem[BFK{\etalchar{+}}16]{Bousso:proofQNEC}
Raphael Bousso, Zachary Fisher, Jason Koeller, Stefan Leichenauer, and Aron~C.
  Wall.
\newblock Proof of the quantum null energy condition.
\newblock {\em Phys. Rev. D}, 93:024017, Jan 2016.

\bibitem[BFS16]{Berta:smoothentropy}
Mario Berta, Fabian Furrer, and Volkher~B. Scholz.
\newblock The smooth entropy formalism for von {Neumann} algebras.
\newblock {\em J. Math. Phys.}, 57(1):015213, 2016.

\bibitem[BFV03]{BFV03}
Romeo Brunetti, Klaus Fredenhagen, and Rainer Verch.
\newblock The generally covariant locality principle -- a new paradigm for
  local quantum field theory.
\newblock {\em Commun. Math. Phys.}, 237(1):31--68, Jun 2003.

\bibitem[Bor00]{Bor:ttt}
Hans-J{\"u}rgen Borchers.
\newblock On revolutionizing quantum field theory with {T}omita's modular
  theory.
\newblock {\em J.~Math. Phys.}, 41:3604--3673, June 2000.
\newblock Extended version availabe at
  \url{http://www.mat.univie.ac.at/~esiprpr/esi773.pdf}.

\bibitem[BR81]{BraRob:qsm2}
Ola Bratteli and Derek~W. Robinson.
\newblock {\em Operator Algebras and Quantum Statistical Mechanics}, volume~II.
\newblock Springer, New York, 1981.

\bibitem[BY99]{BorYng:thermal}
Hans-J{\"u}rgen Borchers and Jakob Yngvason.
\newblock Modular groups of quantum fields in thermal states.
\newblock {\em J. Math. Phys.}, 40(2):601--624, 1999.

\bibitem[CF18]{CeyhanFaulkner:recovering}
Fikret Ceyhan and Thomas Faulkner.
\newblock Recovering the {QNEC} from the {ANEC}.
\newblock Preprint arXiv:1812.04683, 2018.

\bibitem[CGP19]{Casini:relent}
Horacio Casini, Sergio Grillo, and Diego Pontello.
\newblock Relative entropy for coherent states from {Araki} formula.
\newblock {\em Phys. Rev. D}, 99:125020, Jun 2019.

\bibitem[CLR19]{CLR:waveinfo}
Fabio Ciolli, Roberto Longo, and Giuseppe Ruzzi.
\newblock The information in a wave.
\newblock {\em Commun. Math. Phys.}, Oct 2019.

\bibitem[DG13]{DerezinskiGerard:quantization}
Jan Derezi{\'n}ski and Christian G{\'e}rard.
\newblock {\em Mathematics of Quantization and Quantum Fields}.
\newblock Cambridge Monographs on Mathematical Physics. Cambridge University
  Press, 2013.

\bibitem[EN00]{EngelNagel:semigroups}
Klaus-Jochen Engel and Rainer Nagel.
\newblock {\em One-Parameter Semigroups for Linear Evolution Equations}, volume
  194 of {\em Graduate Texts in Mathematics}.
\newblock Springer, New York, 2000.

\bibitem[Fol94]{FollHarm}
Gerald~B. Folland.
\newblock {\em A course in abstract harmonic analysis}.
\newblock CRC Press, 1994.

\bibitem[Hal69]{Halmos:subspaces}
Paul~R. Halmos.
\newblock Two subspaces.
\newblock {\em Trans. Am. Math. Soc.}, 144:381--389, 1969.

\bibitem[Hol20]{hollands2020relative}
Stefan Hollands.
\newblock Relative entropy for coherent states in chiral {CFT}.
\newblock {\em Lett. Math. Phys.}, 110(4):713--733, 2020.

\bibitem[KW91]{KW91}
Bernard~S. Kay and Robert~M. Wald.
\newblock Theorems on the uniqueness and thermal properties of stationary,
  nonsingular, quasifree states on spacetimes with a bifurcate {Killing}
  horizon.
\newblock {\em Physics Reports}, 207(2):49--136, 1991.

\bibitem[Lon07]{LonNotes}
Roberto Longo.
\newblock Real {Hilbert} subspaces, modular theory,
  {$\mathrm{SL}(2,\mathbb{R})$} and {CFT}.
\newblock In Ken Dykema and Florin R{\u a}dulescu, editors, {\em Von Neumann
  Algebras in Sibiu}. Theta Foundation, 2007.
\newblock See also \url{http://www.mat.uniroma2.it/~longo/lecture-notes.html}.

\bibitem[Lon20]{Longo:entropyDist}
Roberto Longo.
\newblock Entropy distribution of localised states.
\newblock {\em Commun. Math. Phys.}, 373(2):473--505, Jan 2020.

\bibitem[LX18]{LongoXu:relentCFT}
Roberto Longo and Feng Xu.
\newblock Relative entropy in {CFT}.
\newblock {\em Adv. Math.}, 337:139--170, 2018.

\bibitem[MV68]{MV:quasifree}
J.~Manuceau and A.~Verbeure.
\newblock Quasi-free states of the {C.C.R.}-algebra and {Bogoliubov}
  transformations.
\newblock {\em Commun. Math. Phys.}, 9(4):293--302, 1968.

\bibitem[NT93]{NarnhoferThirring:KMS}
Heide Narnhofer and W.~Thirring.
\newblock {KMS} states for the {Weyl} algebra.
\newblock {\em Lett. Math. Phys.}, 27(2):133--142, Feb 1993.

\bibitem[OP04]{OP04}
Masanori Ohya and D{\'e}nes Petz.
\newblock {\em Quantum Entropy and Its Use}.
\newblock Springer, 2004.

\bibitem[Pan20]{Panebianco:relent}
Lorenzo Panebianco.
\newblock A formula for the relative entropy in chiral {CFT}.
\newblock {\em Lett. Math. Phys.}, 110(9):2363--2381, June 2020.

\bibitem[Pan21]{Panebianco:loop}
Lorenzo Panebianco.
\newblock Loop groups and {QNEC}.
\newblock {\em Commun. Math. Phys.}, 387:397--426, 2021.

\bibitem[Pet90]{Petz:ccr}
D{\'e}nes Petz.
\newblock {\em An Invitation to the Algebra of Canonical Commutation
  Relations}, volume~2 of {\em Leuven notes in mathematical and theoretical
  physics: Mathematical physics}.
\newblock Leuven University Press, 1990.

\bibitem[Rad96]{Radzikowski:microlocal}
Marek~J. Radzikowski.
\newblock Micro-local approach to the {Hadamard} condition in quantum field
  theory on curved space-time.
\newblock {\em Commun. Math. Phys.}, 179(3):529--553, 1996.

\bibitem[RST70]{RST:KMS}
F.~Rocca, M.~Sirugue, and D.~Testard.
\newblock On a class of equilibrium states under the {Kubo-Martin-Schwinger}
  condition. {II.} {Bosons}.
\newblock {\em Commun. Math. Phys.}, 19(2):119--141, 1970.

\bibitem[RvD77]{RieffelvanDaele:boundedttt}
Marc~A. Rieffel and Alfons van Daele.
\newblock A bounded operator approach to {Tomita}-{Takesaki} theory.
\newblock {\em Pacific J. Math.}, 69(1):187--221, 1977.

\bibitem[Ver97]{Verch:adjoint}
Rainer Verch.
\newblock Continuity of symplectically adjoint maps and the algebraic structure
  of {Hadamard} vacuum representations for quantum fields on curved spacetime.
\newblock {\em Rev. Math. Phys.}, 09(05):635--674, 1997.

\bibitem[VS11]{VS:malliavin}
G.~F. Vincent-Smith.
\newblock {C*}-algebras and the {Malliavin} calculus.
\newblock {\em Proc. London Math. Soc.}, 102(3):395--422, 2011.

\bibitem[Wie93]{wiesbrock1993}
Hans-Werner Wiesbrock.
\newblock Conformal quantum field theory and half-sided modular inclusions of
  von {Neumann} algebras.
\newblock {\em Commun. Math. Phys.}, 158(3):537--543, 1993.

\bibitem[Wor72]{Woronowicz:purification}
S.~L. Woronowicz.
\newblock On the purification of factor states.
\newblock {\em Commun. Math. Phys.}, 28(3):221--235, 1972.

\end{thebibliography}

\end{document}